  \providecommand\BibTeX{{%
    \normalfont B\kern-0.5em{\scshape i\kern-0.25em b}\kern-0.8em\TeX}}}
\newtheorem{definition}{Definition}
\newtheorem*{definition*}{Definition}
\newtheorem{lemma}{Lemma}
\newtheorem{theorem}{Theorem}
\newtheorem{claim}{Claim}
\newtheorem*{claim*}{Claim}
\newtheorem*{corollary*}{Corollary}
\theoremstyle{definition}
\newcommand{\defeq}{\vcentcolon=}
\newcommand{\bigosym}{\mathcal{O}}
\newcommand{\vol}[1]{\operatorname{vol}(#1)}
\newcommand{\norm}[1]{\left\lVert#1\right\rVert}
\DeclareMathOperator{\EX}{\mathbb{E}}%
\DeclareRobustCommand{\bbone}{\text{\usefont{U}{bbold}{m}{n}1}} %
\newcommand{\nc}{\ensuremath{\theta}}
\newcommand{\cutParent}{\operatorname{cutParent}}
\newcommand{\MaxDegree}{\ensuremath{\Delta}}
\newcommand{\Algo}[1]{$\textsf{XCut}_\textsf{#1}$}
\newcommand{\Metis}{\textsf{METIS}}
\newcommand{\Kahip}{\textsf{KaHiP}}
\newcommand{\Graclus}{\textsf{Graclus}}
\newcommand{\ea}{et al.}
\newcommand{\CitNet}{\textsf{CN}}
\newcommand{\SocNet}{\textsf{SN}}
\newcommand{\TriMix}{\textsf{TM}}
\newcommand{\Road}{\textsf{RN}}
\newcommand{\CircSim}{\textsf{CS}}
\newcommand{\Web}{\textsf{WB}}
\newcommand{\Infra}{\textsf{IF}}
\newcommand{\Opti}{\textsf{OP}}
\newcommand{\Email}{\textsf{EM}}
\newcommand{\NumSim}{\textsf{NS}}
\newcommand{\Clus}{\textsf{CL}}
\newcommand{\FinEl}{\textsf{FE}}
\newcommand{\ComFlu}{\textsf{CF}}
\newcommand{\DupMat}{\textsf{DM}}
\newcommand{\Bip}{\textsf{BP}}
\newcommand{\RandG}{\textsf{RD}}
\newcommand{\USCens}{\textsf{US}}
\newcommand{\erclogowrapped}[1]{%
\setlength\intextsep{0pt}%
\begin{wrapfigure}[3]{r}{#1*\real{1.1}}%
\includegraphics[width=#1]{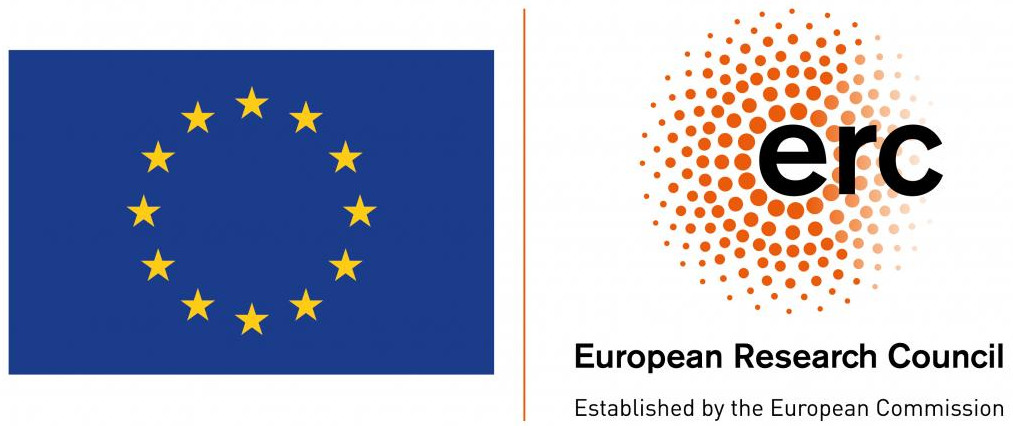}%
\end{wrapfigure}%
}
\begin{document}

\title{Expander Hierarchies for Normalized Cuts on Graphs}
\thanks{Full version of:
Kathrin Hanauer, Monika Henzinger, Robin Münk, Harald Räcke, and Maximilian Vötsch. 2024. Expander Hierarchies for Normalized Cuts on Graphs.  In \emph{Proceedings of the 30th ACM SIGKDD Conference on Knowledge Discovery and Data Mining (KDD ’24), August 25–29, 2024, Barcelona, Spain. ACM, New York, NY, USA}, 12 pages. \url{https://doi.org/10.1145/3637528.3671978}~\cite{kddofficial}}

\author{Kathrin Hanauer}
\email{kathrin.hanauer@univie.ac.at}
\orcid{0000-0002-5945-837X}
\affiliation{%
\institution{University of Vienna}
\department{Faculty of Computer Science}
\city{Vienna}
\country{Austria}
}

\author{Monika Henzinger}
\email{monika.henzinger@ist.ac.at}
\orcid{0000-0002-5008-6530}
\affiliation{%
\institution{Institute of Science and Technology Austria (ISTA)}
\city{Klosterneuburg}
\country{Austria}
}

\author{Robin Münk}
\email{robin.muenk@tum.de}
\orcid{0009-0000-5083-7704} %
\affiliation{%
\institution{Technical University of Munich}
\city{Munich}
\country{Germany}
}
\author{Harald Räcke}
\email{raecke@in.tum.de}
\orcid{0000-0001-8797-717X}
\affiliation{%
\institution{Technical University of Munich}
\city{Munich}
\country{Germany}
}

\author{Maximilian Vötsch}
\email{maximilian.voetsch@univie.ac.at}
\orcid{0009-0006-6793-6745}
\affiliation{%
\institution{University of Vienna}
\department{Faculty of Computer Science}
\department{UniVie Doctoral School Computer~Science~DoCS}
\city{Vienna}
\country{Austria}
}

\begin{abstract}
Expander decompositions of graphs have significantly advanced the understanding of many classical graph problems and led to numerous fundamental theoretical results.
However, their adoption in practice has been hindered
due to their inherent intricacies and large hidden factors in their asymptotic running times.
Here, we introduce %
the first
practically efficient algorithm for computing expander decompositions and their hierarchies
and demonstrate its effectiveness and utility by incorporating it as the core component in 
a novel solver for the normalized cut graph clustering objective. %

Our extensive experiments on a variety of large graphs show that our expander-based algorithm 
outperforms state-of-the-art solvers for normalized cut
with respect to solution quality by a large margin on a variety of graph classes such as citation, e-mail, and social networks or web graphs
while remaining competitive in running time.

\end{abstract}

\begin{CCSXML}
<ccs2012>
   <concept>
       <concept_id>10002951.10003227.10003351.10003444</concept_id>
       <concept_desc>Information systems~Clustering</concept_desc>
       <concept_significance>500</concept_significance>
       </concept>
   <concept>
       <concept_id>10003752.10003809.10003635.10010036</concept_id>
       <concept_desc>Theory of computation~Sparsification and spanners</concept_desc>
       <concept_significance>500</concept_significance>
       </concept>
 </ccs2012>
\end{CCSXML}

\ccsdesc[500]{Information systems~Clustering}
\ccsdesc[500]{Theory of computation~Sparsification and spanners}

\keywords{%
normalized cut, %
expander decomposition, %
expander hierarchy, %
graph partitioning,
graph clustering
}

\received{8 February 2024}
\received[accepted]{17 May 2024}

\maketitle

\def\nvol{\operatorname{vol}}
\def\border{\operatorname{border}}
\def\ncut{\operatorname{ncut}}

\section{Introduction}
In recent years, expander decompositions and expander hierarchies
have emerged as fundamental tools within the theory community
for developing fast approximation algorithms and fast dynamic 
algorithms for such diverse problems as, e.g.,
routing~\cite{haeupler2022hop}, connectivity~\cite{expanderHierarchy}, 
(approximate) maximum flow~\cite{DBLP:conf/soda/KelnerLOS14, DBLP:conf/focs/ChenKLPGS22},
or triangle enumeration~\cite{DBLP:journals/jacm/ChangPSZ21}.
Informally, an expander is a well-connected graph. Its \emph{expansion} or
\emph{conductance} $\phi = \min_{S \subseteq V} \frac{\border(S)}{\min(\nvol(S), \nvol(\bar{S}))}$ is the minimum ratio 
between the number of edges leaving any subset of vertices and the number of
edges incident to vertices of the subset. A large value of $\phi$ (close to $1$)
indicates a graph in which no subset of vertices can be easily disconnected from the rest of the graph. %

An \emph{expander decomposition} of a graph partitions the vertices such that the  subgraph induced by each part is an expander and the number of edges between different components of the partition is low.\footnote{This can be seen as a special form of a graph clustering.} Many results over the years~\cite{spielman2004nearly,wulff2017fully,nanongkai2017dynamic,nanongkai2017dynamicB,saranurakwang19} have demonstrated that such a decomposition not only exists for every graph, but can be computed in near-linear time.
At a high level, the success of expander decomposition-based algorithms is due
to the fact that many problems are \enquote{easy} on expanders: One first
identifies regions of a graph where a problem is easy to solve (the expanders),
solves the problem (or a suitable sub-problem) within each region, and then 
combines the individual solutions to obtain a solution for the overall problem.

\emph{Expander hierarchies}~\cite{expanderHierarchy} apply expander
decompositions in a hierarchical manner. This is done by computing an expander decomposition of the graph and then contracting the individual expanders into single vertices repeatedly. This process is repeated until the entire graph has been contracted to a single vertex.
This recursive decomposition has a corresponding decomposition tree $T$ (the
so-called \emph{(flow) sparsifier}) where the root corresponds to the entire graph, the
leaves correspond to vertices of the original graph, and inner vertices
correspond to the expanders found during the decomposition procedure.
The
sparsifier approximately preserves the cut-flow structure of the original
graph in a rigorous sense. This relationship has been exploited in 
numerous applications in static and dynamic graph
problems~\cite{hua2023maintaining,li2021deterministic,haeupler2022hop} to
obtain fast algorithms with provable guarantees.

While many algorithms for expander decomposition that offer strong asymptotic bounds on the running time have been
suggested~\cite{spielman2004nearly,wulff2017fully,nanongkai2017dynamic,nanongkai2017dynamicB,saranurakwang19},
\footnote{Including the near-linear-time algorithm by Saranurak and
Wang~\cite{saranurakwang19}, which finds a decomposition where each component
has expansion at least $\phi$ in time $\bigosym(m \phi^{-1} \log^4 m)$.} practical algorithms based on expander decomposition have not seen success thus far. This is due to these
algorithms requiring one to solve many maximum flow problems, which means these algorithms are prohibitively slow in practice for graphs with many edges.
Arvestad~\cite{arvestad2022near}, e.g., reports that decomposing a graph with
approximately $10^5$ edges can take almost \SI{5}{\minute} for various values of $\phi$ in their implementation of~\cite{saranurakwang19}. %

In practice, the \emph{multilevel graph partitioning} framework has been the de-facto approach for computing high quality solutions for cut problems on graphs. This framework consists of a \emph{coarsening} step, in which a smaller representation of the graph is computed, a \emph{solving} step, where a solution is computed on the coarse graph and a \emph{refinement} step, during which the solution is improved using local methods during the graph uncoarsening process.
Multilevel graph partitioning has been successfully applied for computing balanced cuts~\cite{metis,kahip,osipov2010n,auer2012graph} and normalized cuts in graphs~\cite{graclus}. 

We note that algorithms based on the expander hierarchy approach may also be
regarded as a variant of multilevel graph partitioning. In this approach, a
graph is first coarsened by recursive contractions to obtain a smaller graph 
that reflects the same basic cluster structure as the input graph. Then an initial
partition is computed on the sparsifier and, afterwards, in a series of
refinement steps the solution is mapped to the input graph while improving it locally as we undo the coarsening.
Despite the
theoretical guarantees brought forward by expander decompositions and expander
hierarchies, the latter have not yet led to similar breakthroughs in the field
of fast, high-quality experimental algorithms. The reason is that the expander
decomposition required at each coarsening step becomes a significant
performance bottleneck in practice. In contrast, multilevel graph partitioners
like \Metis{}~\cite{metis} and \Kahip{}~\cite{kahip} use a fast matching
approach here.

In this work we mitigate the
computational bottleneck by introducing a novel, practically efficient
random-walk-based algorithm for expander decomposition. Based on this, we give
the first implementation of the expander hierarchy and, thus, an algorithm to
compute tree flow sparsifiers, allowing us to solve various cut problems on
graphs effectively. Exemplarily, we show that our approach is eminently
suitable to compute normalized $k$-cuts on graphs, where the goal is to
partition the vertex set into $k$ clusters such that the sum over the number of
edges leaving each cluster, normalized by the cluster's volume, is
minimized. %
Normalized cut is a popular graph clustering objective and particularly able to
capture imbalanced clusterings.
Its manifold applications include, e.g., community
detection~\cite{DBLP:conf/www/LiZHRCH20} and
mining~\cite{DBLP:conf/kdd/CaiSHYH05}, topic
reconstruction~\cite{DBLP:journals/jmma/Bichot10}, story
segmentation~\cite{DBLP:conf/airs/ZhangXFZ09},
bioinformatics~\cite{graclus,DBLP:conf/ismb/XingK01}, tumor
localization~\cite{sahoo2023brain}, and image
segmentation~\cite{DBLP:journals/pami/ShiM00,tokencut}. %
It is closely related to spectral clustering, which uses spectral properties of
eigenvalues and eigenvectors of the graph's
Laplacian. %
However, the spectral approach suffers both from very large running times and
memory requirements to compute and store the eigenvectors, and thus does not
scale well~\cite{graclus}. This problem was addressed by Dhillon
\ea{}~\cite{graclus} as well as Zhao \ea{}~\cite{zhao2018nearly}, who presented
algorithms for normalized cut that either use spectral methods only after
coarsening~\cite{graclus} or apply them on a spectrally sparsified
graph~\cite{zhao2018nearly}.

\paragraph{Contributions.}
We present the first practically efficient algorithm for expander decomposition and the first implementation to compute an expander hierarchy.
Our approach is based on random walks and is justified by rigorous theoretic and empirical analysis.
We report on a comprehensive experimental study on normalized cut solvers, comprising \num{50}~medium-sized to very large graphs of various types, where we compare our  expander-based algorithm, \Algo{}, 
to \Graclus{} by Dhillon \ea{}~\cite{graclus}, the approach by Zhao \ea{}~\cite{zhao2018nearly}, as well as the state-of-the-art graph partitioners \Metis{} and \Kahip{}, which do not specifically optimize towards the normalized cut objective, but are fast and in practice often used for this task.
The experiments show that our algorithm produces superior normalized cuts on graph classes such as citation, e-mail, and social networks, web graphs, and generally scale-free graphs, and is only slightly worse on others.
On average, it is still distinctly the best across all graphs and values of $k$.

If only a single value of $k$ is desired, its running time is on average only \num{3}~times slower than the runner-up, \Graclus{}, and never exceeded \SI{18}{\minute}.
A notable advantage of \Algo{} is that it can quickly compute solutions for multiple values of $k$ once a sparsifier is computed, which can be faster than running \Graclus{} multiple times.

\section{Related Work}\label{sec:related}
Our work is motivated by recent theoretical results~\cite{haeupler2022hop,expanderHierarchy,DBLP:conf/soda/KelnerLOS14,DBLP:conf/focs/ChenKLPGS22,DBLP:journals/jacm/ChangPSZ21}
building on expander
decompositions~\cite{spielman2004nearly,wulff2017fully,nanongkai2017dynamic,nanongkai2017dynamicB,saranurakwang19} and the expander hierarchy~\cite{expanderHierarchy} and inspired by non-spectral
approaches~\cite{graclus} to tackle the normalized cut problem.

Mohar~\cite{mohar1989isoperimetric} showed that the computation of the 
so-called isoperimetric number or conductance of a graph
(see \autoref{sec:pre})
is NP-hard, which implies the hardness of the normalized $k$-cut problem for $k \geq 2$.
Normalized cut remains NP-hard on weighted trees~\cite{daneshgar2012complexity}.

A number of tools that have been used to solve normalized cut use spectral methods~\cite{DBLP:conf/nips/NgJW01,DBLP:conf/iccv/YuS03,nvidia,DBLP:journals/siamcomp/Peng0Z17,DBLP:conf/ijcai/0004NHY17,DBLP:journals/pami/NieLWWL24}.
This usually requires to compute $k$ eigenvectors of the Laplacian matrix, which was shown to scale badly in practice~\cite{graclus,zhao2018nearly}.
Afterwards, an additional discretization step is necessary to obtain the clustering.

Dhillon~\ea{}~\cite{graclus} therefore suggest an algorithm called \Graclus{},
which is based on the multilevel graph partitioning framework. It applies the
same coarsening steps as \Metis{}, but with a modified matching procedure. The
coarsened graph can then be partitioned using different approaches, including a
spectral one.
In the refinement step, \Graclus{} uses a kernel $k$-means-based local search algorithm for improving the normalized cut objective value.
The authors evaluate their algorithm experimentally against \Metis{} as well as a spectral
clustering algorithm~\cite{DBLP:conf/iccv/YuS03}.
They show that it outperforms the spectral method w.r.t.\ normalized cut value, running time, and memory usage. 
It also produces better results than \Metis{} and is comparable w.r.t.\ running time.

Zhao~\ea{}~\cite{zhao2018nearly} employ a joint spectral sparsification and coarsening scheme to produce a smaller representation of the graph that preserves the eigenvectors of the Laplacian in near-linear time $\tilde{O}(m)$.
Afterwards, a normalized cut is computed on the sparsifier using spectral clustering.
Their sparsification scheme obtains a significant reduction in the number of edges and nodes, which makes it feasible to apply the spectral method on the reduced graphs, without the quadratic term in the running time growing too large.
The authors again perform an experimental comparison with \Metis{} and observe that their algorithm overall outperforms \Metis{} w.r.t.\ the normalized cut value, while being slightly slower.

\Metis{}~\cite{metis}, \Kahip{}~\cite{kahip}, and many other graph partitioning tools~\cite{jostle,chaco} do not solve the normalized cut problem and instead aim to find good solutions to a balanced graph partitioning problem, where the vertex set is to be partitioned into $k$ sets of (roughly) equal size, while minimizing the number of edges cut. 
\textsf{CHACO}~\cite{chaco} implements a spectral graph partitioning approach, but the number of clusters is limited to at most $k = \num{8}$.

Nie~\ea{}~\cite{DBLP:journals/pami/NieLWWL24} recently presented a spectral normalized cut solver based on the coordinate descent method along with several speedup strategies.
They evaluate their algorithm against two other spectral methods~\cite{DBLP:conf/nips/NgJW01,DBLP:conf/ijcai/0004NHY17} on a number of medium-sized data sets and show that it consistently computes the best solution and is the fastest.
A notable difference is that $k$ is not an input parameter.

\section{Preliminaries}\label{sec:pre}

For an undirected graph $G=(V,E)$ we use $d_v$ to denote
the degree of vertex $v\in V$, $\smash{\vec{d}}$ to denote the \emph{degree
  vector}, i.e., the vector of vertex degrees, and $D=I\smash{\vec{d}}$ to denote the
corresponding \emph{degree} matrix. $\Delta$ is used to denote the maximum
degree of a vertex in $G$. For a subset $S$ of vertices we define the
\emph{volume} $\nvol(S)$ as the sum of vertex degrees of vertices in $S$.
Its \emph{border} $\border(S)$ (or \emph{capacity}) is defined as $|E(S,\bar{S})|$, where
$E(X,Y)=\{\{x,y\}\in E\mid x\in X,y\in Y\}$ is the set of edges between subsets
$X,Y\subseteq V$, and $\bar{S}=V\setminus S$.

A $\emph{$k$-cut}$ is a partition $\mathcal{P}$ of the vertex set into $k$
non-empty parts. Given such a partition $\mathcal{P}=(S_1,\dots,S_k)$, its
\emph{normalized cut value} $\nc$ is defined as
\begin{equation*}
\nc(\mathcal{P}) = \sum_{i=1}^{\smash{k}} \frac{\border(S_i)}{\nvol(S_i)}\enspace.
\end{equation*}

\noindent
For $k=2$ we will usually specify a $2$-cut (or just cut) by its
smaller side, i.e., we will refer to the cut $(S,\bar{S})$ by just $S$, where
$\nvol(S)\le\nvol(\bar{S})$. The \emph{conductance} of a ($2$-)cut is defined
as $\Phi(S)=\border(S)/\nvol(S)$ and the \emph{conductance of a graph}
$G=(V,E)$ is $\Phi(G)=\min_{\text{cuts $S$}}\Phi(S)$.\footnote{If there are no
  cuts in $G$ (i.e., $|V|=1$) we define its conductance to be 1.} The problem
of finding a $2$-cut with minimum conductance is closely related to the problem
of finding a $2$-cut with minimum normalized cut value, because
for any $2$-cut, $\Phi(S) \le \nc(S) \le 2\Phi(S)$. The normalized $k$-cut
objective can be seen as 
a generalization of conductance to $k$-cuts.

To properly describe our random walks we need some further notation. For a cut
$S$ we call $\nvol(S)/\nvol(V)$ the \emph{balance} of the cut $S$ and denote it
with $b(S)$. For a subset $A\subseteq V$ we use $G\{A\}$ to denote the subgraph
induced by $A$ with self-loops added so that the vertex degrees do not change
(a self-loop counts 1 to the degree of a vertex). This means the degree of a
vertex $a\in A$ is the same in $G$ as in $G\{A\}$. Whenever the graph in
question is not clear from the context we use subscripts to indicate the graph,
i.e., we write $\nvol_G(S)$, $\Phi_G(S)$, $E_G(X,Y)$, etc. To avoid
  notational clutter we write $\nvol_A(S)$, $\Phi_A(S)$, $E_A(X,Y)$ when we are
  referring to the graph $G\{A\}$.

We say a graph $G=(V,E)$ is a \emph{$\phi$-expander} if $\Phi(G) \ge \phi$, and we call a
vertex partition $(V_1,\dots,V_\ell)$ a \emph{$\phi$-expander decomposition} if $\Phi(G\{V_i\}) \ge
\phi$ for all $i$.

\section{Expander Decomposition using Random Walks}\label{sec:expdec}
\def\nvol{\operatorname{vol}}

\def\tildeO{\tilde{\mathcal{O}}}

In this section we describe our random-walk-based approach for obtaining
expander decompositions and present its theoretical guarantees. 
The complete proofs
 The complete proofs for these guarantees can be found in the appendix. %

A natural approach for computing expander decompositions is to find a low
conductance cut to split the graph into two parts and then to recurse on both
sides. If no such cut exists, we have certified the (sub-)graph to be an
expander.
In the end, the whole procedure terminates if each subgraph is an expander,
i.e., it terminates with an expander decomposition. Saranurak and
Wang~\cite{saranurakwang19} used this general approach to obtain an expander
decomposition that runs in time $\mathcal{O}(m\log^4{m}/\phi)$ and only cuts $\mathcal{O}(\phi m\log^3{m})$ edges. While their flow-based techniques
give very good theoretical guarantees, the hidden constants do not seem to allow
for good practical performance (see e.g.~\cite{arvestad2022near}).

In this work we base the cut procedure of the expander decomposition on random
walks. As a consequence, we can only guarantee that our decomposition cuts at
most $\tildeO( \sqrt{\phi}\,m )$ edges\footnote{$\tildeO$ hides polylogarithmic factors.}%
, since we are limited by the intrinsic
Cheeger barrier of spectral methods. However, random walks have a very simple
structure, which leads to a simple algorithm with good practical performance.
The weaker dependency on $\phi$ ($\sqrt{\phi}$ instead of $\phi$) is not crucial
for our graph partitioning application because when using the expander
decomposition to build an expander hierarchy one chooses $\phi$ as large as
possible anyway.

\begin{theorem}[Expander Decomposition]\label{theorem:expander-decomposition}

Given a graph $G$ with $m$ edges and a parameter $\phi$, there is a random-walk-based algorithm that with high probability finds a $\phi$-expander
decomposition of $G$ and cuts at most
$\bigosym (\sqrt{\phi}\, m \log^{5/2}{m} )$ edges. The running time is
$\bigosym \big( (m\log^{7/2}{n})/\phi^{5/2} \big)$.
\end{theorem}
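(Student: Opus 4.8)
I would use the classical recursive \emph{cut-or-certify} scheme, implementing the cut step via random walks. The heart of it is a routine $\textsc{CutOrCertify}(H,\phi)$ which, for an induced subgraph $H=G\{A\}$ (self-loops added so that degrees are preserved) with $m_H$ edges, simulates a lazy random walk for $T=\tildeO(1/\phi)$ steps and, with high probability, either (i)~certifies $\Phi(H)\ge\phi$, or (ii)~returns a cut $S\subsetneq A$ with $\Phi_H(S)=\bigosym(\sqrt{\phi}\,\log^{3/2}m)$. On input $G$: run $\textsc{CutOrCertify}(G,\phi)$; if it certifies, output the trivial partition; otherwise recurse on $G\{S\}$ and $G\{\bar S\}$ and return the concatenation of the two decompositions. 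Every leaf of the recursion is a certified $\phi$-expander and the leaves partition $V$, so the output is a valid $\phi$-expander decomposition; what remains is to construct and analyze the primitive, bound the number of cut edges, and bound the running time.

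\textbf{The primitive.} I would simulate the $T$-step lazy walk from a degree-proportional start (in an implementation, a truncated, approximate walk that discards negligible probability mass to keep each step cheap) and examine the resulting distribution $p$. If $p$ is $\varepsilon$-close to the degree-stationary distribution --- tested over a random sample of start vertices --- a Lov\'asz--Simonovits converse Cheeger bound certifies that $H$ has no cut of conductance below $\phi$. Otherwise the walk has failed to mix, and Lov\'asz--Simonovits forces a level set of $p$ (a \emph{sweep cut}, obtained by sorting vertices by $p_v/d_v$) to have conductance $\bigosym(\sqrt{\phi}\,\log^{3/2}m)$; the square root is the intrinsic Cheeger barrier, and the $\log^{3/2}m$ absorbs the walk length $\tildeO(1/\phi)$ together with the truncation / approximate-eigenvector error. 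Running $\bigosym(\log m)$ independent trials and union-bounding over the $\bigosym(n)$ recursive calls yields the high-probability guarantee.

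\textbf{Cut edges and running time.} For the cut bound I would charge against volume. Call a recursive cut \emph{balanced} if $\min(\nvol(S),\nvol(\bar S))$ is a constant fraction of $\nvol(A)$, and \emph{unbalanced} otherwise. At any fixed recursion level the subgraphs being processed are vertex-disjoint, so the total capacity removed by balanced cuts at one level is at most $\Phi_{\max}\cdot m$; and balance can drop by a constant factor only $\bigosym(\log m)$ times along a root-to-leaf path, so balanced cuts remove $\bigosym(\sqrt\phi\,\log^{3/2}m)\cdot m\cdot\bigosym(\log m)=\bigosym(\sqrt\phi\,m\log^{5/2}m)$ edges in all. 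Unbalanced cuts I would handle with an \emph{expander-trimming} step that certifies the large side after peeling off only a small, low-conductance remainder, so the peeled volumes telescope to $\bigosym(m)$ and contribute only $\bigosym(\sqrt\phi\,m\log^{3/2}m)$ further edges, dominated by the previous term. For the running time, one call of $\textsc{CutOrCertify}$ on $H$ costs $\bigosym\big((m_H+n_H\log n_H)\cdot T\big)=\tildeO((m_H+n_H\log n_H)/\phi)$: $m_H\cdot T$ to push the walk, $n_H\log n_H$ to sort for the sweep, the $\bigosym(\log m)$ trials folded in. Summing over the recursion tree, whose total edge mass per charging level is at most $m$ across $\bigosym(\log m)$ levels, gives $\bigosym\big(\frac{m+n\log n}{\phi}\log^3 m\big)$.

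\textbf{Main obstacle.} The crux is reconciling the cut-edge bound with termination and efficiency in the unbalanced case: a naive recursion that peels an arbitrary sparse cut can have depth $\Omega(n)$ and reprocess most of the graph $\Omega(n)$ times, ruining both the running time and --- through repeatedly re-cut boundaries --- the edge count. This forces the primitive to return not merely \emph{some} sparse cut but essentially the \emph{most balanced} sparse cut it can find (or a certificate that the complement is itself nearly an expander), coupled with a trimming lemma that operates from a random-walk certificate rather than an exact flow witness. The second, quantitative, difficulty is to push the Lov\'asz--Simonovits analysis through \emph{truncated} walks and through the self-loop-augmented subgraphs $G\{A\}$ while losing only the polylogarithmic factors claimed --- this is precisely where the exponent $5/2$ is won or lost.
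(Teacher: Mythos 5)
Your outer skeleton---recursive cut-or-certify, the balanced/unbalanced charging over $\bigosym(\log m)$ levels, and expander trimming for the unbalanced case---matches the paper, which likewise plugs its cut procedure into the Saranurak--Wang framework and obtains the $\log^{5/2}$ and $\log^3$ factors in essentially the way you describe. The genuine gap is inside your cut-or-certify primitive, in the \emph{certify} direction. You propose to run a single lazy walk and test closeness to stationarity ``over a random sample of start vertices,'' then invoke a Lov\'asz--Simonovits converse to conclude $\Phi(H)\ge\phi$. This cannot work: a valid $\phi$-expander decomposition must detect \emph{every} cut of conductance below $\phi$, including extremely unbalanced ones. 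If $H$ is an expander except for a set $S$ with $\nvol(S)=\sqrt{m_H}$ and $\Phi_H(S)<\phi$, then walks started outside $S$ mix essentially as on an expander, and $\bigosym(\log m)$ degree-proportional start samples miss $S$ with probability $1-\bigosym(\log m/\sqrt{m_H})$, so your procedure certifies a non-expander with high probability. Testing all $n$ starts closes this hole but costs $\Omega(nm/\phi)$. The paper's central technical device---absent from your proposal---is precisely the fix: run the concurrent walk with all $n$ commodities \emph{implicitly}, track the global potential $\varphi(t)=\sum_i d_i\lVert P_i(t)-\mu(t)\rVert^2$, and make each iteration affordable by projecting the commodity vectors onto a fresh random direction $r$, so the algorithm only ever manipulates the scalars $u_i=(P_i-\mu)^\top r$. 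A small final potential certifies, via a congestion argument on the multicommodity flow carried by the walk, that the surviving set is a near $6\phi$-expander against \emph{all} cuts simultaneously, and the Projection Lemma guarantees that each round is ``good'' with constant probability, meaning the one-dimensional sweep either yields a cut of conductance below $\gamma$ or forces a multiplicative potential drop.

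A second, smaller gap: you correctly name the unbalanced case as the crux but leave its mechanism unspecified. The paper resolves it by taking, among the two-ended sweep cuts $S_{a,b}$ of conductance below $\gamma$, the one of maximal volume, accumulating these into a side set $L$ across iterations; its Lemma on unbalanced-cut progress shows that when $R(u)<\gamma^2/4$ the removed unbalanced piece must carry a $1/(880\log^2 n)$ fraction of the potential, so every good round makes progress either through the walk or through the removal. If $L$ never becomes balanced within $T=1/(12\phi)$ rounds, the potential bound certifies the remainder as a near $6\phi$-expander. Your worry that trimming must operate ``from a random-walk certificate rather than an exact flow witness'' is a non-issue: the Saranurak--Wang trimming procedure takes the near-expander property itself as its hypothesis, not a flow witness, and the paper invokes it as a black box.
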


The main part of our algorithm is the cut procedure described in
Section~\ref{sec:goodcuts}. This procedure is then plugged into the framework of
Saranurak and Wang to find an expander decomposition. Our cut procedure gives
the following guarantees.

\begin{theorem}[Cut Procedure]\label{theorem:cut-step}
Given a graph $G = (V,E)$ with $m$ edges and a parameter $\phi$, the cut
procedure takes $\bigosym \big((m \log{n})/\phi^2 \big)$ steps and terminates with
one of these three cases:
\begin{enumerate}
    \item We certify that $G$ has conductance $\Phi(G) \geq \phi$.
    \item We find a cut $(A, \bar{A})$ in $G$ that has conductance at most $\Phi_G(A) = \bigosym (\sqrt{\phi}\log^{3/2}{m} )$. Then one of the following holds:
    \begin{enumerate}
    \item either $\nvol(A), \vol{\bar{A}}$ are both $\Omega(m \sqrt\phi / \log^{3/2}{m})$,
    i.e., $(A, \bar{A})$ is a relatively balanced low conductance cut;
        \item or $\vol{\bar{A}} = \bigosym (m\sqrt\phi / \log^{3/2}{m})$ and A is a near $6\phi$-expander.
    \end{enumerate}
\end{enumerate}
\end{theorem}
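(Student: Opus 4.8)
The plan is to realize the cut procedure as an adaptation of the random-walk \enquote{Partition} routine of Spielman and Teng~\cite{spielman2004nearly} --- itself built from the truncated-random-walk primitive \enquote{Nibble} --- plugged into the expander-decomposition framework of Saranurak and Wang~\cite{saranurakwang19} in place of their flow-based cut step. The engine of the analysis is the Lov\'asz--Simonovits curve: for a lazy random walk run from a start vertex and any step $t$, the curve $I_t(x)$ (the largest probability the walk can place on a vertex set of volume $x$) is concave, and if every \enquote{sweep cut} --- obtained by ordering vertices by $p_t(v)/d_v$ and taking prefixes --- has conductance at least $\psi$, then $I_t$ contracts towards the stationary line at rate $(1-\Theta(\psi^2))^{t}$. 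Choosing the walk length $T=\Theta(1/\phi)$ and $\psi=\Theta(\sqrt{\phi}\,\log^{3/2}m)$ drives the right-hand side below $1/\mathrm{poly}(m)$; this is exactly the regime where the Cheeger-type square-root loss of any spectral method becomes unavoidable, which is why the procedure delivers conductance $\sqrt{\phi}$ rather than $\phi$.

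From this the three cases fall out much as in Partition. The procedure repeatedly runs Nibble from degree-weighted random start vertices; whenever a Nibble exposes a sweep cut of conductance at most $\psi$ it is peeled off into the growing set $\bar A$, and the process continues on the remainder. If the accumulated volume of $\bar A$ ever reaches the threshold $\Theta(m/\log^2 m)$ while the other side is still that large, we output the relatively balanced low-conductance cut --- case~2(a). If instead the process terminates with $\vol{\bar A}$ still $O(m/\log^2 m)$ and no start vertex produces a sparse cut, the Lov\'asz--Simonovits contraction together with the converse (easy) direction of Cheeger's inequality certifies that the residual graph --- hence $G$ itself, if nothing was peeled --- has conductance at least $\phi$: case~1. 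The remaining possibility, a little peeled but no balanced cut emerging, is case~2(b).

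The substantive part --- and the step I expect to be the main obstacle --- is closing case~2(b): showing that the large side $A$ left after the unbalanced peeling is a \emph{near} $6\phi$-expander, i.e.\ $\border_G(S)\ge 6\phi\,\nvol(S)$ for every $S\subseteq A$ with $\nvol(S)\le\nvol(A)/2$, where edges from $S$ into $\bar A$ must also count. I would handle this by a trimming step in the style of Saranurak and Wang: iteratively move into $\bar A$ any vertex of $A$ that sends more than a $6\phi$-fraction of its edges out of $A$, and argue (i) this removes only $O(\vol{\bar A})$ additional volume, so the overall bound $O(m/\log^2 m)$ survives, and (ii) once it halts no $S\subseteq A$ can be $6\phi$-sparse, since such an $S$ would --- via the \enquote{good start vertex} property of Nibble --- have caused one of our random starts inside $S$ to expose a sweep cut of conductance $O(\sqrt{\phi}\,\log^{1/2}m)\le\psi$, contradicting that $A$ was never peeled. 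The delicate points are that the walk certificate speaks only about sweep cuts of one particular diffusion rather than about all cuts of $A$, so part~(ii) has to be routed through the good-start-vertex lemma with its own logarithmic loss, and that the per-Nibble success probabilities are small and must be amplified by the repetition structure of the outer peeling loop; getting the trimming invariant and this probability boosting to coexist is where I expect the real work to lie.

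For the running time there are $O(1/\phi)$ walk steps in total across all Nibbles --- truncation keeps each Nibble's work proportional to the volume it touches, and the touched volumes telescope --- each step costing $O(m)$ to push probability across edges and $O(n\log n)$ to re-sort vertices for the sweep cut, for $O\big((m+n\log n)/\phi\big)$ steps overall, with the trimming post-processing dominated by this. The polylogarithmic exponents ($\log^{3/2}m$ in the cut conductance, $\log^2 m$ in the balance threshold) are precisely what the Lov\'asz--Simonovits iteration, the sweep-cut rounding, and the union bound over start vertices and steps produce, and pinning them down rather than settling for \enquote{some polylog} is the bulk of the remaining bookkeeping.
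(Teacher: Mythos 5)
Your proposal follows a genuinely different route from the paper. You build the cut step out of Spielman--Teng's \textsc{Nibble}/\textsc{Partition} primitive: single-source truncated lazy random walks from randomly chosen start vertices, analyzed via the Lov\'asz--Simonovits curve, with sparse sweep cuts peeled off and a final trimming pass to certify near-expansion. The paper instead runs one \emph{concurrent} random walk in which every vertex injects its own commodity, tracks the $\ell_2$ potential $\varphi(t)=\sum_i d_i\|P_i(t)-\mu(t)\|^2$, and makes this computationally feasible by projecting the $P_i$ vectors onto a fresh random direction each round (a Johnson--Lindenstrauss-style Projection Lemma shows a round is \enquote{good} with constant probability). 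The dichotomy is then driven by the Rayleigh quotient $R(u)$ of the projected vector: if $R(u)$ is large the potential drops by a factor $1-\gamma^2/O(\log^2 n)$, and if it is small Cheeger's inequality produces a sweep cut of conductance below $\gamma=O(\sqrt{\phi}\log^{3/2}m)$; a separate lemma shows that peeling an unbalanced cut also removes a constant $1/O(\log^2 n)$ fraction of the potential. After $T=1/(12\phi)$ rounds the potential is below $1/(4\operatorname{vol}(V)^2)$ w.h.p., and --- this is the key structural difference --- the near-$6\phi$-expander property of the surviving set $A$ then follows \emph{directly} from a multicommodity-flow congestion argument: the walk itself embeds a flow of congestion at most $T$, and low potential forces at least a quarter of the flow originating in any $S\subseteq A$ with $\operatorname{vol}(S)\le\operatorname{vol}(A)/2$ to cross $E(S,V\setminus S)$, giving $|E(S,V\setminus S)|\ge \operatorname{vol}(S)/(2T)\ge 6\phi\operatorname{vol}(S)$. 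No trimming is needed inside the cut procedure; trimming appears only in the outer decomposition framework, to upgrade the near-expander to a proper expander before recursing.

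The gap in your proposal is exactly where you predicted it: case 2(b). Your trimming step (moving vertices that send more than a $6\phi$-fraction of their edges out of $A$) only controls single vertices, whereas near-expansion is a statement about \emph{every} subset $S\subseteq A$, so (ii) carries the entire burden. The good-start-vertex lemma of Spielman--Teng says that a constant fraction (by volume) of start vertices inside a low-conductance set $W$ would cause \textsc{Nibble} to expose a sparse cut, but turning this into \enquote{w.h.p.\ no $6\phi$-sparse $S\subseteq A$ survives} requires a union bound over an exponential family of sets, or the specific volume-accounting argument of \textsc{Partition}, whose natural conclusion is of the form \enquote{$\bar A$ captures at least half the volume of any sparse set} rather than the near-expander property that Theorem~\ref{theorem:cut-step} asserts and that the trimming procedure of Saranurak and Wang requires as input. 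Bridging that mismatch is a real piece of work that your sketch does not supply, and it interacts badly with the probability amplification you mention, since small sparse sets may simply never receive a start vertex. The paper's potential-plus-congestion certificate avoids this entirely because low potential is a single scalar event that simultaneously witnesses expansion of all subsets of $A$. If you want to keep the \textsc{Nibble}-based architecture, you would need to either prove the near-expander guarantee from the \textsc{Partition} invariants directly (essentially redoing the Saranurak--Wang analysis of their flow-based cut step in the random-walk setting) or weaken case 2(b) to the Spielman--Teng-style guarantee and then re-verify that the outer recursion still terminates with the claimed edge and depth bounds.
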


Given the cut procedure, an expander decomposition is computed as follows.
On the current subgraph $G$, execute the cut procedure, to either find a low
conductance cut $S$ or certify that none exists. If no such cut exists, then
$G$ is a certified $\phi$-expander and we terminate. Otherwise we check whether
$S$ is sufficiently balanced, i.e., the volume of the smaller side is at least
$\Omega(m \sqrt\phi/ \log^{3/2}{m})$. In that case we cut the edges across
$(S, \bar{S})$ and recurse on both parts. As both
parts are substantially smaller than $G$ we obtain a low recursion depth.
Otherwise, $S$ is very unbalanced but the larger side $\bar{S}$ is a so-called \emph{near
  expander} -- a concept introduced by Saranurak and Wang~\cite{saranurakwang19}:
\begin{definition}[Near $\phi$-expander]
Given a graph $G=(V,E)$. A subset $A\subseteq V$ is a near
$\phi$-expander in $G$ if for all sets $ X\subseteq A$ with $\nvol(X)\le \nvol(A)/2:\,|E(X,V\setminus X)|\ge\phi\nvol(X)$.
\end{definition}
Note that if the LHS in the above equation were $|E(X,A\setminus X)|$ then $G\{A\}$
would be a $\phi$-expander.

If $G\{\bar{S}\}$ (for $\bar{S}$ returned by the
cut-procedure) is indeed a $\phi$-expander we can
just recurse on the smaller side $S$ and would obtain a low recursion depth.
Saranurak and Wang introduced a trimming procedure that, given a subset $B$ that
is a near $6\phi$-expander with volume $\nvol(B)\ge 9\nvol(V)/10$ and
$|E(B,\bar{B})|\le\phi\nvol(B)/10$, computes a subset $B'\subseteq B$
that is a proper $\phi$-expander and has volume $\nvol(B')\ge\frac{1}{2}\nvol(B)$.
By applying this trimming step to $\bar{S}$ we can return $G\{\bar{S}'\}$ as a
proper $\phi$-expander and recurse on the remaining graph -- still with a small
recursion depth.
Due to the slightly weaker guarantee of the cut procedure, we need a smaller balance factor of $\beta = \Theta(\sqrt\phi/\log^{3/2}{m})$ to ensure that $|E(B,\bar{B})|\le\phi\nvol(B)/10$ and apply the Trimming procedure.
Overall, using our cut procedure within this framework gives a recursion depth of $O(\log^{5/2}{m}/\sqrt\phi)$ and yields 
Theorem~\ref{theorem:expander-decomposition}.

\subsection{Finding Low Conductance Cuts}\label{sec:goodcuts}
\def\bal{{\color{red}\operatorname{bal}}}

We now give a detailed description of the cut procedure that forms the basis of
Theorem~\ref{theorem:cut-step}. The goal is to either certify that $G$ is a
$\phi$-expander or to find a low conductance cut that is as balanced as possible.
The idea is to exploit that random walks converge quickly on expanders and
hence when they don't, we know there must be a low conductance cut. See Algorithm~\ref{alg:cut-step} for an outline.

We employ a  concurrent random walk, where each node distributes its unique commodity in the graph.
We are interested in the probability that after t steps a particle that started say at node i is at some other node j. The walk has converged if this distribution is essentially identical for \emph{every} starting vertex.
If we quickly reach this stationary distribution, there cannot be a low conductance cut and hence the graph must be an expander. Otherwise we can use information gathered from the walk to find a low conductance cut.

{%
\def\uni#1{\mathrlap{#1}\hspace*{0.83em}}
\begin{algorithm}
\caption{Cut Procedure}\label{alg:cut-step}
\begin{algorithmic}[1]
\linespread{1.08}\selectfont
\Require Graph $G=(V,E)$, Target expansion $\phi$
\Ensure \textsc{Expander} or \textsc{Balanced}$(S,\bar{S})$ or \textsc{Unbalanced}$(S,\bar{S})$

\State $\uni{T}      \gets1/(12\phi)$
\State $\uni{\gamma} \gets 343\sqrt{\rule{0pt}{1.8ex}\smash{\phi \log(32 m^3)\log^2(n) }}$\Comment{{\color{Gray}cut threshold}}
\State $\uni{\beta}  \gets 2\sqrt\phi/\log^{3/2}m, $\Comment{{\color{Gray}balance}}
\State $W  \gets I$\Comment{{\color{Gray} random walk matrix}}
\State $A \gets V$, $L \gets \emptyset$

\For{$t = 1, \, \ldots, \, T$}

    \State $\uni{\vec{r}} \gets \textit{random unit vector in } \mathbb{R}^{n}$ %

    \State $\uni{\vec{u}} \gets W D^{-1} r$
    \Comment{{\color{Gray} apply random walk}}
    
    \State $\uni{\vec{u}} \gets \vec{u} - (\vec{u}^\top \vec{d} \,)/\vol{V} \cdot \bbone $ \Comment{{\color{Gray} ensure $\vec{u}\perp\vec{d}$}}

    \State $\mathcal{A}$, $D_{\!A} \gets$ \textit{adjacency and degree matrix of} $G\{A\}$

    \State $W \gets \big(\frac{1}{2}I+\frac{1}{2}\mathcal{A}D_{\!\!A}^{-1} \big) W$
    \Comment{{\color{Gray} extend walk matrix}}

    \State $\uni{S} \gets \text{SweepCut}(\vec{u}, \gamma)$
    
    \If{$\nvol(S) \geq \beta m$}
            \Return \textsc{Balanced}$(S, \bar{S})$ 
    \ElsIf{$S \mathbin{\smash{\neq}} \emptyset$} %
        \State $L \gets L \cup S$, $A \gets A \setminus S$
        \If{$\vol{L} \geq \beta m$}
            \Return \textsc{Balanced}$(A, L)$ 
        \EndIf
    \EndIf
\EndFor

\If{$L = \emptyset$}
    \Return \textsc{Expander}
\Else{}
    \Return \textsc{Unbalanced}$(A, L)$
\EndIf
\end{algorithmic}
\end{algorithm}}

Such a cut may however be very unbalanced.
The procedure therefore accumulates low conductance cuts until the combined cut is a
balanced low conductance cut or the graph that remains does not have a low
conductance cut anymore. Here we call a cut $S$ \emph{balanced} if its balance
$b(S) \ge \beta \defeq 2\sqrt\phi/\log^{3/2}m$.

The algorithm maintains a partition of $V$ into two sets $A, L$ for each iteration $t$ with initial values $A := V, L := \emptyset$.
We repeat the following for $T=1/(12\phi)$ steps:
In iteration $t$, we generate a new random unit vector $r$
and execute $t-1$ steps of the random walk, initialized according to $r$.
This walk yields a vector $u$, on which we analyze the conductance of all sweep cuts, i.e., cuts of the form
$S_c \defeq \{ v \in A : u_v \leq c \}$ for some value $c$.
Note that these conductance values can be calculated in linear
time after sorting the entries of $u$.

We consider a cut to have low conductance if the value is below the threshold $\gamma = \mathcal{O}(\sqrt{\phi} \log^{3/2}{m})$. A lower threshold value $\gamma$ would give better guarantees in case (2) of Theorem~\ref{theorem:cut-step} but at the same time it would increase the number of rounds required to converge and hence worsen the guarantee of case (1) in Theorem~\ref{theorem:cut-step}. The value is chosen to ensure we can guarantee $\Phi(G) \geq \phi$ in case (1) of the Theorem.

The analysis of the sweep cuts yields one of these three cases:
\begin{enumerate}
    \item If all sweep cuts have conductance at least $\gamma$, we continue
    with the next iteration.
    \item If there exists a sweep cut with conductance $<\gamma$ and balance $\ge\beta$ we return this low conductance cut. 
    \item Otherwise we consider the two-ended sweep cuts of the form $S_{a,b} \defeq \{ v \in A : u_v \notin [a, b] \}$ for values $a\leq b$ and find the one with largest volume among those with $\Phi_{A}(S_{a,b})<\gamma$.
    If this cut has balance $\ge\beta$ we return it, otherwise we move it from $A$ to $L$ and check whether $L$ has become balanced.

\end{enumerate}

The random walk can be interpreted as the projection of a much higher dimensional random walk onto the randomly chosen direction $r$. This projection step is crucial for the implementation to become computationally feasible and we show that the projections approximate the original structure sufficiently well.

To argue the correctness of the cut procedure we have to show that it
is highly unlikely that the procedure does not find a cut on a graph that has
expansion less than $\phi$. For this we argue that after $T$ random walk
steps (without finding a cut) the walk will have \enquote{converged} to its
stationary distribution w.h.p. Because of the choice of parameters such a quick
convergence is only possible if $G$ is a near $6\phi$-expander. Whenever the
cut procedure returns a cut, it is guaranteed to have conductance at most $\gamma$. From this it
follows that the expander decomposition cuts at most $\mathcal{O}(\gamma m\log m)$ edges.
 The entire argument can be found in Section~\ref{sec:proof} in the appendix. %

\section{\texorpdfstring{\Algo{} -- A New Normalized Cut Algorithm}{XCut - A New Normalized Cut Algorithm}}\label{sec:algorithm}
In this section, we introduce the algorithm \Algo{}, which is based on the previous section's novel random walk-based expander decomposition.
We note the apparent similarity between multilevel graph partitioning and the expander hierarchy and use this as the basis of \Algo{}.
As an outline, we use the novel random walks to construct the expander hierarchy to obtain a coarse representation of the graph, the tree flow sparsifier.
We then compute an initial solution on the tree.
Finally, we use an iterative refinement step while descending the hierarchy to improve the solution we found.
Compared to other contraction schemes, which lead to each vertex in the coarsest graph representing roughly the same number of nodes in the base graph, the subtrees on each level in the tree flow sparsifier can represent a vastly different number of vertices.

\emph{Expander Decomposition.}
While the expander decomposition outlined in \autoref{sec:expdec} is much
simpler to implement than that of~\cite{saranurakwang19}, as we do not rely on
maximum flow computations at all, we made several choices in the implementation to
speed up computation. We iterate a single random walk, and after each iteration, we
check whether we can find a sparse cut. If we find a suitable cut, we
disconnect the edges going across it, but contrary to the algorithm in
\autoref{sec:expdec}, we do not restart the random walk, as we find we can extract
further information about the cut structure of the graph from the state of the
random walk. For example, if the random walk has mixed very well on one of the
new components, it is likely to be an expander, while if there is another
sparse cut in the component, then the random walk will likely not have mixed
well on the component. One may think that reducing the number of random walks might lead to a loss
of guarantees and higher variance of the algorithm, but in experiments conducted while designing the algorithm, we found that on real graph instances running multiple concurrent random walks is not necessary. In fact, only a single graph in our 50 graph benchmark had noticeable variance. See also the discussion in
\autoref{sec:graclus}.

The main parameter of the expander decomposition is the cut value $\gamma$, which is the minimal sparsity of the cuts our algorithm makes.
Additionally, we introduce a parameter $\rho$, to be used as a threshold for ``certifying" that a component is an expander, as we found that choosing the threshold to be $1/(4\nvol(V)^2)$ does not offer any benefits over a much larger value.
See also \autoref{sec:configuring} for details on choosing the value for this parameter. We make a final modification to the theoretical algorithm in that we omit the trimming step on unbalanced cuts, since it does not provide any further speedup of the expander decomposition routine in practice.

\emph{Automatically Choosing $\gamma$.}
The theoretical analysis in~\cite{expanderHierarchy} suggests a choice %
for $\gamma$ and $\phi$ that is sufficient to prove the theoretical results.
In preliminary experiments we found this choice to be too pessimistic and, in fact,
by adapting $\phi$ and $\gamma$ to the graph we can obtain better results.
However, there is a trade-off to be made:
If $\gamma$ is too small, the expander decomposition will not find
many sparse cuts, and we obtain sparsifiers of low quality. %
On the other hand, if $\gamma$ is too large, most cuts will be sparser than $\gamma$, which leads to many cuts
being made and increases the running time. In the worst case, this can even
prevent the algorithm from terminating. %

Thus, our goal is to choose $\gamma$ such that it offers a good quality vs.\ time trade-off. %
When choosing $\gamma$, we can only observe whether this was a good choice in a post hoc fashion.
A naive strategy would be to start with a large $\gamma$ and decrease it until the expander hierarchy terminates within a reasonable amount of time.
However, this approach is wasteful, as we discard previously computed decompositions, even if they were good.
Instead, we decrease $\gamma$ by multiplying it with constant factor $\epsilon < 1$ whenever the expander decomposition on a specific level cuts too many edges in $G_i$.
We then use the new $\gamma' = \epsilon \gamma$ for the remaining expander decompositions, decreasing it further as required.

\emph{Solving on the Sparsifier.}\label{sec:heuristic} To solve normalized $k$-cut on the tree sparsifier obtained from the hierarchy, we want to remove $k-1$ edges to decompose it into a forest of $k$ trees.
Given a solution, i.e., a tuple of edges $(e_1, \dots, e_{k-1})$, we assign vertex $v \in V$ to the cluster $C_j$ associated with $e_j$ if $e_j$ is the first edge we encounter on the path from $v$ to the root.
If no edge in the solution lies on the path to the root, we assign $v$ to cluster $C_k$.
As normalized cut is NP-hard also on trees (see~\autoref{sec:related}), %
we introduce two heuristic approaches that take time $\bigosym(nk)$ each: 

\emph{Greedy:} 
The simple greedy heuristic picks the edge in the sparsifier which minimizes the increase in the normalized cut objective in every step.
By simply computing the cost of cutting each remaining edge, it takes $O(n)$ time to find this edge, assuming the number of vertices in the sparsifier is $\bigosym(n)$.
To partition a graph into $k$ clusters, we repeat this process $k-1$ times. 
For $k=2$, this algorithm produces the optimal solution on the tree as we pick the edge that minimizes the cut objective.

\emph{Dynamic Programming:} Each row of the dynamic program corresponds to a level, and we make one cell for each pair $(v, i)$ of the row, where $v$ is a vertex in the sparsifier and $i \in [0, k]$. 
The value of each cell is the normalized cut value $\nc$ of decomposing the subtree rooted at $v$ into $i$ parts.
We write $DP(v, i)$ for this value. 
Additionally, we write $cut(v, i)$ for the weight of cut edges in the solution incident to the subtree rooted at $v$, as well as $vol(v, i)$ for the volume remaining in the subtree.
    
Without loss of generality, assume the tree is binary, as otherwise we can binarize the tree by inserting edges of infinite cost (see Henzinger~\ea{}~\cite{dpSTACS} for details). 
The value of a cell is computed according to the following rule:
\def\DP{\operatorname{DP}}
\def\cut{\operatorname{cut}}
\def\cutParent{\operatorname{cutParent}}
\begin{align*}
    \DP(v, j) =\min(&\cutParent(\DP(v, j-1)), \\
                   &\min_{0 \le i \le j} (\DP(v_l, i) + \DP(v_r, j - i))),
\end{align*}
where $\cutParent(\DP(v, j-1)) = \DP(v, j-1) + \frac{w(v, v') + \cut(v, j-1)}{\nvol(v, j-1)}$ is the best solution where the edge going to the parent is cut.
For each vertex $v$ of $G_0$, we initialize the bottom row of the program with $\DP(v, 0) = 0$, $\DP(v, 1) = 1$ and $\DP(v, j) = \infty$ for $j \in [2, k]$.
In the root vertex we use the special rule $\DP(v, j) = \min_i \DP(v_l, i) + \DP(v_r, j - i) + \frac{\cut(v_l, i) + \cut(v_r, j-i)}{\nvol(v_l, i) + \nvol(v_r, j-i)}$, where the last term ensures we do not produce a solution with $\nvol(v, i) = 0$, leading to cluster $j$ being empty.

\begin{figure}\Description{A scatterplot depicting the ratio between maximum and median degree of real world graphs on the x axis and the relative value of XCut's normalized cut versus Graclus is shown on the y axis. The points show a negative trend, decreasing in value as the ratio increases.}
    \centering
    \includegraphics[width=.9\linewidth]{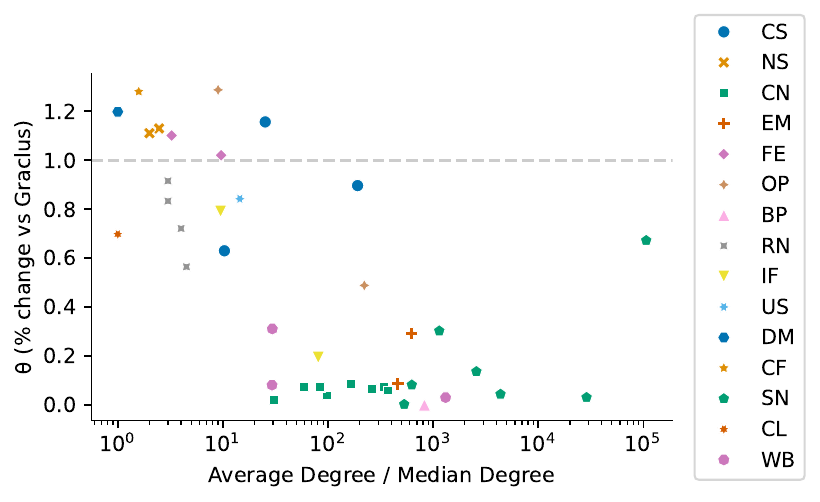}
    \caption{\mathversion{bold}
    Relative improvement over \Graclus{} (y-axis) vs.\ the ratio of the maximum degree and the median degree (x-axis). The value on the x-axis is larger if the graphs exhibit a distribution with large outliers. Note the negative trend, except for the outlier towards the right corresponding to instance SN7.
    }\label{fig:max-median}
\end{figure}

\emph{The Refinement Step.}
Finally, we perform an iterative refinement as we descend the hierarchy. While descending, we introduce new clusters according to the edges in the solution. On each level we then perform vertex swaps that improve the normalized cut objective, by either reducing the number of cut edges or making the partition more balanced. For details see the full version of the paper. %

\emph{Support for Variable Number of Partitions~$k$.}
A notable strength of our algorithm is that with both heuristics, it solves the problem for any value $k' \le k$ during their execution. Suppose we are still determining the number of clusters needed.
In that case, we can compute the solution for the maximum $k$ we are interested in and obtain solutions for all smaller numbers of partitions during exploratory data analysis.
The only step that needs to be rerun is the refinement step.

\section{Experimental Evaluation}\label{sec:experiments}
In the previous sections, we have shown that our approach provides provable guarantees on the approximation ratio for the value $\nc$ of the normalized cut\footnote{See Section~\ref{sec:pre} for the precise definition.} (and its relatives, sparse cut, and low-conductance cut) if $k = 2$.
We now turn to evaluate \Algo{} experimentally in different configurations.
We compare the objective value of normalized cuts produced by \Algo{} and its running time against the normalized cut solver \Graclus{} by Dhillon \ea{}~\cite{graclus} and the state-of-the-art graph partitioning packages \Metis{}~\cite{metis} and \Kahip{} (kaffpa)~\cite{kahip}, all of which are available publicly. 
These algorithms are based on the multilevel graph partitioning framework %
and produce disjoint partitions of the vertices into $k$ clusters, where $k$ is a freely choosable parameter.
We note that \Metis{} and \Kahip{} solve the balanced $k$-partitioning problem rather than normalized cut. Nevertheless, we include these solvers in our comparison, as they are used for this
task in practice and we found that they can outperform \Graclus{} on some of the graphs in our benchmark dataset.
By this, we follow the methodology of \cite{graclus} and \cite{zhao2018nearly}.

In addition, we compare our results to the values reported for the normalized cut algorithm by Zhao et al.~\cite{zhao2018nearly}\footnote{Unfortunately, the code is not available publicly and also could not be provided by the authors upon request before the submission deadline.}.
We omit comparisons to solvers employing spectral methods such as the recent works by Chen~\ea{}~\cite{DBLP:conf/ijcai/0004NHY17} and Nie~\ea{}~\cite{DBLP:journals/pami/NieLWWL24}, as this approach does not scale well to large datasets of millions of nodes~\cite{graclus}.
In preliminary experiments we found that the solver of Nie et al.~\cite{DBLP:journals/pami/NieLWWL24} uses over 330GB of memory on instance CN3, whereas our solver used less than 400MB of memory.
Furthermore, the algorithm presented in \cite{DBLP:journals/pami/NieLWWL24} does not necessarily produce $k$ clusters, thus making direct comparisons difficult.
Lastly, the space complexity of spectral methods becomes prohibitive for larger values of $k$, as noted by~\cite{graclus}.

\subsection{Experimental Setup}
\paragraph{Instances.}
Our setup includes \num{50} graphs from various applications. 
See \autoref{tab:graph-types} for an overview and the full version of the paper for details.
To facilitate comparability, our collection contains the eight instances used by Dhillon \ea{}~\cite{graclus} (BP1, CF1, CS1--3, DM1, OP1, and OP2)
as well as the \num{21}~instances used by Zhao \ea~\cite{zhao2018nearly}.
In addition, we selected \num{21} real-world networks that cover various application areas, including some of larger sizes.
All instances are available publicly in the Network Repository~\cite{nr} or the SuiteSparse Matrix collection~\cite{suitesparse}.

Note that a graph with $k$ or more connected components always has a (normalized) $k$-cut of size \num{0}.
Surprisingly, we found that \Algo{} is the only solver tested here that finds the trivial optimal solution if $k$ is less than the number of connected components.
However, testing connectivity before starting a solver remedies this problem, which is why we decided to exclude graphs with more than \num{128} connected components except for one (graph ID~\num{0}), which we keep for consistency reasons as it was used in previous comparisons~\cite{graclus}.

\paragraph{Methodology.}
As \Algo{}, \Kahip{}, and \Metis{} are randomized, we ran each of them $\ell = \num{10}$ times per instance with different seeds. 
\Graclus{} is deterministic, and we ran it three times to obtain a stable value for the running time.
We use the arithmetic mean over the $\ell$ runs for each instance to approximate the expected value of $\nc$ and the running time.
When reporting values, we write \Algo{mean} for the mean value across these \num{10}~runs, 
and \Algo{min} for the minimum.
As \Kahip{} and \Metis{} behaved almost identically over all $\ell$ runs, we only report mean values for them.
For each algorithm and each graph, we compute a partitioning consisting of $k \in \{2, 4, 8, 16, 32, 64, 128\}$ clusters as well as $\nc$ (smaller is better).
As a second criterion, we compare the algorithms' running times.

All experiments were conducted on a server with an Intel Xeon 16 Core Processor and \SI{1.5}{\tera\byte} of RAM running Ubuntu 22.04 with Linux kernel 5.15.
\Algo{} is implemented in \texttt{C++} and compiled using \texttt{gcc} 11.4 with full optimization\footnote{\texttt{-O3 -march=native -mtune=native}}.
For all other solvers, we followed the build instructions shipped with their code.
As \Graclus{} is single-threaded, we ran the single-threaded version of every algorithm.
\Metis{} was run in its default configuration.
We used \Kahip{} with the \texttt{fsocial} flag, which is tailored to quickly partitioning social network-like graphs\footnote{We chose this setting as we are especially interested in social network-like graphs.} and \Graclus{} with options \texttt{-l 20 -b} to enable the local search step and only consider boundary vertices during local search, %
as suggested by the authors for larger graphs~\cite{graclus}.

\begin{table}[tb]
    \caption{\mathversion{bold}
    Types and number of graphs in our benchmark dataset. 
    $\MaxDegree$ is the maximum degree, k and M are shorthand notations for \num{e3} and \num{e6},
    respectively. 
    See the full version of the paper for a detailed list.}\label{tab:graph-types}
    \setlength{\tabcolsep}{1.5pt}
    \centering
    \scalebox{0.9}{
    \begin{tabular}{lcccc}
    \toprule
        Type (Abbreviation) & \# & $|V|$ & $|E|$ & $\MaxDegree$\\ 
     \midrule
        Bipartite (\Bip) & 1 & 1.4M & 4.3M & 1.7k\\
        Computational Fluids (\ComFlu) & 1 & 17k & 1.4M & 269 \\
        Clustering (\Clus) & 2 & 4.8k-100k & 6.8k-500k & 3-17 \\
        Citation Network (\CitNet) & 9 & 226k-1.1M & 814k-56M & 238-1.1k \\
        Circuit Simulation (\CircSim) & 3 & 5k-30k & 9.4k-54k & 31-573 \\
        Duplicate Materials (\DupMat) & 1 & 14k & 477k & 80 \\
        Email Network (\Email) & 2 & 33k-34k & 54k-181k & 623-1383 \\
        Finite Elements (\FinEl) & 2 & 78k-100k & 453k-662k & 39-125 \\
        Infrastructure Network (\Infra) & 2 & 2.9k-49k & 6.5k-16k & 19-242\\
        Numerical Simulation (\NumSim) & 2 & 11k-449k & 75k-3.3M & 28-37 \\
        Optimization (\Opti) & 2 & 37k-62k & 131k-2.1M & 54-8.4k \\
        Random Graph (\RandG) & 1 & 14k & 919k & 293 \\
        Road Network (\Road) & 4 & 114k-6.7M & 120k-7M & 6-12 \\
        Social Network (\SocNet) & 7 & 404k-4M & 713k-28M & 626-106k \\
        Triangle Mixture (\TriMix) & 7 & 10k-77k & 54k-2M & 22-18k \\
        US Census Redistricting (\USCens) & 1 & 330k & 789k & 58\\
        Web Graph (\Web) & 3 & 1.3k-1.9M & 2.8k-4.5M & 59-2.6k \\
    \bottomrule
    \end{tabular}
    }
\end{table}

\subsection{\texorpdfstring{Configuring \Algo{}}{Configuring XCut}}\label{sec:configuring}
\paragraph{Greedy vs.\ Dynamic Programming (DP)}
Section~\ref{sec:heuristic} describes two heuristics for computing normalized cuts on the sparsifier.
We compare for both heuristics the running time and $\nc$ across ten precomputed sparsifiers each for 19 representative graphs. The quality returned by DP was never better than that of Greedy. While the running times for both heuristics are linear in the size of the sparsifier, the DP approach scaled worse in $k$. For example, for $k = 32$, DP was three times slower than Greedy, and seven times slower for $k = 128$. See the full version of the paper %
for a plot with the results for $\rho = 10^{-4}$. Greedy was faster and produced no worse quality than DP for all values of $\rho$. Thus, we only report results for Greedy for all further experiments.

\begin{figure}\Description{A scatterplot plotting the normalized cut value on the x axis and the running time on the y axis. There are two arrangement of dots representing two different graph instances. Different colors of points represent different choices for the threshold parameter $\rho$.
The points are arranged in L shaped patterns, with small threshold values on the top, and large threshold values on the right.}
    \centering
    \includegraphics[width=0.85\linewidth]{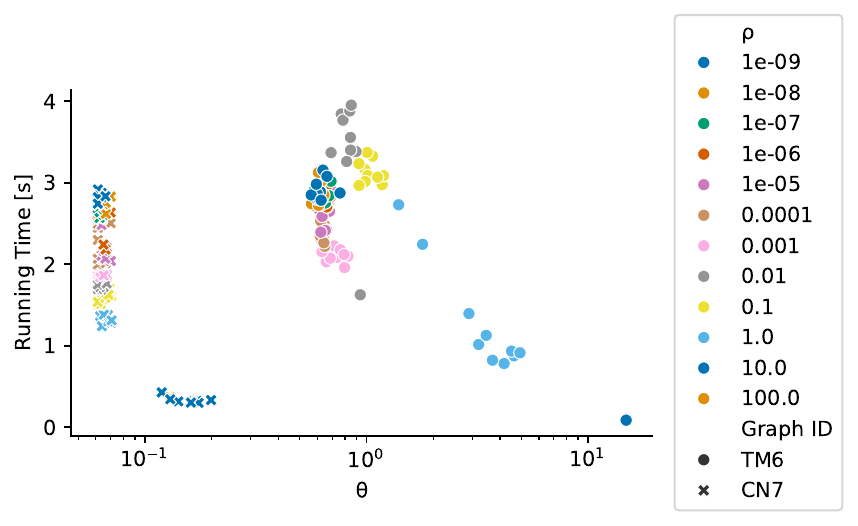}
    \caption{\mathversion{bold}Running time vs.\ normalized cut for different choices of $\rho$ on two different graphs for $k=16$. Colors denote different levels of $\rho$, while shapes indicate the graph.}
    \label{fig:potential}
\end{figure}

\paragraph{Parameter Choice for the Expander Decomposition ($\rho$, $\gamma$)}
In Figure~\ref{fig:potential}, we examine the effect of the threshold parameter $\rho$ on the quality of solutions and running time. 
If the parameter is chosen too large, especially greater than one, the entire graph will likely be certified as an expander before any cuts are made, leading to very large $\nc$. 
Furthermore, we no longer obtain any significant improvement in $\nc$ for values of $\rho$ below $10^{-4}$. 
At the same time, the running time increases inversely with $\rho$ as extra iterations are needed to converge, which is shown by the vertical arrangement of the dots in Figure~\ref{fig:potential}. 
For graph instance TM6, we also find that non-Pareto-optimal choices of $\rho$ exist, namely 0.1 and 0.01, where both the running time and the returned value are worse than $10^{-3}$ and $10^{-4}$. 
This suggests that the choice of $\rho$ is an important design decision.
On all our instances, $\rho =10^{-4}$ produced Pareto-optimal results, so we conclude that we can configure this parameter to be a constant independently of the graph's structure.

For the automatic tuning of $\gamma$, we chose a starting threshold of $0.3$, as this is around the largest value for which the expander decomposition finds structurally interesting cuts. Whenever $\gamma$ is too large, i.e., the node reduction $|G_{i+1}|/|G_i| > 0.95$, we multiply $\gamma$ by a factor of $\epsilon = 0.8$ and restart the expander decomposition.

\subsection{Comparison to \Graclus{}, \Metis{}, and \Kahip{}}\label{sec:graclus}

\autoref{fig:barplot-32} depicts the solution quality of every solver relative to that produced by \Graclus{} on all instances for $k=32$, showing both the mean and the minimum of the ten runs for \Algo{}. 
For other values of $k$, the overall picture remains the same,
see~\autoref{fig:gmean-vs-k} and~\autoref{app:figures}.
We group the graphs by type, with \autoref{fig:barplot-32} containing two plots, the upper showing the disconnected IMDB graph and email, citation, and social network graphs as well as infrastructure networks, as these are the graphs on which \Algo{} is particularly strong.

Looking at absolute values of $\nc$, we find %
that across all instances,
the geometric mean is at least \SI{70}{\percent} lower than our competitors, see \autoref{tab:graclus}.
Interestingly, when we only consider the seven graphs from~\cite{graclus}, the geometric means become 0.84 for \Algo{min} and 0.90 for \Algo{mean}, while they are 1.11, 1.19 and 1.03 for \Graclus{}, \Metis{}, and \Kahip{} respectively, which implies that \Kahip{} slightly outperforms \Graclus{} in terms of $\nc$ on their benchmark (but not \Algo{}).

\begin{figure}\Description{A lineplot plotting the geometric mean of the normalized cut value for the algorithms XCut, Graclus, METIS and KaHiP. As k increases the value of the normalized cut increases. XCut's line is at the bottom, above is Graclus with a large gap. Then a small gap above the Graclus line where both METIS and KaHiP appear close together, with METIS slightly above KaHiP.}
    \centering
    \includegraphics[width=0.6\linewidth]{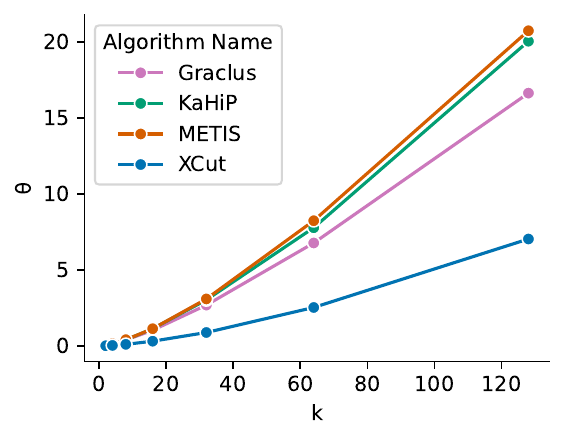}
    \caption{\mathversion{bold}Geometric mean of the cut value $\nc$ across all graphs for each $k$ for \Algo{mean}, \Graclus{}, \Metis{}, and \Kahip{}.}
    \label{fig:gmean-vs-k}
\end{figure}

One point of note is that \Algo{} does not always find small normalized cuts on the social network SN7 representing user-user interactions in the Foursquare social network.
This appears to be due to a very high-degree node that connects to approximately \SI{15}{\percent} of all vertices, leading to fast convergence of the random walk.
Due to the averaging effect of such a vertex, many nodes $v$ have almost identical values $u_v$ that vary only slightly between runs. 
Thus, when sorting by $u$-value, their order can vary greatly depending on the initial values, leading to very different candidate cuts.
This is the only graph where the minimum and mean of the ten runs of \Algo{} differ significantly ($-\SI{33}{\percent}$ for \Algo{min} and $+\SI{39}{\percent}$ for \Algo{mean} relative to Graclus).
This indicates that the theoretical algorithm sketched in \autoref{sec:expdec}, which uses multiple concurrent random walks (corresponding to more attempts to find a good cut), would likely have reduced the variance here.
This is also the only graph where the fact that we only use a single random walk impairs the quality of the result.

The graph classes on which \Algo{} does not perform as well have fairly homogeneous degree distributions and often appear grid-like when drawn. We conjecture that in these grid-like graphs, there are no good expanders (which \Algo{} is trying to find). 
Instead, sparse cuts arise mainly from the fact that the cut is balanced, i.e., the components we disconnect are all large enough, rather than
being sparsely interconnected, which
is exploited by %
the other solvers.

\begin{table}
\caption{\mathversion{bold}Cut value (\nc) of different algorithms.
The geometric mean is taken across all graphs and values of $k \in \{2, 4, 8, 16, 32, 64, 128\}$ for each graph type.
${}^*$For the overall geometric mean, instances (graph + $k$) with $\nc = 0$ were omitted.
Only \Algo{} detected such cases.
}\label{tab:graclus}
\scalebox{0.9}{
\begin{tabular}{lrrrrr}
\toprule
Type & \Algo{mean} & \Algo{min} & \Graclus{} & \Metis{} & \Kahip{} \\
\midrule
BP & \bf 0.00 & \bf 0.00 & 1.18 & 1.38 & 1.58 \\
CF & 4.61 & 4.18 & 3.44 & 3.51 & \bf 3.15 \\
CL & 0.80 & \bf 0.72 & 1.04 & 1.13 & 1.02 \\
CN & \bf 0.07 & \bf 0.07 & 1.42 & 1.72 & 1.58 \\
CS & 0.44 & \bf 0.41 & 0.51 & 0.59 & 0.57 \\
DM & 2.58 & 2.42 & \bf 2.11 & 2.12 & 2.23 \\
EM & 0.44 & \bf 0.43 & 3.44 & 3.78 & 3.80 \\
FE & 0.56 & \bf 0.53 & 0.54 & 0.54 & 0.55 \\
IF & \bf 0.00 & \bf 0.00 & 0.93 & 1.11 & 1.12 \\
NS & 0.85 & 0.78 & 0.77 & \bf 0.76 & 0.80 \\
OP & 0.71 & \bf 0.66 & 1.47 & 1.48 & 0.99 \\
RD & 13.37 & 13.37 & 12.08 & 12.01 & \bf 11.94 \\
RN & \bf 0.01 & \bf 0.01 & \bf 0.01 & \bf 0.01 & \bf 0.01 \\
SN & 0.25 & \bf 0.19 & 3.43 & 3.98 & 4.00 \\
TM & 2.04 & \bf 1.93 & 2.87 & 3.29 & 3.04 \\
US & 0.04 & \bf 0.03 & 0.06 & 0.06 & 0.05 \\
WB & \bf 0.01 & \bf 0.01 & 0.10 & 0.14 & 0.14 \\
\midrule
\bf All$^*$ & 0.25 & \bf 0.22 & 0.89 & 1.0 & 0.94 \\
\bottomrule
\end{tabular}
}
\end{table}

\begin{figure}[ht]\Description{A scatterplot depicting the ratio between average and median degree of real world graphs on the x axis and the relative value of XCut's normalized cut versus Graclus is shown on the y axis. The points show a negative trend, decreasing in value as the ratio increases.}
    \centering
    \includegraphics[width=.85\linewidth]{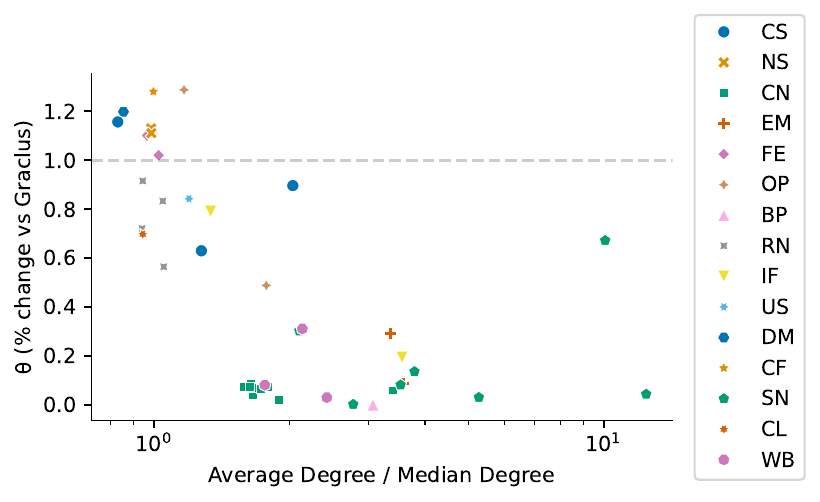}
    \caption{\mathversion{bold}Relative improvement over \Graclus{} (y-axis) vs.\ the average degree divided by the median degree (x-axis).
    Larger x-values signify that the graph exhibits a skewed, power-law-like degree distribution. Note the negative trend, except for the outlier towards the right corresponding to instance SN7.}\label{fig:discussion}
\end{figure}

\begin{figure*}[ht]\Description{The figure depicts two bar plots stacked on top of each other. On the x axis there are the IDs of different graph instances. Each instance then has four corresponding bars for the algorithms XCut, XCut minimum, KaHiP and Metis. The y axis shows the relative change in normalized cut value for k = 32 of these algorithms versus Graclus. 
The top graph contains the IMDB graph, citation networks, social networks, email networks, infrastructure networks and web graphs. The lines of XCut point down from the 0 percent change line, indicating their values are smaller, often by more than 90 percent. METIS and KaHiP's lines always point up, reaching 25-75 percent worse values.
In the bottom figure are the remaining instances, where the divergence of all solvers is within plus/minus fifty percent and no clear pattern is visible.}
    \centering
    \includegraphics[width=0.95\textwidth]{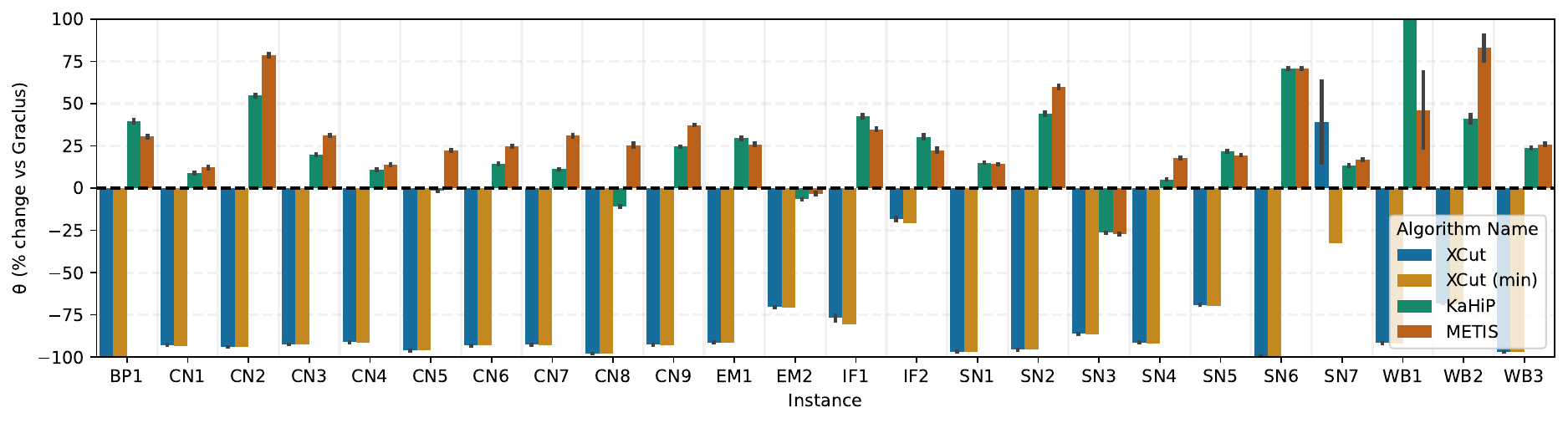}
    \\
    \includegraphics[width=0.95\textwidth]{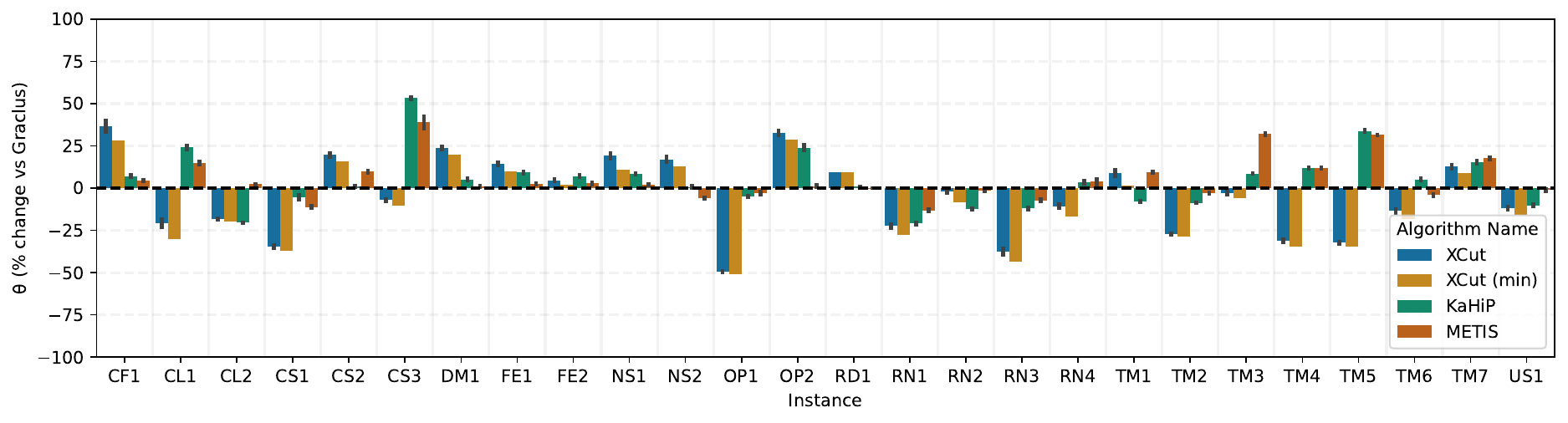}
    \caption{\mathversion{bold}Percentage deviation of the returned normalized cut value relative
      to \Graclus{} for $k=32$. This means that a value of -75\% indicates that
      the normalized cut value is 75\% lower (i.e., better). The thin black
      bars indicate the standard error across our runs. The top graph shows the
      disconnected IMDB graph (BP1), citation network instances (CN), email
      networks (EM), infrastructure graphs (IF), social networks (SN), and web
      graphs (WB), while the bottom shows the remaining instances. See the full version of the paper %
      for details. }\label{fig:barplot-32}
\end{figure*}

\subsection{Comparison to Zhao et al.}

In \autoref{tab:zhao} we compare $\nc$ for \Algo{} and \Graclus{} to the values reported by Zhao~\ea{}~\cite{zhao2018nearly} for $k = 30$.
We observe a similar but weaker pattern as in the previous section. 
On graphs arising from numerical simulation and finite element problems, \Algo{} performs worse than Zhao~\ea{}, but never by more than \SI{36}{\percent}. 
On the triangle mixture instances, \Algo{} achieves better $\nc$ on four instances, while their solver outperforms \Algo{} on two instances. 
On citation networks, clustering instances, and those based on maps (RN1 and US1), 
\Algo{} outperforms the algorithm by Zhao \ea{}, with $\nc$ being up to 2.5 times lower on US1 and CN7. 
Altogether, \Algo{} is better than Zhao~\ea{} on roughly 2/3 of the instances, while \Graclus{} does best on one instance. 
The geometric mean across all instances is 1.46 for \Algo{mean}, 1.39 for \Algo{min}, 1.64 for Zhao~\ea{}, and 3.06 for \Graclus{}.
We note that ours is \SI{11}{\percent} lower for \Algo{mean} and \SI{15}{\percent} lower for \Algo{min}, while \Graclus{}'s value is almost twice that of the other solvers in the comparison due to it producing 5--30 times greater $\nc$ on the citation network (CN) instances.

While it is difficult to draw good conclusions for the running times from the values reported in~\cite{zhao2018nearly} as no source code is available, we note that on some instances \Algo{} takes less than \SI{10}{\percent} of the reported time while producing higher-quality solutions, which might be indicative of a running time advantage of our solver.

\subsection{Running Time}

In our experiments, we found that on many graphs, the running time is spent mainly on computing the expander hierarchy, where on some instances, this step accounts for over 80\% of the running time, even for $k = 128$. 
See \autoref{fig:stacked-bars-2} for some examples. %
However, even on the largest instances in our benchmark, the absolute running time never exceeded 18 minutes.

Overall, \Algo{} is on average three times slower than \Graclus{}, \num{20.8}~times slower than \Metis{} and \num{6.7}~times slower than \Kahip{} for the mean execution time across all choices of $k$ and all instances. Interestingly, we find that \Algo{}'s running time is lower than \Graclus{}'s on several social network graphs and the triangle mixture instances while it produces a better solution. See the full version of the paper %
for detailed running times on some exemplary instances. 

Finally, recall that once our algorithm has computed a sparsifier, we can obtain a solution on the sparsifier for different values of $k$ without recomputing the sparsifier, which is a unique feature among the solvers tested here.
If we are interested in all seven values of $k$, e.g., and only count the time to compute the sparsifier once for \Algo{}, our experiments take \num{0.56}, \num{3.82}, and \num{1.24} times the running time to compute partitions across all graphs and values of $k$ using \Graclus{}, \Metis{}, and \Kahip{}, respectively. In particular, \Algo{} then is \SI{44}{\percent} \emph{faster} than \Graclus{}.
This demonstrates the utility of \Algo{} as a tool for exploratory data analysis. We could achieve further speedups by only computing a solution for $k=128$ and then choosing the initial subsets of edges for the other values of $k$, only performing the refinement.

\subsection{Discussion}

\Algo{} outperforms other software when computing normalized cuts on social, citation, email, and infrastructure networks and web graphs. 
It performs slightly worse on graphs arising from specific computational tasks, such as finite elements, circuit, or numerical simulations.
The graphs on which this behavior occurs tend to have degree distributions concentrated around the average degree, suggesting that they are not graphs with a scale-free structure.

In \autoref{fig:discussion} we plot the relationship between our improvement over \Graclus{} 
and the value of $\frac{\frac{1}{n} \sum_{v\in V}d_v}{\text{median}_{v \in V} d_v}$ for the non-synthetic graphs of our benchmark. This value measures how much the mean and median diverge due to outlier nodes with very high degrees. The graphs with power-law distributions tend to have a higher value on this measure, and we find that there appears to be a negative correlation, with one outlier due to instance SN7. %

\balance

\section{Conclusion}
In this work we introduced \Algo{}, a new algorithm for solving the normalized cut problem. It is based on a novel expander decomposition algorithm and to the best of our knowledge, it is the first practical application of the expander hierarchy.
\Algo{} clearly outperforms other solvers in the experimental study on social, citation, email, and infrastructure networks and web graphs, and also in the geometric mean over all instances.
It scales to instances with tens of millions of edges and can produce solutions for multiple numbers of clusters $k$ with little overhead by comparison.
We are confident that with further optimization and the use of parallelism it will be possible to scale our algorithm to even larger graphs, while further improving the solution quality, especially since computing the expander decomposition appears to be highly parallelizable.

We also believe that the expander hierarchy and our expander decomposition can be applied to other graph cut problems in the future, as the tree flow sparsifiers approximate \emph{all} cuts in the graph, and there are theoretical results that suggest this might be the case.
\Algo{} is open source software and its code is freely available on GitLab~\cite{XCutSource}.

\begin{acks}
\erclogowrapped{5\baselineskip} 
Monika Henzinger:  This project has received funding from the European Research Council (ERC) under the European Union's Horizon 2020 research and innovation programme (Grant agreement No. 101019564) and the Austrian Science Fund (FWF) grant DOI 10.55776/Z422, grant DOI 10.55776/I5982, and grant DOI 10.55776/P33775 with additional funding from the netidee SCIENCE Stiftung, 2020–2024.

Harald Räcke, Robin Münk: This project has received funding from the Deutsche Forschungsgemeinschaft (DFG, German Research Foundation) – 498605858 and 470029389.

\end{acks}

\clearpage

\apptocmd{\thebibliography}{\hbadness 4000\emergencystretch1em\relax}{}{}
\bibliographystyle{ACM-Reference-Format}
\bibliography{main}

\appendix
\renewcommand{\thefigure}{\thesection.\arabic{figure}}
\renewcommand{\thetable}{\thesection.\arabic{table}}

\section{Theoretical Analysis of the Expander Decomposition}
\label{sec:proof}
The key ingredient for the analysis is to show that it is
very unlikely that the cut procedure does not return a cut within $T$ iterations
when started on a graph with conductance less than $\phi$. Consequently, if the
algorithm does not return a cut for $T$ iterations we can declare $G$ to be a
$\phi$-expander (or actually $G\{A\}$ to be a near $6\phi$-expander), with a
small probability of error.

For this we analyze the random walks in terms of flows. Each node $v$ injects
$d_v$ units of flow of a unique commodity. This flow is distributed according
to the random walk. 
Let $F_{ij}(t)$ denote the amount of
flow from node $j$ that has reached node $i$ after $t$ steps. We define
$P_{ij}(t) \defeq \frac{F_{ij}(t)}{d_i d_j}$ for all $t$. Whenever clear from context,
we may omit the explicit indication of the round $t$.

Let $A(t) \subseteq V$ be the set of nodes in the subgraph in round $t$. We define a
natural average vector
$\mu(t) \defeq \frac{1}{\vol{A(t)}}\sum_{i \in A(t)} d_i P_i(t)$ and track the
convergence of our random walk to this stationary distribution with a potential
function:
\[ \varphi(t) \defeq \sum_{i \in A(t)} d_i \cdot \norm{P_i(t) -
  \mu(t)}^2\enspace.
\]
Intuitively, a low potential indicates that the $P(t)$ vectors have mixed well
in the set $A(t)$. The following lemma shows that a low
potential implies that $A(t)$ is a near $6\phi$-expander in the graph $G$
($G\{A(t)\}$ may not be a proper expander because the random walk may have used edges outside of $G\{A(t)\}$).
\begin{lemma}\label{lemma:near-phi-expander}
    If $\varphi(t) \leq \frac{1}{4\vol{V}^2}$ in any step $t \leq T$, then 
    $A(t)$ is a near $6\phi$-expander in $G$.
\end{lemma}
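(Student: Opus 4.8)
The plan is to prove the contrapositive-flavored statement directly: assuming $\varphi(t)$ is small, I show that every low-volume subset $X \subseteq A(t)$ has large border $|E_G(X, V \setminus X)|$. First I would fix a candidate set $X \subseteq A = A(t)$ with $\nvol(X) \le \nvol(A)/2$ and relate the flow that has crossed out of $X$ to the potential. The key observation is that if $\border(X)$ were small, then very little flow could have escaped $X$, so the commodities originating inside $X$ would still be concentrated inside $X$ after $t$ steps; but then the vectors $P_i(t)$ for $i \in X$ would look very different from $\mu(t)$ (which, being an average over all of $A$, puts mass outside $X$ too), forcing $\varphi(t)$ to be large. So the argument is: small border $\Rightarrow$ little flow leaves $X$ $\Rightarrow$ the $P_i$ are far from $\mu$ $\Rightarrow$ large potential.

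Concretely, I would set up the "escaped flow" quantity $f_t(X) \defeq \sum_{i \in X} d_i \sum_{j \notin X} (P_i(t))_j$ — morally the total amount, summed over sources $i\in X$ (weighted by $d_i$), of commodity that has left $X$ — and prove two bounds on it. The upper bound comes from the dynamics of the lazy random walk: in each of the $t \le T = 1/(12\phi)$ steps, the amount of any commodity crossing the cut $(X, V\setminus X)$ is controlled by the number of edges $\border(X)$ leaving $X$ times the maximum per-edge flow. Since $W = \big(\tfrac12 I + \tfrac12 \mathcal A D_A^{-1}\big)^{\cdots}$ is a lazy walk and $P_{ij} \le$ something like $1/\min(d_i,d_j)$ (a standard invariant for these normalized flow/walk matrices, which I'd establish by induction on $t$), one gets $f_t(X) \lesssim t \cdot \border(X) \lesssim \border(X)/\phi$. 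The lower bound comes from the potential: if $\varphi(t) \le \frac{1}{4\nvol(V)^2}$, then each $P_i(t)$ is within $\ell_2$-distance roughly $\frac{1}{2\nvol(V)\sqrt{d_i}}$ of the common average $\mu(t)$, and $\mu(t)$ — being the volume-weighted average over all of $A$ — has at least a $\nvol(A\setminus X)/\nvol(A) \ge 1/2$ fraction of its total mass sitting outside $X$; combining these, $\sum_{j\notin X}(P_i(t))_j$ is bounded below by (mass of $\mu$ outside $X$) minus (the small $\ell_2$/$\ell_1$ deviation), which after weighting by $d_i$ and summing over $i\in X$ gives $f_t(X) \gtrsim \nvol(X)$, with the $\frac{1}{4\nvol(V)^2}$ threshold precisely calibrated so the deviation term is dominated. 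Chaining the two bounds yields $\border(X) \gtrsim \phi \nvol(X)$, and tracking the constants carefully should give exactly the factor $6$: $|E_G(X,V\setminus X)| \ge 6\phi\,\nvol(X)$, which is the definition of $A(t)$ being a near $6\phi$-expander in $G$.

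The main obstacle I anticipate is the bookkeeping in the upper bound on $f_t(X)$, specifically accounting for the fact that the random walk at step $s\le t$ runs on $G\{A(s)\}$ rather than on $G$ itself, and $A(s)$ shrinks over time as sets are moved to $L$. One has to argue that flow leaving $X$ can only cross edges of $G$ that are incident to $X$ (edges internal to $X$ in $G\{A(s)\}$ never move commodity out of $X$, and self-loops added when restricting to $A(s)$ certainly don't), so the per-step leakage is genuinely governed by $\border_G(X)$ and not by any quantity that could blow up; getting the laziness factor $\tfrac12$ and the per-edge capacity bound to interact correctly so that the total over $T$ steps stays below the threshold forced by the potential is where the precise constant $12$ in $T = 1/(12\phi)$ and the $6$ in "near $6\phi$-expander" get pinned down. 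A secondary technical point is establishing and maintaining the invariant $P_{ij}(t) \le 1/\min(d_i,d_j)$ (or whatever the exact form is) under the walk update $W \gets \big(\tfrac12 I + \tfrac12 \mathcal A D_A^{-1}\big)W$ together with the projection step enforcing $\vec u \perp \vec d$; this is routine but needs care because the restriction to $A(s)$ changes the degree matrix used at each step.
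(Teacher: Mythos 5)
Your plan is essentially the paper's own proof: both interpret the walk as a multicommodity flow with congestion at most $t\le T=1/(12\phi)$, use the potential bound to force $P_{ij}\le\mu_j+\tfrac{1}{2\vol{V}}$, note that $\mu$ spreads its mass over all of $A$ so at most a $\nvol(X)/\nvol(A)\le 1/2$ fraction of the flow originating in $X$ can remain there, and then divide the resulting $\Omega(\nvol(X))$ crossing flow by the congestion $T$ to count edges. The only simplification the paper makes over your sketch is that no $\ell_2$-to-$\ell_1$ conversion or per-edge invariant like $P_{ij}\le 1/\min(d_i,d_j)$ is needed: the coordinatewise bound $|P_{ij}-\mu_j|\le\norm{P_i-\mu}\le\tfrac{1}{2\vol{V}}$ suffices, and the congestion bound follows directly since the degree distribution is stationary for the lazy walk.
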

\begin{proof}
We omit the time step $t$ to avoid notational clutter.
The random walk can be viewed as establishing a multicommodity
flow within the network. $d_iP_{ij}$ is the flow of commodity $j$ that reached
node $i$. The flow has congestion at most $t$---the number of steps of the
random walk. We take at most $T=1/(12\phi)$ steps.

Now consider a cut $S\subseteq A$, with $\nvol(S)\le\nvol(A)/2$. We have
to show that $|E(S,V\setminus S)|\ge \nvol(S)/(2T)$.
 For the potential to be less than ${1}/({4\vol{V}^2})$,
    we must have $P_{ij} \le \mu_j + \frac{1}{2 \vol{V}}$ for every node
    $i\in A$ and any $j$.
    The flow that originates in $S$ and stays in $S$ is
    \begin{equation*}
    \begin{split}%
    \sum_{j\in S}\sum_{i\in S}d_id_jP_{ij}
           &\le\sum_{j\in S}\sum_{i\in S}d_id_j\big(\mu_j+\tfrac{1}{2 \nvol(V)}\big)\\
           &\textstyle=\nvol(S)\sum_{j\in S}d_j\mu_j+\frac{\nvol(S)^2}{2\nvol(V)}\\
           &\textstyle\le \frac{\nvol(S)^2}{\nvol(A)}+\frac{\nvol(S)^2}{2\nvol(V)}\le \tfrac{3}{4}\nvol(S)\enspace.
    \end{split}
    \end{equation*}
    Here the third step follows because $d_j\mu_j\le d_j/\nvol(A)$ as $\mu_j$ is the
    fraction of the flow of commodity $j$ that stays in $A$ (normalized by
    $1/\nvol(A)$). This means that at least $1/4$ of the flow that starts in
    $S$ has to leave $S$. At the same time an equivalent amount of flow has to
    enter $S$, which means that the traffic across the edges in $E(S,V\setminus
    S)$ is at least $\nvol(S)/2$. As the congestion is only $T$, there must be at
    least $\frac{1}{2T}\nvol(S)\ge6\phi\nvol(S)$ edges across the cut.
\end{proof}

In the following we show that with constant probability during one iteration we
either return a balanced low conductance cut or the potential decreases
significantly. For this we first have to analyze by how much the potential
decreases during a random walk step.

\subsubsection*{Potential Decrease by Random Walk}

Fix a round $t$. Note that the random walk for round $t$ only takes the $t-1$ random walk steps for the graphs $G\{A(1)\}, \dots, G\{A(t-1)\}$, the step for $G\{A(t)\}$ follows in the next round.
In the following we develop an expression for how much the potential will decrease due to the random walk step in the current iteration $t$.
For this we need a technical claim, which is proven in
Section~\ref{section:deferred-proofs}.
\begin{claim}\label{claim:technical-averaging}
    Let $a_1,\dots,a_d,\mu$ be vectors of the same dimension. Then,
    \[  \textstyle
    d \big\|\tfrac{1}{d}\sum_{i} a_i  - \mu\big\|^2 - \sum_{i} \norm{a_i - \mu}^2
        = \frac{1}{d} \norm{\sum_{i} a_i}^2 - \sum_{i} \norm{a_i}^2
        \leq 0.
    \]
    Specifically, for $d=2$ the value is equal to $-\frac{1}{2}\norm{a_1 - a_2}^2$.
\end{claim}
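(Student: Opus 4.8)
The plan is to observe that the left-hand expression is in fact independent of $\mu$ and equals minus the empirical variance of the vectors $a_i$ about their mean; the equality, the inequality, and the $d=2$ formula then all fall out. Throughout, write $\bar a \defeq \tfrac1d\sum_{i} a_i$.

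\emph{Step 1 (the equality).} I would expand both sums via $\norm{v - \mu}^2 = \norm{v}^2 - 2\langle v,\mu\rangle + \norm{\mu}^2$. This gives $\sum_i\norm{a_i - \mu}^2 = \sum_i\norm{a_i}^2 - 2d\langle\bar a,\mu\rangle + d\norm{\mu}^2$ and $d\norm{\bar a - \mu}^2 = d\norm{\bar a}^2 - 2d\langle\bar a,\mu\rangle + d\norm{\mu}^2$. Subtracting, the cross terms and the $\norm{\mu}^2$ terms cancel, leaving $d\norm{\bar a}^2 - \sum_i\norm{a_i}^2$; since $d\norm{\bar a}^2 = \tfrac1d\norm{\sum_i a_i}^2$, this is exactly the claimed middle expression. (A slicker alternative: differentiate the left-hand side in $\mu$, check the gradient vanishes, conclude the expression is constant in $\mu$, and evaluate at $\mu = 0$.)

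\emph{Step 2 (the inequality).} Having established independence of $\mu$, I would simply evaluate the left-hand side at $\mu = \bar a$: the first term drops out and the expression becomes $-\sum_i\norm{a_i - \bar a}^2 \le 0$. Equivalently, one may invoke Jensen's inequality for the convex map $x \mapsto \norm{x}^2$, which yields $\norm{\bar a}^2 \le \tfrac1d\sum_i\norm{a_i}^2$, i.e.\ $\tfrac1d\norm{\sum_i a_i}^2 \le \sum_i\norm{a_i}^2$.

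\emph{Step 3 (the case $d=2$).} Here $\bar a = \tfrac12(a_1 + a_2)$, so $a_1 - \bar a = \tfrac12(a_1 - a_2)$ and $a_2 - \bar a = -\tfrac12(a_1 - a_2)$; hence $\sum_{i=1}^2\norm{a_i - \bar a}^2 = \tfrac12\norm{a_1 - a_2}^2$, and the value of the expression is $-\tfrac12\norm{a_1 - a_2}^2$.

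\emph{Main obstacle.} There is essentially no real obstacle — this is a routine identity in an inner-product space. The only subtlety worth flagging is that the entire dependence on $\mu$ (both the linear term and the $\norm{\mu}^2$ term) cancels, which is precisely what turns the quantity into a variance and makes nonpositivity immediate; a reader should be told to expect this cancellation rather than be surprised by it.
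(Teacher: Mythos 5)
Your proposal is correct and matches the paper's proof in essence: both establish the equality by expanding $\norm{v-\mu}^2$ and cancelling the $\mu$-dependent terms, and both reduce the inequality to $\tfrac{1}{d}\norm{\sum_i a_i}^2 \le \sum_i\norm{a_i}^2$ (the paper via Cauchy--Schwarz on the middle expression, you equivalently via Jensen or by evaluating at $\mu=\bar a$). Your derivation of the $d=2$ case through the variance form $\sum_i\norm{a_i-\bar a}^2$ is a cosmetic variation on the paper's direct expansion of $\tfrac12\norm{a_1+a_2}^2-\norm{a_1}^2-\norm{a_2}^2$; no gaps.
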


\noindent
Let $\delta(t)\defeq(\varphi(t)-\varphi(t+1))/\varphi(t)$ be the relative factor by which the potential of round $t$ decreases after the random walk step in graph $G\{A(t)\}$.

\begin{lemma}\label{lem:potdecrease}
The relative potential decrease due to the random walk step in iteration $t$ is at least 
\begin{equation*}
\delta(t):=\frac{1}{2}\frac{\sum_{\{i,j\}\in G\{A(t)\}}\|P_i(t)-P_j(t)\|^2_2}{\sum_{i\in A(t)}d_i\|P_i(t)-\mu(t)\|_2^2}\enspace.
\end{equation*}
\end{lemma}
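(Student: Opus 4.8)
The plan is to treat one random-walk step in $G\{A(t)\}$ as a convex-averaging operation on the rows $P_i(t)$ and to bound the potential drop with Claim~\ref{claim:technical-averaging}. First I would record how the rows transform: since $P_{ij}=F_{ij}/(d_id_j)$ is degree-normalized on both indices, the step $W\mapsto(\tfrac12 I+\tfrac12\mathcal A D_A^{-1})W$ acts on the rows as $P_i(t+1)=\tfrac12 P_i(t)+\tfrac1{2d_i}\sum_k\mathcal A_{ik}P_k(t)$. Using that degrees are preserved in $G\{A(t)\}$, i.e., $\sum_{i\in A(t)}\mathcal A_{ik}=d_k$, one checks in one line that the weighted mean $\mu(t)$ is fixed by this step. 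Hence, writing $Q_i:=P_i(t)-\mu(t)$, we have $\varphi(t)=\sum_i d_i\|Q_i\|^2$ and $\varphi(t+1)=\sum_i d_i\|Q_i'\|^2$ with $Q_i'=\tfrac12 Q_i+\tfrac1{2d_i}\sum_k\mathcal A_{ik}Q_k$.

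Next I would view $Q_i'$ as the arithmetic mean of $2d_i$ vectors: $d_i$ ``lazy'' copies of $Q_i$, plus one copy of $Q_k$ for each of the $d_i$ edge-endpoints at $i$ (self-loops contributing a further copy of $Q_i$). The exact identity behind Claim~\ref{claim:technical-averaging}, namely $\tfrac1d\|\sum_\ell a_\ell\|^2-\sum_\ell\|a_\ell\|^2=-\tfrac1d\sum_{\ell<m}\|a_\ell-a_m\|^2$, applied while keeping only the lazy-vs-edge cross pairs in the pair-sum, gives $2d_i\|Q_i'\|^2\le d_i\|Q_i\|^2+\sum_k\mathcal A_{ik}\|Q_k\|^2-\tfrac12\sum_k\mathcal A_{ik}\|Q_i-Q_k\|^2$. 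Summing over $i\in A(t)$ and using $\sum_i\mathcal A_{ik}=d_k$ once more, the first two terms each contribute $\tfrac12\varphi(t)$, while the last contributes $-\tfrac12\sum_{\{i,j\}\in G\{A(t)\}}\|Q_i-Q_j\|^2$ (self-loop terms vanish, and $Q_i-Q_j=P_i(t)-P_j(t)$). This yields $\varphi(t)-\varphi(t+1)\ge\tfrac12\sum_{\{i,j\}\in G\{A(t)\}}\|P_i(t)-P_j(t)\|^2$, and dividing through by $\varphi(t)=\sum_{i\in A(t)}d_i\|P_i(t)-\mu(t)\|^2$ is exactly the claimed lower bound on $\delta(t)$.

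The routine-but-delicate points are: getting the row update right (degree-normalized on both sides, which is what makes $\mu(t)$ invariant and produces a clean Laplacian-type cross term), counting the $2d_i$ averaged vectors so that self-loops are handled consistently, and remembering that Claim~\ref{claim:technical-averaging} only needs a lower bound on the pair-sum, so discarding edge-vs-edge and self-loop pairs is harmless. I do not expect a genuine obstacle here. If one prefers to bypass the self-loop bookkeeping, the same bound follows spectrally by setting $Y:=D_A^{1/2}\big(P(t)-\mathbf{1}\,\mu(t)^{\top}\big)$, noting the step acts on $Y$ as the symmetric operator $\tilde M=\tfrac12 I+\tfrac12 D_A^{-1/2}\mathcal A D_A^{-1/2}$ whose spectrum lies in $[0,1]$, so that $I-\tilde M^2\succeq I-\tilde M$, and writing $\varphi(t)-\varphi(t+1)=\langle Y,(I-\tilde M^2)Y\rangle_F\ge\langle Y,(I-\tilde M)Y\rangle_F=\tfrac12\langle Y,\tilde L Y\rangle_F=\tfrac12\sum_{\{i,j\}\in G\{A(t)\}}\|P_i(t)-P_j(t)\|^2$, where $\tilde L$ is the normalized Laplacian of $G\{A(t)\}$. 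Given that Claim~\ref{claim:technical-averaging} is already in place, I would present the elementary averaging version.
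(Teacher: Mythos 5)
Your proposal is correct and follows essentially the same route as the paper: both interpret the lazy walk step $P_i \mapsto \tfrac12 P_i + \tfrac1{2d_i}\sum_k \mathcal{A}_{ik}P_k$ as a convex averaging and bound the potential drop via Claim~\ref{claim:technical-averaging}, discarding the non-negative leftover terms. The only differences are bookkeeping (the paper stages the averaging through a sub-vertex gadget, edge pairs first and then within each vertex, whereas you apply the pairwise identity once per vertex to a multiset of $2d_i$ vectors and keep only the lazy-vs-edge cross pairs) plus your explicit check that $\mu(t)$ is invariant, which the paper leaves implicit.
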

\begin{proof}
To simplify the analysis of a random walk step we view each vertex $v$ of
$G\{A(t)\}$ as consisting of $d_v$ sub-vertices as follows. For each edge
$\{x,y\}$ we introduce two sub-vertices -- one at $x$ and one at $y$, that
are connected by an edge. In this way, every super-vertex $i$ from $A(t)$
receives $d_i$ sub-vertices and every sub-vertex has a unique neighbour.

During a random walk step, each sub-vertex $x$ first takes a
copy of the flow vector of its super-vertex. Then the vectors are averaged
along every edge, and finally a super-vertex computes the average of the
vector of its sub-vertices.

\def\sup{\operatorname{sup}}
\def\sub{\operatorname{sub}}
We can express the potential by either summing over super-vertices or
sub-vertices:
\begin{equation*}
\varphi(t)=\sum_{i\in A(t)}d_i\|P_i-\mu\|_2^2 = \sum_x\|P_{\sup(x)}-\mu\|_2^2\enspace,
\end{equation*}
where $\operatorname{sup}(x)\in A(t)$ is the super-vertex for sub-vertex $x$.
Averaging the vectors between two sub-vertices $x$ and $y$ changes the potential
by
\begin{equation}\label{eqn:firstavg}
\begin{split}
2&\|\tfrac{1}{2}(P_{\sup(x)}+P_{\sup(y)})-\mu\|_2^2\\
&-\|P_{\sup(x)}-\mu\|_2^2-\|P_{\sup(y)}-\mu\|_2^2\enspace,
\end{split}
\end{equation}
which is equal to $-\frac{1}{2}\|P_{\sup(x)}-P_{\sup(y)}\|_2^2$ by
Claim~\ref{claim:technical-averaging}. Let $\psi$ denote the potential
after averaging along edges but before averaging
among sub-vertices. Similarly, let for a sub-node $x$, $Q_x$ denote the 
vector of $x$ at this point.
Summing Equation~\ref{eqn:firstavg} over all edges gives 
\begin{equation*}\textstyle
\psi-\varphi(t)\le\sum_{\{x,y\}\in E(A(t))} -\tfrac{1}{2}\|P_{\sup(x)}-P_{\sup(y)}\|_2^2
\end{equation*}
and
\begin{equation*}
\begin{split}
    \psi&=\sum_x\|Q_x-\mu\|_2^2
    =\sum_{i\in A(t)}\sum_{x\in\sub(i)}\|Q_x-\mu\|_2^2 \\
    &\ge \textstyle\sum_{i\in A(t)}d_i\|\tfrac{1}{d_i}\!\!\textstyle\sum_{x\in\sub(i)}Q_x-\mu\|_2^2\enspace,
\end{split}
\end{equation*}
where the inequality also follows from Claim~\ref{claim:technical-averaging}.
This means averaging among edges decreased the potential by
$\frac{1}{2}\sum_{\{i,j\}}\|P_i-P_j\|_2^2$ and averaging among sub-nodes
does not increase it again. Hence, the lemma follows.
\end{proof}

\subsubsection*{Projected Potential Decrease}
The algorithm does not maintain the $P_i$ vectors or the potential explicitly,
as that would be computationally infeasible
Rather it operates on their
projections onto some random vector $r$. The entries in the vector $u$ of the algorithm are
actually $u_i = (P_i - \mu)^\top r$.
This follows since performing the random walk and then projecting the resulting $P_i$ vectors onto $r$ is equivalent to first projecting onto $r$ and then running the random walk on the vector of projections. Ultimately, subtracting the weighted average of the $u$ values corresponds to subtracting $\mu$ from the  $P_i$ vectors before the projection.

Consider the quotient
\begin{equation*}
R(u)\defeq \frac{\sum_{\{i, j\} \in E(A(t))} (u_i - u_j)^2}{\sum_{i \in A(t)} d_i u_i^2}\enspace,
\end{equation*}
and compare it to the bound $\delta(t)$ for the relative potential decrease in
Lemma~\ref{lem:potdecrease}.
Up to a factor of 1/2, $R(u)$ is obtained by individually performing a random projection on
each vector in the expression for $\delta(t)$. Using properties of random
projections, we will show that $R(u)$ is a good indicator for the potential
decrease $\delta(t)$.%

\begin{lemma}[Projection Lemma]\label{lem:projection}
    Let $v_1,\dots,v_n$ be a collection of $d$-dimensional vectors. Let $r$
    denote a random $d$-dimensional vector
    with each coordinate sampled independently from a Gaussian distribution
    $\mathcal{N}(0,1/d)$. Then
    \begin{enumerate}
        \item $\Pr[\exists i,j: (v_i-v_j)^\top r \ge 11\log n\cdot \|v_i-v_j\|^2/d]\le 1/n$ \label{ProC}
        \item $\Pr[\sum_i(v_i ^\top r)^2 \geq \frac{1}{20\log
          n}\sum_i\|v_i\|^2/d] \ge 1/2$ for $n\ge 8$. \label{ProD}
    \end{enumerate}
\end{lemma}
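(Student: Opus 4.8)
The plan is to establish the two statements by standard Gaussian-projection concentration, handled separately. Throughout, write $w_{ij} \defeq v_i - v_j$ for part~\eqref{ProC} and recall the basic fact that for a fixed vector $x \in \mathbb{R}^d$ and $r$ with i.i.d.\ $\mathcal{N}(0,1/d)$ entries, the scalar $x^\top r$ is distributed as $\mathcal{N}(0, \norm{x}^2/d)$. So each quantity appearing in the lemma is (a sum of squares of) a centered Gaussian with an explicitly known variance, and everything reduces to tail bounds for Gaussians and for sums of their squares.

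\emph{Part~\eqref{ProC}.} Fix a pair $i,j$. Then $w_{ij}^\top r \sim \mathcal{N}(0,\sigma^2)$ with $\sigma^2 = \norm{w_{ij}}^2/d$, so $w_{ij}^\top r = \sigma Z$ for a standard normal $Z$. I would use the Gaussian tail bound $\Pr[Z \ge s] \le e^{-s^2/2}$ and pick $s$ so that $e^{-s^2/2} \le 1/n^3$, i.e.\ $s = \sqrt{6\ln n}$; then $\Pr[w_{ij}^\top r \ge \sqrt{6\ln n}\,\sigma] \le n^{-3}$. Note $\sqrt{6\ln n}\,\sigma = \sqrt{6\ln n}\cdot\norm{w_{ij}}/\sqrt d$; a small subtlety is that the lemma's right-hand side is $11\log n \cdot \norm{v_i-v_j}^2/d$, i.e.\ it scales with the \emph{squared} norm over $d$, not with $\norm{w_{ij}}/\sqrt d$. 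One checks that whenever $\sqrt{6\ln n}\,\norm{w_{ij}}/\sqrt d \ge 11\log n\,\norm{w_{ij}}^2/d$ fails there is nothing to prove, and otherwise the event $\{w_{ij}^\top r \ge 11\log n\,\norm{w_{ij}}^2/d\}$ is contained in $\{w_{ij}^\top r \ge \sqrt{6\ln n}\,\sigma\}$, whose probability is at most $n^{-3}$; alternatively, absorb constants so that $11\log n$ dominates. A union bound over the at most $\binom{n}{2} \le n^2$ pairs $(i,j)$ then gives failure probability at most $n^2 \cdot n^{-3} = 1/n$, which is exactly statement~\eqref{ProC}. (The precise constant $11$ is just chosen generously enough to make the union bound go through after reconciling the $\log$ vs.\ $\ln$ convention and the scaling mismatch above; I would not belabor it.)

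\emph{Part~\eqref{ProD}.} Here I want a \emph{lower} bound on $X \defeq \sum_i (v_i^\top r)^2$ that holds with probability at least $1/2$. Compute the mean: $\EX[X] = \sum_i \EX[(v_i^\top r)^2] = \sum_i \norm{v_i}^2/d \eqdef S/d$, where $S \defeq \sum_i \norm{v_i}^2$. The target is $\Pr[X \ge \frac{1}{20\log n}\cdot S/d] \ge \frac12$, i.e.\ $X$ should exceed a $\frac{1}{20\log n}$ fraction of its own mean with probability $\ge 1/2$ — a fairly weak anti-concentration statement, so a second-moment / Paley–Zygmund argument is the natural route. For Paley–Zygmund I need $\EX[X^2]$. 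Expanding, $\EX[X^2] = \sum_{i,i'} \EX[(v_i^\top r)^2 (v_{i'}^\top r)^2]$; using that for jointly Gaussian $(g,h)$ one has $\EX[g^2 h^2] = \EX[g^2]\EX[h^2] + 2(\EX[gh])^2$, this equals $\big(\sum_i \norm{v_i}^2/d\big)^2 + 2\sum_{i,i'} (v_i^\top v_{i'})^2/d^2 = (S/d)^2 + \frac{2}{d^2}\norm{M}_F^2$ where $M = \sum_i v_i v_i^\top$ (equivalently $\sum_{i,i'}(v_i^\top v_{i'})^2 = \norm{M}_F^2 \le (\operatorname{tr} M)^2 = S^2$ in the worst case, but also $\le \norm{M}_2 \cdot \operatorname{tr} M$). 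The crude bound $\norm{M}_F^2 \le S^2$ gives $\EX[X^2] \le 3(S/d)^2 = 3(\EX X)^2$, so the normalized second moment is $O(1)$ but not $O(1/\log n)$. That is \emph{not} good enough to conclude a lower bound at the $\frac{1}{20\log n}$ scale directly via Paley–Zygmund in one shot with a clean constant — and this is the step I expect to be the main obstacle.

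\textbf{Main obstacle and how I would get around it.} The difficulty is that a single random direction $r$ genuinely \emph{can} be nearly orthogonal to the dominant directions of the $v_i$'s, so $X$ really can be much smaller than its mean; the $\frac{1}{20\log n}$ slack is there precisely to absorb this, and the proof must exploit that a $\frac{1}{20\log n}$-fraction failure is a rare event. The clean fix: I would instead argue via the \emph{maximum} over the ensemble rather than the generic behavior — but since we only have one $r$, the honest route is to bound $\Pr[X < \frac{1}{20\log n}(S/d)]$ by splitting $S$ according to the eigen-directions of $M$. Diagonalize $M = \sum_\ell \lambda_\ell q_\ell q_\ell^\top$ with $\sum_\ell \lambda_\ell = S$; then $X = \sum_\ell \lambda_\ell (q_\ell^\top r)^2 \cdot d / d$... more precisely $X = r^\top M r = \sum_\ell \lambda_\ell g_\ell^2$ where $g_\ell \defeq q_\ell^\top r$ are i.i.d.\ $\mathcal{N}(0,1/d)$. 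So $X = \frac1d \sum_\ell \lambda_\ell Z_\ell^2$ with $Z_\ell$ i.i.d.\ standard normal, and $\EX X = S/d$. Now $\Pr[X < \frac{1}{20\log n}\cdot \frac{S}{d}] = \Pr[\sum_\ell \lambda_\ell Z_\ell^2 < \frac{S}{20\log n}]$. Let $\lambda_{\max}$ be the largest weight; since $\sum_\ell \lambda_\ell = S$, at least one term has $\lambda_\ell \ge S/d$ is false in general, but $\lambda_{\max} \ge S/(\text{rank})$, and more usefully: there is a single index $\ell^*$ with $\lambda_{\ell^*} = \lambda_{\max} \ge S/d'$ where $d' \le d$ is the rank. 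If $\lambda_{\max} \ge S/(20\log n)$ is too strong to assume, instead bound below by the single largest term: $\Pr[\sum_\ell \lambda_\ell Z_\ell^2 < \frac{S}{20\log n}] \le \Pr[\lambda_{\max} Z_{\ell^*}^2 < \frac{S}{20\log n}] = \Pr[Z_{\ell^*}^2 < \frac{S}{20\lambda_{\max}\log n}]$. In the worst case $\lambda_{\max}$ could be as small as $S/n$ (flat spectrum), making the threshold $\frac{n}{20\log n}$ — hopeless for this term alone. So the genuinely correct argument must use \emph{many} terms when the spectrum is flat: when no single $\lambda_\ell$ dominates, $\sum_\ell \lambda_\ell Z_\ell^2$ concentrates around $S/d$ by the law of large numbers (Bernstein for sub-exponentials), so it is $\ge S/(2d)$ — hence certainly $\ge S/(20\log n \cdot d)$ for $n \ge 8$... wait, dimensions: $X$'s target is $\frac{1}{20\log n}\cdot\frac Sd$, and concentration gives $X \ge \frac{S}{2d}$, which beats $\frac{S}{20 d\log n}$. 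Good — so in the flat-spectrum regime we win by concentration, and in the spiky regime we win because one heavy $\lambda_{\ell^*}$ carries a constant fraction of $S$ and $\Pr[Z^2 \ge c] \ge 1/2$ for small constant $c$. I would formalize this as a single clean statement: for \emph{any} weights $\lambda_\ell \ge 0$ with $\sum \lambda_\ell = S$ and $Z_\ell$ i.i.d.\ standard normal, $\Pr[\sum_\ell \lambda_\ell Z_\ell^2 \ge \frac{S}{20\log n}] \ge \frac12$ — proved by the heavy/light split on whether $\max_\ell \lambda_\ell \ge S/(10\log n)$: if yes, restrict to that one term and use $\Pr[Z^2 \ge 1/10] \ge 1/2$ after checking $\frac{S}{20\log n} \le \frac{\lambda_{\max}}{10}$... adjust constants; if no, then $\sum \lambda_\ell^2 \le \frac{S}{10\log n}\sum\lambda_\ell = \frac{S^2}{10\log n}$ is small, so $\operatorname{Var}(\sum \lambda_\ell Z_\ell^2) = 2\sum\lambda_\ell^2 \le \frac{S^2}{5\log n}$, and Chebyshev gives $\Pr[\sum \lambda_\ell Z_\ell^2 < S/2] \le \frac{\operatorname{Var}}{(S/2)^2} = \frac{8}{5\log n} \le 1/2$ for $n \ge 8$, so with probability $\ge 1/2$ it exceeds $S/2 \ge \frac{S}{20\log n}$. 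Pushing $X = r^\top M r$ back, this yields exactly statement~\eqref{ProD}. The constant $20$ and the threshold $n \ge 8$ are exactly what make both branches close with room to spare; I would not optimize them further.
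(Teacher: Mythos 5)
Your part~(1) is, at its core, the same argument as the paper's: a per-pair Gaussian tail bound at the $n^{-3}$ level followed by a union bound over $n^2$ pairs (the paper routes this through the Laurent--Massart $\chi^2$ tail $\Pr[(v^\top r)^2\ge\alpha\|v\|^2/d]\le e^{-\alpha/5}$ with $\alpha=15\ln n$ rather than the plain Gaussian tail, but that is cosmetic). The ``scaling mismatch'' you noticed is real, but your resolution of it does not work: the statement with an unsquared left-hand side is not scale-invariant and is genuinely false (rescale all $v_i$ by $\epsilon\to 0$ and the event becomes almost sure), and your two-case patch has the containment running the wrong way --- when $11\log n\,\|w\|^2/d < \sqrt{6\ln n}\,\|w\|/\sqrt d$ the event you want to bound is the \emph{larger} one, so nothing is gained. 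The correct reading, confirmed by how the lemma is later invoked (the good-iteration condition $(u_i-u_j)^2\le\tfrac{11\log n}{n}\|P_i-P_j\|^2$) and by the paper's own proof, is that the left-hand side should be $((v_i-v_j)^\top r)^2$; with that fix your tail bound already yields the claim (indeed with constant $6\ln n<11\log_2 n$), so you should simply prove the squared version rather than contort around the typo.

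For part~(2) you take a genuinely different route from the paper, and your reduction is actually the more careful one. Both proofs rest on writing $\sum_i(v_i^\top r)^2$ as a weighted sum $\tfrac1d\sum_\ell\lambda_\ell Z_\ell^2$ of independent $\chi^2_1$ variables with $\sum_\ell\lambda_\ell=\sum_i\|v_i\|^2$; the paper does this coordinatewise via the slice vectors $x_j$, an identity that only holds after a WLOG rotation into the eigenbasis of $M=\sum_i v_iv_i^\top$ (which the paper leaves implicit), whereas your $r^\top Mr$ diagonalization makes this step airtight. From there the paper buckets the weights dyadically into $\log_2 n$ classes, selects the heaviest class, and invokes the median of a $\chi^2_{n_{\mathcal R}}$ distribution; you instead split on whether $\lambda_{\max}$ is heavy and use a single-coordinate quantile in one branch and Chebyshev in the other. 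Your scheme is sound in outline but the constants do not close as written: in the spiky branch, threshold $\lambda_{\max}\ge S/(10\log n)$ combined with $\Pr[Z^2\ge 1/10]\ge 1/2$ only delivers $S/(100\log n)$, short of the target $S/(20\log n)$; and in the flat branch Chebyshev gives failure probability $\tfrac{8}{5\log_2 n}$, which at the boundary case $n=8$ equals $8/15>1/2$. Both issues are repairable by retuning the case-split threshold (e.g.\ splitting at $\lambda_{\max}\ge S/(2\log n)$ and using the $\chi^2_1$ median $\approx 0.455$), but as stated the proof does not cover all $n\ge 8$. The paper's dyadic bucketing avoids this boundary fragility because the heaviest class always carries a $\tfrac{1}{8\log_2 n}$ fraction of the mass and the $\chi^2$ median bound $n_{\mathcal R}(1-2/(9n_{\mathcal R}))\ge 0.4\,n_{\mathcal R}$ holds uniformly.
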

We call an iteration \emph{good} if the vector of projections $u$ retains sufficient information of the higher dimensional $P_i$ vectors. Formally, we require the following.

\begin{definition}
An iteration $t$ of the algorithm is \emph{good} if
\begin{itemize}
    \item $\sum_{i \in A(t)} d_i u_i^2 \geq \tfrac{1}{20 \, n\log{n}} \cdot \sum_id_i\|P_i-\mu\|^2 \quad$ and
    \item $(u_i - u_j)^2 \leq \tfrac{11\log{n}}{n} \norm{P_i - P_j}^2 \quad \forall\{i,j \} \in E(A(t))$.
\end{itemize}
\end{definition}

From Property~\ref{ProC} and Property~\ref{ProD} of the Projection Lemma we can directly infer the following.

\begin{claim}\label{claim:round-good}
    An iteration is good with probability at least $1/4$ for $n\ge 4$.
\end{claim}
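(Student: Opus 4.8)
The plan is to combine the two probabilistic statements of the Projection Lemma via a union bound, but the union bound alone gives only $1/2 + 1/2 > 1$ worth of failure probability, so I first have to sharpen one of the two estimates. First I would observe that Property~\ref{ProD} is the only statement involving a sum over all $i$, and that in fact its proof (an anti-concentration / second-moment argument for the random variable $\sum_i (v_i^\top r)^2$, whose expectation is $\tfrac1d\sum_i\|v_i\|^2$) actually yields that the displayed event holds with probability \emph{bounded well away from $0$}; a clean way to package this is to note that the complement event has probability at most $3/4$ (indeed the proof gives something like a constant strictly less than $1$, and in the worst reading one can take $3/4$). So the first real step is: restate Property~\ref{ProD} in the form $\Pr[\text{good-sum event}] \ge 1/4$, or more carefully track the constant from its proof so that we may write $\Pr[\neg(\text{sum event})] \le 3/4$.

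Next I would set up the union bound explicitly. Let $E_1$ be the event that $(u_i-u_j)^2 \le \tfrac{11\log n}{n}\|P_i-P_j\|^2$ simultaneously for all edges $\{i,j\}\in E(A(t))$, and $E_2$ the event that $\sum_{i\in A(t)} d_i u_i^2 \ge \tfrac{1}{20\,n\log n}\sum_i d_i\|P_i-\mu\|^2$. An iteration is good precisely when $E_1 \cap E_2$ occurs. By Property~\ref{ProC} of the Projection Lemma, applied with the $v$-vectors equal to the $P_i$ and with dimension $d = n$ (this matches the $11\log n/n$ factor in the definition of \enquote{good}), we get $\Pr[\neg E_1] \le 1/n \le 1/4$ for $n \ge 4$. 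By (the sharpened) Property~\ref{ProD}, applied with $v$-vectors $P_i - \mu$, we get $\Pr[E_2] \ge 1/2$ when $n \ge 8$ — but to make the arithmetic close I instead argue $\Pr[E_2]$ is at least a constant $c$ and bound $\Pr[\neg E_2] \le 1 - c$. Then $\Pr[\text{good}] = \Pr[E_1\cap E_2] \ge 1 - \Pr[\neg E_1] - \Pr[\neg E_2] \ge 1 - \tfrac14 - (1-c)$, and one needs $c \ge 1/2$ for this to reach $1/4$.

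Here lies the main obstacle: with the Projection Lemma stated exactly as in the excerpt — $\Pr[E_2]\ge 1/2$ and $\Pr[\neg E_1]\le 1/n$ — the naive union bound gives $\Pr[\text{good}] \ge 1 - 1/n - 1/2 = 1/2 - 1/n$, which for $n \ge 4$ is already $\ge 1/4$. So in fact the claimed bound follows directly, and the apparent difficulty dissolves: the clean argument is simply $\Pr[\text{good}] \ge 1 - \Pr[\neg E_1] - \Pr[\neg E_2] \ge 1 - \tfrac1n - \tfrac12 \ge \tfrac14$ for all $n \ge 4$. The only care needed is (i) checking that the hypotheses of the Projection Lemma are met — that the $u_i$ genuinely arise as $u_i = (P_i-\mu)^\top r$ for $r$ Gaussian with coordinates $\mathcal N(0,1/n)$, which was already argued in the paragraph preceding the lemma; (ii) matching the dimension parameter so the $1/n$ and $11\log n/n$ and $20\log n$ factors line up with the definition of \enquote{good}; and (iii) noting the requirement $n \ge 8$ from Property~\ref{ProD}, which subsumes $n\ge 4$, so the claim should really be stated for $n \ge 8$ or one should observe that tiny graphs are handled trivially. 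I would finish by remarking that the two events are over the same random vector $r$, so no independence is needed — only the union bound — and that is the whole proof.
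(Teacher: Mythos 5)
Your final argument --- the union bound $\Pr[\text{good}] \ge 1 - \Pr[\neg E_1] - \Pr[\neg E_2] \ge 1 - \tfrac{1}{n} - \tfrac{1}{2} \ge \tfrac{1}{4}$ for $n \ge 4$ --- is correct and is exactly the argument the paper intends (the paper simply states that the claim follows directly from Properties~\ref{ProC} and~\ref{ProD} of the Projection Lemma). Your side observation that Property~\ref{ProD} is stated for $n \ge 8$ while the claim is stated for $n \ge 4$ is a fair catch of a small inconsistency in the paper, but the core of your proof matches the intended one.
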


The next lemma shows that 
in a good round a large  $R(u)$ value implies a large $\delta(t)$ value, i.e.,
a large potential decrease.
\begin{lemma}\label{lemma:Ru-less-than-delta}
In a good round $t$ we have
     $R(u) \leq 440\log^2(n) \cdot \delta(t) $.
\end{lemma}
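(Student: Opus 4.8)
The statement should follow by directly substituting the two defining inequalities of a good round into the expression for $R(u)$, so I would not expect any genuine obstacle — the only care needed is in tracking the factors of $n$ (they must cancel) and the numerical constants.

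First I would write out $R(u)=\dfrac{\sum_{\{i,j\}\in E(A(t))}(u_i-u_j)^2}{\sum_{i\in A(t)}d_iu_i^2}$ and handle the numerator and denominator separately. For the numerator, the second bullet of the definition of a good round gives $(u_i-u_j)^2\le\frac{11\log n}{n}\,\|P_i-P_j\|^2$ for every $\{i,j\}\in E(A(t))$, so summing over the edges (self-loops contribute $0$ on both sides, so it is irrelevant whether one sums over $E(A(t))$ or over the self-looped graph $G\{A(t)\}$) yields
\[
\sum_{\{i,j\}\in E(A(t))}(u_i-u_j)^2\;\le\;\frac{11\log n}{n}\sum_{\{i,j\}\in G\{A(t)\}}\|P_i-P_j\|^2 .
\]
For the denominator, the first bullet gives $\sum_{i\in A(t)}d_iu_i^2\ge\frac{1}{20\,n\log n}\sum_{i\in A(t)}d_i\|P_i-\mu\|^2$.

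Dividing the two estimates, the two factors of $n$ cancel and I obtain
\[
R(u)\;\le\;\frac{\frac{11\log n}{n}\sum_{\{i,j\}}\|P_i-P_j\|^2}{\frac{1}{20\,n\log n}\sum_i d_i\|P_i-\mu\|^2}
\;=\;220\log^2(n)\cdot\frac{\sum_{\{i,j\}\in G\{A(t)\}}\|P_i-P_j\|^2}{\sum_{i\in A(t)}d_i\|P_i-\mu\|^2}.
\]
Finally I would recognize that, by the definition of $\delta(t)$ in Lemma~\ref{lem:potdecrease}, the last quotient equals exactly $2\,\delta(t)$, so $R(u)\le 440\log^2(n)\cdot\delta(t)$, as claimed. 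The constant arises simply as $11\cdot 20\cdot 2=440$.

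If I had to name a "hard part," it is only the bookkeeping: making sure the edge index set in the numerator of $R(u)$ matches the one appearing in $\delta(t)$ (which it does, because added self-loops contribute zero to both sums), and confirming that the $\frac1n$ factors in the two Projection-Lemma-derived bounds are placed so as to cancel rather than compound. No further input beyond the definition of a good round and the formula for $\delta(t)$ is needed.
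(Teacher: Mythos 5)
Your proof is correct and is essentially identical to the paper's: both simply substitute the two defining inequalities of a good round into the ratio and observe that the factors of $n$ cancel, yielding the constant $11\cdot 20\cdot 2=440$. Your remark that self-loops contribute zero to both edge sums is a fair (and accurate) piece of bookkeeping that the paper leaves implicit.
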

\begin{proof}
\begin{equation*}
\begin{split}
\delta(t)
&=\frac{1}{2}\frac{\sum_{\{i,j\}\in G\{A(t)\}}\|P_i-P_j\|^2_2}{\sum_{i\in A(t)}d_i\|P_i-\mu\|_2^2} \\
&\ge \frac{1}{2}\frac{\frac{n}{11\log n}\sum_{\{i,j\}\in G\{A(t)\}}(u_i-u_j)^2}{20\,n\log n\sum_{i\in A(t)}d_iu_i^2}
\end{split}
\end{equation*}
by the definition of $\delta(t)$ and the definition of a good round.
\end{proof}

\noindent
{\bfseries For the further analysis we will always assume that we are in a good round.}
\subsubsection*{Finding a Cut}
With the above lemma we have established, that a large $R(u)$ value ensures the
next step of the random walk reduces the potential a lot. Now we show that a
small $R(u)$ value ensures that the algorithm finds a low conductance cut.

Indeed the quotient $R(u)$ (the Rayleigh quotient of $u$ using the graph
Laplacian), and its analysis lies at the heart of the proof  
for the celebrated Cheeger inequality. It states that a small $R(u)$ value implies the existence
of a low conductance sweep cut.

\begin{lemma}[Cheeger]\label{lemma:cheeger}
    For any $x \in \mathbb{R}^n$, with $x \perp \vec{d}$ there is a value $c$, s.t. there is a non-trivial set $S_c = \{ i \in V: x_i \leq c \}$ with conductance $\Phi_G(S_c) \leq \sqrt{2R(x)}$.
\end{lemma}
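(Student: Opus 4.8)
The plan is to prove the classical (hard direction of the) Cheeger inequality via the standard sweep-cut argument. I will assume $x \perp \vec{d}$ and show that the best sweep cut $S_c$ has conductance $\Phi_G(S_c) \le \sqrt{2 R(x)}$. First I would reduce to the case where we work with a \emph{one-sided} sweep: by shifting $x$ by a suitable multiple of $\vec{1}$ we may split the coordinates around a value $t_0$ chosen so that both $\{i : x_i < t_0\}$ and $\{i : x_i > t_0\}$ have volume at most $\nvol(V)/2$ (a ``median by volume'' threshold). Writing $x = x^+ - x^-$ with $x^+, x^-$ supported on the two sides of $t_0$, a short computation shows that $R(x)$ dominates $\min(R(x^+), R(x^-))$ where now each $R(x^{\pm})$ is a Rayleigh quotient over a vector supported on a set of volume $\le \nvol(V)/2$; note this uses $(u_i-u_j)^2 \le (u_i^+ - u_j^+)^2 + (u_i^- - u_j^-)^2$ edge-by-edge and additivity of the denominator. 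So it suffices to prove: if $y \ge 0$ is supported on a set of volume $\le \nvol(V)/2$, then some level set $S_c = \{i : y_i > c\}$ (equivalently $\{i : y_i^2 > c\}$ after relabeling) satisfies $\border(S_c) \le \sqrt{2 R(y)}\,\nvol(S_c)$.

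The core step is the ``probabilistic sweep'' / integration argument. Normalize so $\max_i y_i = 1$, pick a threshold $\tau$ uniformly at random in $[0,1]$ — or better, let $c = \tau$ with density proportional to $2\tau\,d\tau$ on $[0,1]$, i.e.\ consider the level sets $S_\tau = \{ i : y_i^2 > \tau \}$ for $\tau$ uniform in $[0,1]$. Then $\EX[\nvol(S_\tau)] = \sum_i d_i y_i^2$, and for the cut edges one bounds $\EX[\border(S_\tau)] = \sum_{\{i,j\}\in E} |y_i^2 - y_j^2| = \sum_{\{i,j\}} |y_i - y_j|\,|y_i + y_j|$. Applying Cauchy--Schwarz to the last sum against the edge set gives $\EX[\border(S_\tau)] \le \sqrt{\sum_{\{i,j\}}(y_i-y_j)^2}\cdot\sqrt{\sum_{\{i,j\}}(y_i+y_j)^2}$, and then $(y_i+y_j)^2 \le 2(y_i^2+y_j^2)$ lets us bound the second factor by $\sqrt{2\sum_i d_i y_i^2}$. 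Combining, $\EX[\border(S_\tau)] \le \sqrt{2 R(y)}\cdot \EX[\nvol(S_\tau)]$. By an averaging (linearity-of-expectation) argument there must exist a specific $\tau$ — hence a specific threshold $c$ — for which $\border(S_c) \le \sqrt{2 R(y)}\,\nvol(S_c)$, i.e.\ $\Phi_G(S_c) \le \sqrt{2R(y)}$; and since every coordinate of $y$ corresponds to a coordinate of the original $x$ with the same relative order, $S_c$ is a sweep cut of $x$ of the claimed form. Non-triviality ($S_c \ne \emptyset, V$) follows because $y$ is supported on a set of volume $\le \nvol(V)/2$ and is not identically zero (if $x$ were constant, $x\perp\vec{d}$ forces $x=0$ and the statement is vacuous or trivial; handle this degenerate case separately).

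The main obstacle is bookkeeping around the threshold reduction: making precise that restricting to the volume-median side only loses a constant (and in fact loses nothing in the final bound because $R(x) \ge \min(R(x^+),R(x^-))$), and verifying that the resulting one-sided level sets are genuinely sweep cuts of the original vector $x$ of the stated form $\{i : x_i \le c\}$. I also need to be slightly careful that the union of edge contributions in the split $x = x^+ - x^-$ is handled correctly for the single ``straddling'' edge set between the two sides — there the inequality $(x_i - x_j)^2 \ge (x_i^+ - x_j^+)^2 + (x_i^- - x_j^-)^2$ is actually an equality-or-better and causes no trouble. Everything else is the routine Cauchy--Schwarz and linearity-of-expectation computation sketched above, which I would not expand in full detail.
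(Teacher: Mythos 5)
Your proposal is correct, but note that the paper does not actually prove this lemma anywhere: it is invoked as the hard direction of the classical Cheeger inequality, with the remark that $R(u)$ is the Rayleigh quotient of $u$ with respect to the graph Laplacian, and no proof appears in the body or in the deferred-proofs section. What you have written is the standard sweep-cut proof of that classical result, and it does establish the statement in the form used here: shift to the volume-weighted median, split into positive and negative parts, take the part with the smaller Rayleigh quotient (which is supported on a set of volume at most $\nvol(V)/2$, so the resulting level set is a non-trivial cut and its conductance is measured on the correct side), and run the randomized-threshold plus Cauchy--Schwarz computation, whose constants indeed yield $\sqrt{2R(x)}$. Two details to tighten. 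First, the inequality you quote in the splitting step, $(u_i-u_j)^2 \le (u_i^+-u_j^+)^2+(u_i^--u_j^-)^2$, has its direction reversed at its first occurrence; you state it correctly with $\ge$ in your final paragraph, and that is the direction needed for the numerator of $R(x)$ to dominate the sum of the two one-sided numerators. Second, the hypothesis $x\perp\vec{d}$ is used exactly once, namely to guarantee that shifting $x$ by the volume-median $m$ does not decrease the denominator: $\sum_i d_i(x_i-m)^2=\sum_i d_i x_i^2+m^2\nvol(V)\ge\sum_i d_i x_i^2$ because the cross term vanishes. Since this is the only place orthogonality enters, it deserves to be spelled out rather than folded into ``a short computation.'' With those two points made explicit, your argument is complete and supplies a self-contained proof of a step the paper only cites.
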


Note that $u\perp\vec{d}$ holds by design of the algorithm due to Line~10 of the algorithm.
With this Cheeger Lemma we thus get the guarantee that there has to be a sweep cut through $u$ whose conductance $\Phi$ satisfies $\Phi \leq \sqrt{ 2R(u)}$.
Together with Lemma~\ref{lemma:Ru-less-than-delta} this now implies the following:
If we will only make little progress in the potential with the next random walk step, i.e., $\delta(t) \leq \alpha$ for some small $\alpha$, then we find a sweep cut through the vector of projections $u$ with conductance $\Phi \leq \mathcal{O}(\sqrt{\alpha} \log{n})$. By contraposition, if all sweep cuts have conductance at least $\Phi$, then $\delta(t) \geq \Omega(\Phi^2/\log^2{n})$.

This motivates our approach of analyzing the sweep cuts on $u$ and differentiating three cases: 1) No low conductance sweep cut exists, 2) A balanced low conductance cut exists, or 3) An unbalanced low conductance cut exists.

If we find a balanced sweep cut, we return immediately as this ensures the recursion depth of the surrounding expander decomposition remains small. If no sweep cut has conductance below $\gamma$, the above reasoning ensures we make sufficient progress with the next random walk step.
Next we prove that even if we find an unbalanced cut, the potential still decreases sufficiently. 

\subsubsection*{Handling Unbalanced Cuts}
Let $(S,B)$ be an unbalanced cut, where $\vol{S} \leq 2\log^2(\vol{A})/\vol{A}$ and there is no $\gamma$-low conductance sweep cut through $B$ in $u$.
We show that if $R(u)$ is small and we thus cannot ensure sufficient progress from the random walk, then the set $S$ must contain a large fraction of the potential. Hence we instead make progress by removing $S$ when we set $A(t+1) = B$.

Consider the following lemma, which gives an upper bound on how much the $R(u)$ value can at most increase if we remove a cut and recenter the remaining projected values.
Recentering the vector after removing the cut allows us to use Lemma~\ref{lemma:cheeger} to argue about the existence of low conductance cuts in the remaining subgraph.
Let $z_B \in \mathbb{R}^{\vert B \vert}$ be the shortened and recentered vector of the projections $u$, i.e., $z_i \defeq u_i - \overline{u}_B$ for all $i \in B$ \footnote{
For any set $X \subseteq V$ and vector $x \in \mathbb{R}^{\vert X \vert}$ we denote by $\overline{x}_X \defeq \frac{1}{\vol{X}} \sum_{i \in X} d_i x_i$ the degree- weighted average of the $x$ values in $X$.}. By design, it then holds $z_B \perp \vec{d}$.

\begin{lemma}\label{lemma:unbalanced-cut-R-values}
    For any cut $(S,B)$ in $A$ and $\lambda < \vol{B}/\vol{A}$ such that $\sum_{i \in S} d_i u_i^2 \leq \lambda \cdot \sum_{i \in A} d_i u_i^2$,
    \[  \textstyle
        R(z_B) \leq \frac{\vol{B}}{\vol{B} - \lambda\vol{A}} \cdot R(u).
    \]
\end{lemma}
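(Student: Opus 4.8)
Here $R(z_B)$ denotes the Rayleigh quotient of $z_B$ with respect to the graph $G\{B\}$, i.e. $R(z_B)=\big(\sum_{\{i,j\}\in E(B)}(z_i-z_j)^2\big)\big/\big(\sum_{i\in B}d_iz_i^2\big)$, while $R(u)$ is taken with respect to $G\{A\}$. The plan is to bound the numerator and the denominator of $R(z_B)$ separately against those of $R(u)$: I will show that the numerator only shrinks when passing from $u$ on $A$ to $z_B$ on $B$, and that the denominator shrinks by a factor of at most $\vol{B}/(\vol{B}-\lambda\vol{A})$. Dividing the two estimates then yields the claim; note $\vol{B}-\lambda\vol{A}>0$ by the assumption $\lambda<\vol{B}/\vol{A}$, so the final quotient is well defined.

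\textbf{The numerator.} Since $z_i=u_i-\overline{u}_B$ for all $i\in B$, we have $z_i-z_j=u_i-u_j$ for every pair $i,j\in B$. The only non-loop edges of $G\{B\}$ are exactly the edges of $G\{A\}$ with both endpoints in $B$ (edges between $B$ and $S$ turn into self-loops of $G\{B\}$), and every self-loop contributes $0$ to either edge-difference sum. Hence $\sum_{\{i,j\}\in E(B)}(z_i-z_j)^2=\sum_{\{i,j\}\in E(A):\,i,j\in B}(u_i-u_j)^2\le\sum_{\{i,j\}\in E(A)}(u_i-u_j)^2$, the numerator of $R(u)$.

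\textbf{The denominator.} Write $\overline{u}_B=\tfrac{1}{\vol{B}}\sum_{i\in B}d_iu_i$ and use the degree-weighted variance identity
\[
\textstyle\sum_{i\in B}d_iz_i^2=\sum_{i\in B}d_i(u_i-\overline{u}_B)^2=\sum_{i\in B}d_iu_i^2-\vol{B}\,\overline{u}_B^{\,2}.
\]
The first term is at least $(1-\lambda)\sum_{i\in A}d_iu_i^2$, since $\sum_{i\in A}d_iu_i^2=\sum_{i\in S}d_iu_i^2+\sum_{i\in B}d_iu_i^2$ and $\sum_{i\in S}d_iu_i^2\le\lambda\sum_{i\in A}d_iu_i^2$ by hypothesis. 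For the correction term I exploit that $u\perp\vec d$, so $\sum_{i\in A}d_iu_i=0$ and therefore $\vol{B}\,\overline{u}_B=\sum_{i\in B}d_iu_i=-\sum_{i\in S}d_iu_i$; applying Cauchy--Schwarz over the \emph{small} set $S$ gives $\big(\sum_{i\in S}d_iu_i\big)^2\le\big(\sum_{i\in S}d_i\big)\big(\sum_{i\in S}d_iu_i^2\big)\le\lambda\,\vol{S}\sum_{i\in A}d_iu_i^2$, hence $\vol{B}\,\overline{u}_B^{\,2}\le\tfrac{\lambda\,\vol{S}}{\vol{B}}\sum_{i\in A}d_iu_i^2$. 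Combining the two estimates and using $\vol{A}=\vol{S}+\vol{B}$,
\[
\textstyle\sum_{i\in B}d_iz_i^2\ \ge\ \big(1-\lambda-\tfrac{\lambda\vol{S}}{\vol{B}}\big)\sum_{i\in A}d_iu_i^2\ =\ \tfrac{\vol{B}-\lambda\vol{A}}{\vol{B}}\sum_{i\in A}d_iu_i^2 .
\]
Dividing the numerator bound by this denominator bound gives $R(z_B)\le\tfrac{\vol{B}}{\vol{B}-\lambda\vol{A}}\,R(u)$, as desired.

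\textbf{Main obstacle.} Everything except the correction term $\vol{B}\,\overline{u}_B^{\,2}$ is routine Rayleigh-quotient bookkeeping (the variance identity, monotonicity of the edge-difference sum under restricting to an induced subgraph). The one point that needs care is that a direct Cauchy--Schwarz estimate of $\sum_{i\in B}d_iu_i$ over $B$ would lose a full factor of $\sum_{i\in B}d_iu_i^2$ and make the denominator bound vacuous; the key trick is to first use orthogonality to $\vec d$ to rewrite the $B$-sum as (minus) the $S$-sum, so that Cauchy--Schwarz is applied over a set carrying only a $\lambda$-fraction of the squared mass.
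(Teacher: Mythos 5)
Your proof is correct and follows essentially the same route as the paper's: bound the numerator by monotonicity of the edge-difference sum, and bound the denominator via the identity $\sum_{i\in B}d_iz_i^2=\sum_{i\in B}d_iu_i^2-\vol{B}\,\overline{u}_B^{\,2}$, using $u\perp\vec d$ to transfer the correction term to a sum over $S$. Your Cauchy--Schwarz step over $S$ is exactly equivalent to the paper's variance decomposition $\sum_{i\in S}d_iu_i^2\ge\vol{S}\,\overline{u}_S^{\,2}$, so the two arguments coincide.
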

The proof is deferred to Section~\ref{section:deferred-proofs}.
We can put this in the context of our unbalanced cut $(S,B)$ with the following lemma. For this, let $\varphi_S$ and $\varphi_B$ denote the potential generated by summing only over the nodes in $S$ and $B$ respectively.

\begin{lemma}\label{lemma:unbalanced-cut-progress}
    If $R(u) < \gamma^2/4$ and we find an unbalanced cut $(S,B)$ with $\vol{S} \leq 2m/\log^2{m}$,
    then $\varphi_B(t) \leq \big(1 - 1/(880 \log^2{n}) \big) \cdot \varphi(t)$.
\end{lemma}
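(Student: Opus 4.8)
The plan is to prove the bound directly: assuming $R(u) < \gamma^2/4$, I will show that the removed set $S$ captures at least a $1/(880\log^2 n)$-fraction of the potential, i.e.\ $\varphi_S(t) \ge \varphi(t)/(880\log^2 n)$. Since $(S,B)$ partitions $A(t)$, we have $\varphi(t) = \varphi_S(t) + \varphi_B(t)$ (all three sums use the same average $\mu(t)$), so this is exactly the claimed bound on $\varphi_B(t)$. Throughout I work in a good round, as agreed, and write $\mu = \mu(t)$, $P_i = P_i(t)$, $\varphi = \varphi(t)$.

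\emph{Step 1: a lower bound on $R(z_B)$.} A non-trivial sweep cut of the recentered, restricted vector $z_B$ inside $G\{B\}$ is the same object as a ``sweep cut through $B$ in $u$'': recentering does not change which vertex sets arise, and $G\{B\}$ carries self-loops so volumes agree with those in $G\{A\}$. As observed in the discussion of case~(3) of Algorithm~\ref{alg:cut-step}, any sub-interval $X \subseteq B$ of the sorted $u$-order, together with $S$, forms a two-ended sweep cut of $u$ in $A$ of volume at least $\vol{S}$; since $S$ was selected as the maximum-volume two-ended sweep cut of conductance below $\gamma$, every such $X$ has $\Phi_B(X) \ge \gamma$. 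By construction $z_B \perp \vec d$, so Lemma~\ref{lemma:cheeger} produces a sweep cut of $z_B$ of conductance at most $\sqrt{2R(z_B)}$; combining, $\sqrt{2R(z_B)} \ge \gamma$, hence $R(z_B) \ge \gamma^2/2$.

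\emph{Step 2: forcing mass into $S$ and transferring to $\varphi_S$.} Let $\lambda^\ast \defeq \big(\sum_{i\in S} d_i u_i^2\big)\big/\big(\sum_{i\in A} d_i u_i^2\big)$. If $\lambda^\ast \ge \vol{B}/\vol{A}$ the mass bound below is immediate; otherwise apply Lemma~\ref{lemma:unbalanced-cut-R-values} with $\lambda = \lambda^\ast$ to obtain $R(z_B) \le \tfrac{\vol{B}}{\vol{B} - \lambda^\ast \vol{A}}\, R(u)$, and feeding in $R(z_B) \ge \gamma^2/2$ and $R(u) < \gamma^2/4$ gives $\tfrac{\vol{B}}{\vol{B} - \lambda^\ast \vol{A}} > 2$, i.e.\ $\lambda^\ast > \vol{B}/(2\vol{A})$. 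Either way, $\sum_{i\in S} d_i u_i^2 > \tfrac{\vol{B}}{2\vol{A}}\sum_{i\in A} d_i u_i^2$. Now I use the good-round guarantees: $\sum_{i\in A} d_i u_i^2 \ge \varphi/(20 n\log n)$, and --- from Property~\ref{ProC} of the Projection Lemma (Lemma~\ref{lem:projection}) applied to the collection $\{P_i - \mu\}_{i\in A} \cup \{\vec 0\}$ --- $u_i^2 \le \tfrac{11\log n}{n}\norm{P_i-\mu}^2$ for every $i$, so $\sum_{i\in S} d_i u_i^2 \le \tfrac{11\log n}{n}\,\varphi_S$. Chaining, $\varphi_S \ge \tfrac{n}{11\log n}\sum_{i\in S}d_iu_i^2 > \tfrac{n}{11\log n}\cdot\tfrac{\vol{B}}{2\vol{A}}\cdot\tfrac{\varphi}{20 n\log n} = \tfrac{\vol{B}}{\vol{A}}\cdot\tfrac{\varphi}{440\log^2 n}$. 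Finally $\vol{B}/\vol{A} \ge 1/2$: at the point this cut is processed $\vol{L} < \beta m$, while $\vol{S} \le 2m/\log^2 m$ and $\vol{A} = \vol{V} - \vol{L}$, so $\vol{B} = \vol{A} - \vol{S}$ is a $\big(1 - O(1/\log^2 m)\big)$-fraction of $\vol{A}$. Hence $\varphi_S \ge \varphi/(880\log^2 n)$ and $\varphi_B = \varphi - \varphi_S \le \big(1 - 1/(880\log^2 n)\big)\varphi$.

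\emph{Main obstacle.} I expect Step~1 to be the delicate part: carefully identifying ``sweep cuts through $B$ in $u$'' with sweep cuts of the recentered, restricted vector $z_B$ in the self-looped induced subgraph $G\{B\}$, tying these to the two-ended sweep cuts examined by the cut procedure, and checking the hypothesis $z_B \perp \vec d$ needed for Lemma~\ref{lemma:cheeger}. A minor point is that the pointwise projection bound $u_i^2 \le \tfrac{11\log n}{n}\norm{P_i-\mu}^2$ is not literally one of the two bullets defining a good round, but it follows from the same Projection Lemma estimate and holds jointly with the good-round events with constant probability, so it may be assumed throughout the analysis.
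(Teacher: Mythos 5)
Your proof is correct and follows essentially the same route as the paper's: both hinge on Lemma~\ref{lemma:cheeger} applied to the recentered vector $z_B$ (using that no low-conductance sweep cut crosses $B$), Lemma~\ref{lemma:unbalanced-cut-R-values}, and the good-round projection bounds, arriving at the same constant $880\log^2 n$. The only difference is presentational — you argue directly with $\lambda^\ast$ equal to the actual mass fraction in $S$, whereas the paper argues by contradiction with the fixed value $\lambda = 1/4$ — and you correctly flag the same two delicate points (the sweep-cut identification in $G\{B\}$ and the pointwise bound $u_i^2 \le \tfrac{11\log n}{n}\norm{P_i-\mu}^2$) that the paper's proof also relies on.
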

\begin{proof}
Since $(S,B)$ partitions $A$ we get $\varphi_S(t) + \varphi_B(t) = \varphi(t)$.
Assume for contradiction that $\varphi_S(t) < 1/(880 \log^2{n}) \cdot \varphi(t)$, i.e., with $S$ we only remove very little potential. It follows
\begin{equation}\label{eq:bound-on-S}
    \textstyle \sum_{i \in S} d_i u_i^2 
    \leq \textstyle \frac{11\log{n}}{n} \varphi_S(t)
    < \textstyle \frac{1}{80n\log{n}} \varphi(t)
    \leq \textstyle \frac{1}{4}  \sum_{i \in A} d_i u_i^2
\end{equation}
since round $t$ is good.
Note that $\frac{\vol{B}}{\vol{A}} \geq 1- \frac{2\log^2{(\vol{A})}}{\vol{A}} \geq \frac{3}{4}$ and therefore
\begin{equation}\label{eq:fractions-simplify}
\begin{aligned}
    \textstyle\frac{\vol{B}}{\vol{B} - 1/4\cdot \vol{A}} \leq \frac{\vol{A}}{(3/4 - 1/4) \vol{A}} \leq 2.
\end{aligned}
\end{equation}
Summing over $B$ gives $\sum_i d_i z_i = \sum_i d_i u_i - \vol{B}\overline{u}_B = 0$. This allows us to use Lemma~\ref{lemma:cheeger} in the subgraph $G\{B\}$.
By design of the algorithm, since we did not find a balanced low conductance cut, $(S,B)$ is a two-ended sweep cut where $S$ has maximal volume among low conductance cuts.
Hence there is no low conductance sweep cut through $u$ that crosses the set $B$.
We get
\begin{align*}
    \gamma^2 & \leq 2R(z_B) && \text{by Lemma~\ref{lemma:cheeger}} \\
    & \leq 2\textstyle\frac{\vol{B}}{\vol{B} - 1/4\cdot\vol{A}} \cdot R(u) && \text{by Lemma~\ref{lemma:unbalanced-cut-R-values} using (\ref{eq:bound-on-S})} \\
    & \leq 4 R(u) && \text{by (\ref{eq:fractions-simplify}).}
\end{align*}
This is a contradiction to our assumption that $R(u) < \gamma^2/4$.

\end{proof}

The above results show that a good round ensures sufficient progress in decreasing the potential or produces a balanced cut.

\begin{claim}\label{claim:good-round-reduction}
    In a good round we either find a balanced cut or the potential reduces by a factor at least $1-\gamma^2/(1760\log^2{n})$.
\end{claim}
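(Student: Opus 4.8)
The plan is to obtain the claim as a short case analysis layered on top of the lemmas already proved for a good round. Fix a good round $t$ and assume the cut procedure does not return a balanced cut in this round; I will show $\varphi(t+1)\le\bigl(1-\gamma^2/(1760\log^2 n)\bigr)\varphi(t)$. Since conductance is always at most $1$ we may assume $\gamma\le 1$ (if $\gamma>1$ every nontrivial cut is ``low conductance'' and the procedure returns one at once), so $\gamma^2\le 2$; this lets me merge the two slightly different bounds below into the single constant $1760=2\cdot 880$. In round $t$ the algorithm does exactly one of: (a) return a balanced cut; (b) leave $A$ unchanged because every sweep cut of $u$ has conductance $\ge\gamma$; or (c) move an unbalanced two-ended sweep cut $(S,B)$ of conductance $<\gamma$ with $\nvol(S)<\beta m=2m/\log^2 m$ from $A$ to $L$. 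We assumed (a) does not occur, so it is (b) or (c).

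Case (b). Because $u\perp\vec d$ (by the projection step that enforces $u\perp\vec d$), Cheeger's inequality (Lemma~\ref{lemma:cheeger}) applied in $G\{A(t)\}$ yields a sweep cut of conductance $\le\sqrt{2R(u)}$; since every sweep cut has conductance $\ge\gamma$ this forces $R(u)\ge\gamma^2/2$. Feeding this into Lemma~\ref{lemma:Ru-less-than-delta} (valid because the round is good) gives $\delta(t)\ge R(u)/(440\log^2 n)\ge\gamma^2/(880\log^2 n)$, and since $A$ is unchanged, $\varphi(t+1)=(1-\delta(t))\varphi(t)\le\bigl(1-\gamma^2/(1760\log^2 n)\bigr)\varphi(t)$.

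Case (c). Since the algorithm picks $S$ of maximal volume among the low-conductance two-ended sweep cuts, no sweep cut of conductance $<\gamma$ crosses $B$, which is exactly the hypothesis needed for the unbalanced-cut analysis. I split on $R(u)$. If $R(u)\ge\gamma^2/4$, Lemma~\ref{lemma:Ru-less-than-delta} already gives $\delta(t)\ge R(u)/(440\log^2 n)\ge\gamma^2/(1760\log^2 n)$; as passing from the post-walk state on $A(t)$ to the state on $B$ (dropping $S$ and recentering) only lowers the potential, $\varphi(t+1)\le(1-\delta(t))\varphi(t)\le\bigl(1-\gamma^2/(1760\log^2 n)\bigr)\varphi(t)$. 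If instead $R(u)<\gamma^2/4$, the hypotheses of Lemma~\ref{lemma:unbalanced-cut-progress} hold, so $\varphi_B(t)\le\bigl(1-1/(880\log^2 n)\bigr)\varphi(t)$; since $\varphi(t+1)$ is $\varphi_B(t)$ up to recentering (and, where applicable, a further walk step), both of which only decrease it, and since $\gamma^2\le 2$, we get $\varphi(t+1)\le\bigl(1-1/(880\log^2 n)\bigr)\varphi(t)\le\bigl(1-\gamma^2/(1760\log^2 n)\bigr)\varphi(t)$.

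The arithmetic here is routine; the two points deserving care are, first, that the cut fed to Lemma~\ref{lemma:unbalanced-cut-progress} really is the maximal-volume low-conductance two-ended sweep cut (this maximality is what guarantees ``no low-conductance sweep cut crosses $B$'', which in turn is what lets Lemma~\ref{lemma:cheeger} be applied to the recentered vector $z_B$ inside $G\{B\}$), and second, the choice of the splitting threshold $\gamma^2/4$: it must be small enough for Lemma~\ref{lemma:unbalanced-cut-progress} to apply yet large enough that $R(u)\ge\gamma^2/4$ on its own forces $\delta(t)\ge\gamma^2/(1760\log^2 n)$ — reconciling these two requirements is exactly what produces the factor $1760$. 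I do not expect any essential difficulty beyond this bookkeeping, since the substantive work is already done in Lemmas~\ref{lemma:cheeger}, \ref{lemma:Ru-less-than-delta} and~\ref{lemma:unbalanced-cut-progress}.
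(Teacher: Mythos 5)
Your proof is correct and follows essentially the same case analysis as the paper: if no sweep cut has conductance below $\gamma$, combine Cheeger with Lemma~\ref{lemma:Ru-less-than-delta} to get $\delta(t)\ge\gamma^2/(880\log^2 n)$; otherwise split on $R(u)\gtrless\gamma^2/4$ and invoke either Lemma~\ref{lemma:Ru-less-than-delta} or Lemma~\ref{lemma:unbalanced-cut-progress}, exactly as the paper does. Your additional remarks justifying $\gamma^2\le 2$ and the monotonicity of the potential under dropping $S$ and recentering are details the paper leaves implicit, but they do not change the argument.
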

\begin{proof}
    Let $\Phi$ be the minimum value of any sweep cut on $u$.
    
    If $\Phi \geq \gamma$, then Lemmas~\ref{lemma:Ru-less-than-delta} and~\ref{lemma:cheeger} give $\delta \geq \gamma^2/(880\log^2{n})$ since
    \[
        \gamma^2 \leq \Phi^2 \leq 2R(u) \leq 880 \log^2(n) \cdot \delta \enspace.
    \]
    
    If $\Phi < \gamma$, we will find some low conductance cut $S$. If $S$ is balanced, we can return it, otherwise we move $S$ from $A$ to $R$. Then either
    \begin{itemize}
        \item $R(u) \geq \gamma^2/4$ and Lemmas~\ref{lemma:Ru-less-than-delta} and~\ref{lemma:cheeger} give $\delta \geq \gamma^2/(1760\log^2{n})$, or
        \item $R(u) < \gamma^2/4$ and Lemma~\ref{lemma:unbalanced-cut-progress} gives a factor of $1 - 1/(880\log^2{n})$.
    \end{itemize}
    In any case, the decrease is at least $1-\gamma^2/(1760\log^2{n})$.
\end{proof}

\subsubsection*{Iterations}
Computing one step of the random walk $\vec{u}$ can be done by averaging the values at the endpoints of each edge. In iteration $t$, we can thus compute the values $\vec{u}$ in $\mathcal{O}(mt)$ time. Sorting the $\vec{u}$ values takes $\mathcal{O}(n\log{n})$ time. 
The following lemma shows that there are at most $O(1/\phi)$ iterations.
With this, the Cut Procedure takes $O((m \log{n}) / \phi^2)$ time in total.

It only remains to prove the upper bound on the number of iterations. The threshold for the potential is chosen to guarantee the existence of a near $6\phi$-expander in Lemma~\ref{lemma:near-phi-expander}.

\begin{lemma}\label{lemma:numer-of-iterations}
    After $T = 1 / 12\phi$, iterations of the cut procedure, with high probability $\varphi(T) \leq \frac{1}{4\vol{V}^2}$.
\end{lemma}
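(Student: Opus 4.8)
The plan is to follow the potential $\varphi(t)$ through the iterations and argue that it contracts geometrically on every \emph{good} round, of which there are $\Omega(T)$ with high probability. First I would bound the initial potential. At the start $W=I$, so $u_i=r_i/d_i$, hence $P_i(1)=e_i/d_i$, and consequently
\[
  \varphi(1)=\sum_{i\in V}d_i\,\norm{P_i(1)-\mu(1)}^2\le\sum_{i\in V}d_i\,\norm{P_i(1)}^2=\sum_{i\in V}\frac1{d_i}\le n\le\vol{V},
\]
where the first inequality uses that $\mu(1)$ is the degree-weighted mean of the $P_i(1)$ and hence the minimizer of $c\mapsto\sum_i d_i\norm{P_i(1)-c}^2$.

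Next I would set up the geometric decay. By Claim~\ref{claim:round-good}, conditioned on the whole history up to round $t$ (which fixes the vectors $P_i,\mu$ but not the fresh direction $r_t$), round $t$ is good with probability at least $1/4$ --- this conditional independence is exactly what the Projection Lemma provides, since that lemma holds for an arbitrary family of vectors. Hence, running the coupled process that performs all $T$ iterations regardless of early termination, the number $G$ of good rounds among the first $T-1$ stochastically dominates a $\mathrm{Bin}(T-1,1/4)$ variable, so a Chernoff bound gives $G\ge(T-1)/5$ with high probability (the tail is $e^{-\Omega(T)}$, i.e.\ $1/\mathrm{poly}(m)$ in the regime $\phi=\bigosym(1/\log m)$, equivalently $T=\Omega(\log m)$, that we care about). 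On \emph{any} round the potential cannot increase: the random-walk averaging step has $\delta(t)\ge 0$ and moving a set $S$ from $A$ to $L$ only deletes the nonnegative summands $\varphi_S$. On a good round in which no balanced cut is returned, Claim~\ref{claim:good-round-reduction} gives a contraction by a factor at most $1-\gamma^2/(1760\log^2 n)$. If the procedure performs all $T$ iterations --- the only case the statement concerns, since otherwise it returns a balanced cut --- then no balanced cut is returned in rounds $1,\dots,T-1$, and therefore
\[
  \varphi(T)\le\varphi(1)\Bigl(1-\frac{\gamma^2}{1760\log^2 n}\Bigr)^{G}\le\vol{V}\cdot\exp\!\Bigl(-\frac{\gamma^2 G}{1760\log^2 n}\Bigr).
\]

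Finally I would substitute the parameters. Since $T=1/(12\phi)$ and $\gamma^2=343^2\,\phi\log(32m^3)\log^2 n$,
\[
  \frac{\gamma^2}{1760\log^2 n}\cdot T=\frac{343^2}{12\cdot 1760}\log(32m^3)\ge 5\log(32m^3),
\]
and, using $G\ge(T-1)/5=(1-o(1))T/5$ together with $\vol{V}=2m$ (so $4\vol{V}^3=32m^3$), the exponent $\gamma^2G/(1760\log^2 n)$ is at least $\log(4\vol{V}^3)$ once $1/\phi$ exceeds a universal constant --- the slack in $343^2/(12\cdot 1760)>5$ absorbs both the factor-$5$ Chernoff loss and any change of logarithm base. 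Hence $\varphi(T)\le\vol{V}\cdot e^{-\log(4\vol{V}^3)}=1/(4\vol{V}^2)$.

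The main difficulty is not a single hard step but the bookkeeping of \emph{where} the potential actually drops: the guaranteed decrease only occurs on good rounds, so everything rests on the concentration of $G$, which is why $T=1/(12\phi)$ must be large (this is also what makes the multiplicative constant $343^2/(12\cdot 1760)$ comfortably beat $1$ after the Chernoff loss). A secondary point requiring care is rigorously handling the conditioning on ``all $T$ iterations are executed'', which I would settle by running the whole argument on the never-terminating coupled process and invoking it only under that event.
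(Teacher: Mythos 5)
Your proof is correct and follows essentially the same route as the paper: bound $\varphi$ initially by $\vol{V}$, apply the good-round contraction factor $1-\gamma^2/(1760\log^2 n)$ from Claim~\ref{claim:good-round-reduction}, count $\Omega(T)$ good rounds via Chernoff, and verify that the choice of $\gamma$ makes the exponent exceed $\ln(4\vol{V}^3)$. Your treatment is in fact slightly more careful than the paper's on two points the paper leaves implicit — the stochastic-domination argument justifying the Chernoff bound despite the rounds not being independent, and the monotonicity of the potential when $S$ is moved from $A$ to $L$.
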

\begin{proof}
    Let $\nvol = \vol{V}$. The initial potential is
    \begin{align*}
        \varphi(0) & = \textstyle \sum_i d_i \cdot \big(  \textstyle (n-1)\cdot \frac{1}{\nvol^2} + (1 - \frac{1}{\nvol})^2 \big) \\
        & = \nvol \cdot \big(\textstyle \frac{n}{\nvol^2} - \frac{2}{\nvol} + 1 \big) \leq \nvol.
    \end{align*}
    If we reach round $T$, then none of the previous iterations produced a balanced cut. By Claim~\ref{claim:good-round-reduction}, a good round decreases the potential by a factor of $1-\gamma^2/(1760\log^2{n})$. We first show that reducing the initial potential by this factor $\frac{1}{96\phi}$ times brings it below $\frac{1}{4\nvol^2}$, then we show that with high probability there are sufficient good rounds.
    After $k$ good rounds, the potential has decreased by a factor of
    \begin{align*}
        \textstyle
        \Big( 1-\frac{\gamma^2}{1760\log^2{n}} \Big)^k 
        < \textstyle \operatorname{exp}\big( -\frac{\gamma^2}{1760\log{n}}\big)^\frac{k}{\log{n}}
        & =  \textstyle \operatorname{exp}\big( -\frac{\gamma^2}{1760\log^2{n}} \cdot k \big)
    \end{align*}
    where the first inequality follows from $(1 + x/a)^a < e^x, \forall x\in \mathbb{R}, a > 0$ with $a=\log{n}$.
    Combining this with the above bound on $\varphi(0)$, we can take logarithms on both sides to see that $\varphi(T) \leq 1/(4\nvol^2)$ if $k \cdot \gamma^2/(1760\log^2{n}) \geq  \ln(4\nvol^3)$
    and thus we need at least%
    \[  \textstyle
        1760\frac{1}{\gamma^2}\ln(4\nvol^3)\log^2{n} \leq 1220 \frac{1}{\gamma^2}\log(4\nvol^3)\log^2(n)
    \]
    good rounds.
    Setting 
    \[
        \gamma = 343 \sqrt{\phi \log(4\nvol^3)\log^2(n) }
    \]
    gives that after $\frac{1}{96\phi}$ good rounds, the potential is below the threshold.

    By Claim~\ref{claim:round-good}, a round is good with probability $1/4$. Let $X_i$ be the indicator random variable representing whether round $i$ was good and $K = \sum_i^T X_i$ their sum. Clearly, $K$ is the number of good rounds and its expectation is $\EX[K] = T/4$.
    Since $\phi \leq 1/\log^2{n}$ we get $T \geq \log^2(n)/12$ and thus $\EX[K] \geq \log^2(n)/48$.
    Using a Chernoff bound then gives
    \begin{align*}
        \Pr \big[K \leq \textstyle\frac{1}{2} \EX[K] \big] \textstyle \leq \exp(-\frac{1}{8} \EX[K]) 
        \textstyle \leq n^{-\frac{\log{n}}{48}} \enspace .
    \end{align*}
    It follows that with high probability the number of good rounds is
    \[  \textstyle
        K \geq \textstyle\frac{1}{2} \EX[K] = \frac{T}{8} \geq \frac{1}{96\phi} \enspace.
    \]
\end{proof}

Altogether, the above arguments imply the correctness of our Theorem~\ref{theorem:cut-step}. Analogous to the reasoning by Saranurak and Wang~\cite{saranurakwang19}, we can conclude that our modified guarantees yield Theorem~\ref{theorem:expander-decomposition} when using our cut step procedure in their expander decomposition framework.

\subsection{Deferred Proofs}\label{section:deferred-proofs}
\begin{lemma}[Projection Lemma]
    Let $v_1,\dots,v_n$ be a collection of $d$-dimensional vectors. Let $r$
    denote a random $d$-dimensional vector
    with each coordinate sampled independently from a Gaussian distribution
    $\mathcal{N}(0,1/d)$. Then
    \begin{enumerate}
        \item $\EX[(v_i ^\top r)^2] = \|v_i\|^2/d$ for all $i$\label{ProA}
        \item $\Pr[(v_i ^\top r)^2 \ge \alpha \cdot \|v_i\|^2/d] \le e^{-\alpha/5}$ for
        $\alpha \ge 1$\label{ProB}
        \item $\Pr[\exists i,j: (v_i-v_j)^\top r \ge 11\log n\cdot \|v_i-v_j\|^2/d]\le 1/n$\label{ProC2}
        \item $\Pr[\sum_i(v_i ^\top r)^2 \geq \frac{1}{20\log
          n}\sum_i\|v_i\|^2/d] \ge 1/2$ for $n\ge 8$.\label{ProD2}
    \end{enumerate}
\end{lemma}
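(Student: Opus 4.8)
The plan is to take the four parts in order, since each later bound reuses the tail estimates of the earlier ones. For part~(\ref{ProA}), observe that $v_i^\top r=\sum_k (v_i)_k r_k$ is a linear combination of independent $\mathcal N(0,1/d)$ variables, so $v_i^\top r\sim\mathcal N(0,\|v_i\|^2/d)$ and the identity is simply its second moment. For part~(\ref{ProB}), the same fact shows that, when $v_i\neq 0$ (the zero case is trivial), $d\,(v_i^\top r)^2/\|v_i\|^2=Z^2$ for a standard normal $Z$, so the event is $\{|Z|\ge\sqrt\alpha\}$; a standard sub-Gaussian tail gives $\Pr[|Z|\ge\sqrt\alpha]\le 2e^{-\alpha/2}$, which is at most $e^{-\alpha/5}$ for $\alpha$ past a small constant and, below that, reduces by monotonicity to the single check $\Pr[|Z|\ge 1]<0.32$. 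In other words the exponent $1/5$ is deliberately slack.

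Part~(\ref{ProC2}) is a union bound over the at most $\binom n2<n^2/2$ pairs $\{i,j\}$: applying part~(\ref{ProB}) (in its sharper $e^{-\alpha/2}$ form) to the difference vector $v_i-v_j$ with $\alpha=11\log n$ makes each pair fail with probability at most $n^{-11/2}$, so the union is well below $1/n$; this is the one place the constant $11$ is used and the argument is pure arithmetic. I read the event here as $((v_i-v_j)^\top r)^2\ge 11\log n\cdot\|v_i-v_j\|^2/d$, which is the form actually used in the definition of a good iteration.

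Part~(\ref{ProD2}) is the one that needs an idea, and I expect it to be the main obstacle, because it is a lower-tail (anti-concentration) statement and the naive second-moment route falls just short: Isserlis' formula gives $\mathbb E[(\sum_i(v_i^\top r)^2)^2]\le 3(\sum_i\|v_i\|^2/d)^2$, and Paley--Zygmund then only yields probability roughly $1/3$. Instead I would write $X:=\sum_i(v_i^\top r)^2=\tfrac1d\,g^\top M g$ with $g\sim\mathcal N(0,I_d)$ and $M:=\sum_i v_iv_i^\top$ positive semidefinite, diagonalize $M$ to get $X=\sum_k\mu_k Z_k^2$ with independent standard normals $Z_k$, weights $\mu_k\ge 0$, and $\sum_k\mu_k=\operatorname{tr}(M)/d=\sum_i\|v_i\|^2/d=:S$, and then control the lower tail via its exponential moment: for $s>0$, $\Pr[X\le x]\le e^{sx}\,\mathbb E[e^{-sX}]=e^{sx}\prod_k(1+2s\mu_k)^{-1/2}$.

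The crucial observation is that $w\mapsto\ln(1+2sw)$ is concave with value $0$ at $w=0$, hence subadditive, so $\sum_k\ln(1+2s\mu_k)\ge\ln(1+2sS)$ and therefore $\prod_k(1+2s\mu_k)^{-1/2}\le(1+2sS)^{-1/2}$; that is, the lower tail of $X/S$ is dominated by the lower tail of a single $\chi^2_1$ variable. Optimizing over $s$ yields $\Pr[X\le yS]\le\sqrt{ey}\,e^{-y/2}$, and substituting $y=1/(20\log n)$ with $n\ge 8$ (so that $\log n$ is bounded below by a constant) makes this strictly smaller than $1/2$, giving exactly the claimed bound on the complementary event. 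What remains after this sketch is genuinely routine: the determinant/eigenvalue identity for $\mathbb E[e^{-sX}]$, the subadditivity step, and the final numerical check that $\sqrt{ey}\,e^{-y/2}<1/2$ at $y=1/(20\log 8)$.
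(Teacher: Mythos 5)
Your proofs of parts (1)--(3) are correct and essentially the same as the paper's: part (1) is the second moment of $v_i^\top r\sim\mathcal N(0,\|v_i\|^2/d)$, part (2) is a one-dimensional $\chi^2$ upper tail (the paper invokes Laurent--Massart to get $\Pr[Z^2\ge 5x]\le e^{-x}$, you use the standard Gaussian tail $2e^{-\alpha/2}$ plus a monotonicity check for $\alpha$ below $\tfrac{10}{3}\ln 2$; both are fine and the slack in the exponent $1/5$ absorbs the difference), and part (3) is the same union bound over the $O(n^2)$ difference vectors, with the correct reading of the event as $((v_i-v_j)^\top r)^2\ge 11\log n\cdot\|v_i-v_j\|^2/d$ --- which is also how the paper's proof and the definition of a good iteration use it. Part (4) is where you genuinely diverge, and your argument is correct and, in my view, cleaner. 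The paper rewrites $Z=\sum_i(v_i^\top r)^2$ in terms of the coordinate (column) vectors $x_j$ as $\sum_j\|x_j\|^2r_j^2$, buckets the $x_j$ by length into $\log_2 n$ dyadic classes, keeps only the heaviest class, and lower-bounds the surviving sum via the median of a scaled $\chi^2$ with many degrees of freedom; the $1/(20\log n)$ loss is exactly the price of discarding all but one class. You instead view $Z$ as the Gaussian quadratic form $\tfrac1d g^\top Mg$ with $M=\sum_i v_iv_i^\top$, diagonalize, and dominate the lower tail of $\sum_k\mu_kZ_k^2$ by that of a single $\chi^2_1$ using the subadditivity of $w\mapsto\ln(1+2sw)$, which gives the closed-form bound $\Pr[X\le yS]\le\sqrt{ey}\,e^{-y/2}$; since this is increasing in $y$ on $(0,1)$, the single numerical check at $n=8$ suffices. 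Two things your route buys: it works directly with the row vectors $v_i$ and never needs the coordinate-wise expansion $\sum_i(v_i^\top r)^2=\sum_i\sum_j(v_{ij}r_j)^2$ with which the paper's proof opens (an identity that, read literally, drops the cross terms and so deserves care); and it is strictly stronger --- the same computation with $y=1/12$ already gives $\sqrt{e/12}\,e^{-1/24}<1/2$, so your method would prove the claim with a constant in place of $1/(20\log n)$, whereas the bucketing argument intrinsically pays the $\log n$ factor.
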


\begin{proof}[Proof of Lemma~\ref{lem:projection}]\hfill\\
\noindent\ref{ProA}.~~
Due to the rotation symmetry of the Gaussian distribution we can assume
wlog.\ that $v$ is an arbitrary vector of length $\ell=\|v\|$. In
particular we can assume that $v_1=\sqrt{\ell}$ and all other entries are $0$.
Then $\EX[(v^\top r)^2]=\EX[\ell\cdot\mathcal{N}(0,1/d)^2]=\ell/d\cdot\EX[\mathcal{N}(0,1)^2]=\ell/d$.

\ref{ProB}.~~
Using Lemma~1 from Laurent and Massart~\cite{LM2000} one gets that for a
$\chi$-squared distributed variable $X\sim\mathcal{N}(0,1)^2$ fulfills
$\Pr[X\ge 1+2\sqrt{x}+2x]\le e^{-x}$, which gives 
$\Pr[X\ge 5x]\le e^{-x}$ for $x\ge 1$. Since the distribution of $(v^\top r)^2$
is $\frac{\ell}{d}\cdot\mathcal{N}(0,1)^2$ the statement follows.

\ref{ProC2}.~~
Using Property~\ref{ProB} with $\alpha=15\ln n$ gives that the probability
that a vector is overstretched by a factor more than $\alpha$ is at most 
$n^{-3}$. Applying a union bound over the at most $n^2$ vectors gives the statement. Finally, note that $15\ln x < 11 \log_2 x$ for all $x > 1$.

\ref{ProD2}.~~
Let $x_j$ denote the vector where the $i$-th entry is $v_{ij}$.
Then
$Z:=\sum_i(v_i ^\top r)^2
=\sum_i\sum_j(v_{ij}r_j)^2
=\sum_j\sum_i(x_{ji}r_j)^2=\sum_j\|x_j\|_2^2\cdot (r_j)^2$.
Let $\ell_{\max}$ denote the largest length of an $x_j$-vector.
We classifiy the $x_j$-vectors whose length fall in the range
$(\ell_{\max}/n,\ell_{\max}]$ into $\log_2n$ classes so that the length
of vectors in the same class differs by at most a factor of $2$. Formally,
the $i$-th class for $i\in 1,\dots,\log_2n$ contains vectors with length
in the range $(\ell_{\max}/2^i,\ell_{\max}/2^{i-1}]$. Scale the length for
each classified vector $x_j$ down to the lowest length in its class and scale
the unclassified vectors down to length $0$. Let $\tilde{x}_j$ denote scaled
length of a vector $x_j$. We have
\begin{equation*}
\begin{split}
\textstyle{\sum_j\|\tilde{x}\|_j^2}
  &\ge \textstyle{\sum_{\text{$j$ classified}}\|x_j\|^2/4}
  \ge \textstyle{\sum_{j}\|x_j\|^2/4-n(\ell_{max}/n)^2}\\
  &\ge \textstyle{(\tfrac{1}{4}-\tfrac{1}{n})\sum_{j}\|x_j\|^2}
  \ge \textstyle{\tfrac{1}{8}\sum_{j}\|x_j\|^2}\enspace.
\end{split}
\end{equation*}\def\R{\mathcal{R}}%
for $n\ge 4$. The second inequality holds because an unclassified vector has at
most length $\ell_{\max}/n$ and the third inequality because there is one
vector $x_j$ with length $\ell_{\max}$. Now, we choose the class where the
total scaled length of vectors is largest. Let $\R$ denote the index
set of vectors in this class, let $\ell_{\R}$ denote their scaled
length and let $n_{\R}$ denote their number.

Then $n_\R\ell_\R^2\ge\tfrac{1}{8\log_2n}\sum_{j}\|x_j\|^2$. Now, we
consider the sum of squared projections of the scaled vectors in $\R$.
This is distributed according to a $\chi^2$-distribution with $n_\R$
degrees of freedom scaled by $\ell_\R^2/d$. The median for this distribution
is $n_\R(1-{2}/(9n_R))\ge 0.4n_R$. Hence, with probability $1/2$
\begin{equation*}
Z\ge Z_\R\ge 0.4n_\R\ell_\R^2/d\ge \tfrac{1}{20\log_2n}\sum_{j}\|x_j\|^2\enspace.
\end{equation*}

\end{proof}

\begin{proof}[Proof of Lemma~\ref{lemma:unbalanced-cut-R-values}]
    For the numerator of $R(z_B)$, we get
    \[ \textstyle
        \begin{split}\textstyle
            \sum_{\{ i, j\} \in E(B)} (z_i - z_j)^2 &= \textstyle \sum_{\{ i, j\} \in E(B)} (u_i - u_j)^2 \\ & \textstyle\leq \sum_{\{ i, j\} \in E(A)} (u_i - u_j)^2.
        \end{split}
    \]
    For the denominator, recall that $S$ and $B$ form a partition of $A$ and further $u \perp \vec{d}$.
    First, we get
    $0 = \sum_{i \in A} d_i u_i = \vol{S}\overline{u}_S + \vol{B}\overline{u}_B$, which can be rearranged to give
    \begin{equation}\label{eq:avg-B-squared}
        \textstyle
        \overline{u}_B^2 = \frac{\vol{S}^2}{\vol{B}^2}\cdot\overline{u}_S^2 \enspace.
    \end{equation}    
    Next, we observe that 
        \begin{equation}\label{eq:sum-S-avg}
        \textstyle
        \sum_{i \in S} d_i u_i^2 = \sum_{i \in S} d_i (u_i - \overline{u}_S)^2 + \vol{S}\overline{u}_S^2 \geq \vol{S}\overline{u}_S^2 \enspace.
    \end{equation}
    Putting everything together, we can bound the denominator of $R(z_B)$ with
    \begin{align*}
        \textstyle
        \sum_{i \in B} d_i z_i^2
        & = \textstyle \sum_{i \in B} d_i u_i^2 - \vol{B}\cdot\overline{u}_B^2 \\
        & \geq \textstyle \sum_{i \in B} d_i u_i^2 - \frac{\vol{S}}{\vol{B}} \sum_{i \in S} d_i u_i^2 && \text{by (\ref{eq:avg-B-squared}), (\ref{eq:sum-S-avg})} \\
        & \geq \textstyle (1-\lambda)\sum_{i \in A} d_i u_i^2 - \lambda \frac{\vol{S}}{\vol{B}}  \sum_{i \in A} d_i u_i^2  && \text{assumption} \\
        & = \textstyle \frac{\vol{B} - \lambda\vol{A}}{\vol{B}} \cdot \sum_{i \in A} d_i u_i^2 && \text{$A = B \cup S$}
    \end{align*}
    and therefore $R(z_B) \leq \frac{\vol{B}}{\vol{B} - \lambda\vol{A}} \cdot R(u)$.
\end{proof}

\begin{proof}[Proof of Claim~\ref{claim:technical-averaging}]
    \begin{align*}
        &\textstyle d \norm{\frac{\sum_{i} a_i}{d}  - \mu}^2 - \sum_{i} \norm{a_i - \mu}^2 \\
        & = d\Big( \textstyle \norm{\frac{1}{d}\sum_i a_i}^2 - \frac{2}{d}\sum_i a_i^\top\mu + \norm{\mu}^2  \Big)  - \sum_{i} \norm{a_i - \mu}^2 \\
        & \textstyle = \frac{1}{d} \norm{\sum_i a_i}^2 - \cancel{2\sum_i a_i^\top\mu} + \cancel{d\norm{\mu}^2}
        - \sum_i \big( \norm{a_i}^2 - \cancel{2a_i^\top\mu} + \cancel{\norm{\mu}^2} \big) \\
        & = \textstyle \frac{1}{d} \underbrace{\textstyle \norm{\sum_i a_i}^2}_{\leq \sum_{i} \norm{a_i}^2} - \sum_{i} \norm{a_i}^2 \leq 0.
    \end{align*}
    For $d=2$ we get
    \begin{align*}
        & \textstyle \frac{1}{2} 
        \norm{a_1 + a_2}^2 - \norm{a_1}^2 - \norm{a_2}^2 \\
        & = \textstyle\frac{1}{2} \norm{a_1}^2 + a^\top b +  \frac{1}{2} \norm{a_2}^2  - \norm{a_1}^2 - \norm{a_2}^2 \\
        & = \textstyle -\frac{1}{2}\norm{a_1 - a_2}^2.
    \end{align*}
\end{proof}

 \section{Additional Related Work} %
\paragraph{Multilevel Graph Partitioning}\label{sec:mgp}
This framework has been employed by many successful graph partitioning tools, e.g., \Metis{}~\cite{metis}, \Graclus{}~\cite{graclus}, or \Kahip{}~\cite{kahip}.
The general idea of the paradigm is to compute a small summary (``coarsening'') of the graph, on which it is easier to solve the problem we are interested in, and then mapping this coarse solution onto the original graph.
In more technical terms, a solver following this paradigm has three phases:
\emph{(i) Coarsening}: Compute a series of successively smaller graphs $G_0, G_1, \dots, G_\ell$ such that $|V_0| > |V_1| > \dots > |V_\ell|$ with the aim to create a coarse summary of the original graph.
\emph{(ii) Solving}: Obtain a solution to the initial problem on the graph $G_\ell$. Although $G_\ell$ is small, heuristics are usually employed here.
\emph{(iii) Refinement}: Successively map the solution from graph $G_{i}$ to graph $G_{i-1}$, while applying heuristics that increase the solution quality.

 \section{Instance List} %
See \autoref{tab:instances} for a full list of instances and information on their degree distribution.

\begin{table*}
\caption{Statistics on the number of vertices, edges and degree distribution of the graph instances in our benchmark dataset.
$\MaxDegree$ denotes the maximum, $d_\textsf{mean}$ the average vertex degree.
The last four columns are percentiles of the degree distribution.}\label{tab:instances}
\setlength{\tabcolsep}{3pt}
\begin{tabular}{lllrrrrrrrr}
\toprule
\# & Name & Type & $|V|$ & $|E|$ & $\MaxDegree$ & $d_\textsf{mean}$ & 25th & 50th & 75th & 90th \\
\midrule
BP1 & imdb & Bipartite & \num{1403278} & \num{4303383} & \num{1652} & \num{6.13} & \num{1} & \num{2} & \num{6} & \num{15} \\
CF1 & ramage02 & Computational Fluids & \num{16830} & \num{1424761} & \num{269} & \num{169.31} & \num{131} & \num{170} & \num{170} & \num{269} \\
CL1 & uk & Clustering & \num{4824} & \num{6837} & \num{3} & \num{2.83} & \num{3} & \num{3} & \num{3} & \num{3} \\
CL2 & smallworld & Clustering & \num{100000} & \num{499998} & \num{17} & \num{10.00} & \num{9} & \num{10} & \num{11} & \num{12} \\
CN1 & citationCiteseer & Citation Network & \num{268495} & \num{1156647} & \num{1318} & \num{8.62} & \num{2} & \num{5} & \num{10} & \num{18} \\
CN2 & ca-hollywood-2009 & Citation Network & \num{1069126} & \num{56306653} & \num{11467} & \num{105.33} & \num{13} & \num{31} & \num{75} & \num{212} \\
CN3 & coAuthorsDBLP & Citation Network & \num{299067} & \num{977676} & \num{336} & \num{6.54} & \num{2} & \num{4} & \num{7} & \num{14} \\
CN4 & ca-MathSciNet & Citation Network & \num{332689} & \num{820644} & \num{496} & \num{4.93} & \num{1} & \num{3} & \num{5} & \num{11} \\
CN5 & ca-coauthors-dblp & Citation Network & \num{540486} & \num{15245729} & \num{3299} & \num{56.41} & \num{13} & \num{34} & \num{74} & \num{135} \\
CN6 & ca-dblp-2012 & Citation Network & \num{317080} & \num{1049866} & \num{343} & \num{6.62} & \num{2} & \num{4} & \num{7} & \num{14} \\
CN7 & ca-citeseer & Citation Network & \num{227320} & \num{814134} & \num{1372} & \num{7.16} & \num{2} & \num{4} & \num{8} & \num{15} \\
CN8 & coPapersCiteseer & Citation Network & \num{434102} & \num{16036720} & \num{1188} & \num{73.88} & \num{15} & \num{39} & \num{92} & \num{177} \\
CN9 & ca-dblp-2010 & Citation Network & \num{226413} & \num{716460} & \num{238} & \num{6.33} & \num{2} & \num{4} & \num{7} & \num{13} \\
CS1 & add32 & Circuit Simulation & \num{4960} & \num{9462} & \num{31} & \num{3.82} & \num{2} & \num{3} & \num{4} & \num{9} \\
CS2 & rajat10 & Circuit Simulation & \num{30202} & \num{50101} & \num{101} & \num{3.32} & \num{2} & \num{4} & \num{4} & \num{4} \\
CS3 & memplus & Circuit Simulation & \num{17758} & \num{54196} & \num{573} & \num{6.10} & \num{2} & \num{3} & \num{4} & \num{11} \\
DM1 & pcrystk02 & Duplicate Materials & \num{13965} & \num{477309} & \num{80} & \num{68.36} & \num{53} & \num{80} & \num{80} & \num{80} \\
EM1 & email-enron-large & Email Network & \num{33696} & \num{180811} & \num{1383} & \num{10.73} & \num{1} & \num{3} & \num{7} & \num{19} \\
EM2 & email-EU & Email Network & \num{32430} & \num{54397} & \num{623} & \num{3.35} & \num{1} & \num{1} & \num{1} & \num{3} \\
FE1 & fe\_tooth & Finite Elements & \num{78136} & \num{452591} & \num{39} & \num{11.58} & \num{6} & \num{12} & \num{15} & \num{18} \\
FE2 & fe\_rotor & Finite Elements & \num{99617} & \num{662431} & \num{125} & \num{13.30} & \num{11} & \num{13} & \num{14} & \num{17} \\
IF1 & inf-openflights & Infrastructure Network & \num{2939} & \num{15677} & \num{242} & \num{10.67} & \num{2} & \num{3} & \num{8} & \num{28} \\
IF2 & inf-power & Infrastructure Network & \num{4941} & \num{6594} & \num{19} & \num{2.67} & \num{2} & \num{2} & \num{3} & \num{5} \\
NS1 & wing\_nodal & Numerical Simulation & \num{10937} & \num{75488} & \num{28} & \num{13.80} & \num{12} & \num{14} & \num{16} & \num{17} \\
NS2 & auto & Numerical Simulation & \num{448695} & \num{3314611} & \num{37} & \num{14.77} & \num{13} & \num{15} & \num{16} & \num{18} \\
OP1 & gupta2 & Optimization & \num{62064} & \num{2093111} & \num{8412} & \num{67.45} & \num{25} & \num{38} & \num{47} & \num{49} \\
OP2 & finance256 & Optimization & \num{37376} & \num{130560} & \num{54} & \num{6.99} & \num{4} & \num{6} & \num{8} & \num{12} \\
RD1 & appu & Random Graph & \num{14000} & \num{919552} & \num{293} & \num{131.36} & \num{109} & \num{131} & \num{154} & \num{176} \\
RN1 & inf-roadNet-CA & Road Network & \num{1957027} & \num{2760388} & \num{12} & \num{2.82} & \num{2} & \num{3} & \num{3} & \num{4} \\
RN2 & inf-roadNet-PA & Road Network & \num{1087562} & \num{1541514} & \num{9} & \num{2.83} & \num{2} & \num{3} & \num{4} & \num{4} \\
RN3 & inf-italy-osm & Road Network & \num{6686493} & \num{7013978} & \num{9} & \num{2.10} & \num{2} & \num{2} & \num{2} & \num{3} \\
RN4 & luxembourg\_osm & Road Network & \num{114599} & \num{119666} & \num{6} & \num{2.09} & \num{2} & \num{2} & \num{2} & \num{3} \\
SN1 & soc-youtube-snap & Social Network & \num{1134890} & \num{2987624} & \num{28754} & \num{5.27} & \num{1} & \num{1} & \num{3} & \num{8} \\
SN2 & soc-flickr & Social Network & \num{513969} & \num{3190452} & \num{4369} & \num{12.41} & \num{1} & \num{1} & \num{5} & \num{17} \\
SN3 & soc-lastfm & Social Network & \num{1191805} & \num{4519330} & \num{5150} & \num{7.58} & \num{1} & \num{2} & \num{4} & \num{11} \\
SN4 & soc-twitter-follows & Social Network & \num{404719} & \num{713319} & \num{626} & \num{3.53} & \num{1} & \num{1} & \num{1} & \num{3} \\
SN5 & soc-pokec & Social Network & \num{1632803} & \num{22301964} & \num{14854} & \num{27.32} & \num{4} & \num{13} & \num{35} & \num{70} \\
SN6 & soc-livejournal & Social Network & \num{4033137} & \num{27933062} & \num{2651} & \num{13.85} & \num{2} & \num{5} & \num{15} & \num{35} \\
SN7 & soc-FourSquare & Social Network & \num{639014} & \num{3214986} & \num{106218} & \num{10.06} & \num{1} & \num{1} & \num{4} & \num{19} \\
TM1 & vsp\_vibrobox\_scagr7-2c\_rlfddd & Triangle Mixture & \num{77328} & \num{435586} & \num{669} & \num{11.27} & \num{3} & \num{5} & \num{8} & \num{26} \\
TM2 & vsp\_bump2\_e18\_aa01\_model1\_crew1 & Triangle Mixture & \num{56438} & \num{300801} & \num{604} & \num{10.66} & \num{5} & \num{7} & \num{11} & \num{18} \\
TM3 & vsp\_barth5\_1Ksep\_50in\_5Kout & Triangle Mixture & \num{32212} & \num{101805} & \num{22} & \num{6.32} & \num{6} & \num{6} & \num{7} & \num{7} \\
TM4 & vsp\_model1\_crew1\_cr42\_south31 & Triangle Mixture & \num{45101} & \num{189976} & \num{17663} & \num{8.42} & \num{3} & \num{5} & \num{7} & \num{9} \\
TM5 & vsp\_p0291\_seymourl\_iiasa & Triangle Mixture & \num{10498} & \num{53868} & \num{229} & \num{10.26} & \num{3} & \num{7} & \num{11} & \num{16} \\
TM6 & vsp\_bcsstk30\_500sep\_10in\_1Kout & Triangle Mixture & \num{58348} & \num{2016578} & \num{219} & \num{69.12} & \num{45} & \num{64} & \num{92} & \num{106} \\
TM7 & vsp\_befref\_fxm\_2\_4\_air02 & Triangle Mixture & \num{14109} & \num{98224} & \num{1531} & \num{13.92} & \num{6} & \num{9} & \num{11} & \num{15} \\
US1 & mi2010 & US Census Redistricting & \num{329885} & \num{789045} & \num{58} & \num{4.78} & \num{3} & \num{4} & \num{5} & \num{8} \\
WB1 & web-it-2004 & Web Graph & \num{509338} & \num{7178413} & \num{469} & \num{28.19} & \num{14} & \num{16} & \num{19} & \num{39} \\
WB2 & web-google & Web Graph & \num{1299} & \num{2773} & \num{59} & \num{4.27} & \num{1} & \num{2} & \num{5} & \num{12} \\
WB3 & web-wikipedia2009 & Web Graph & \num{1864433} & \num{4507315} & \num{2624} & \num{4.84} & \num{1} & \num{2} & \num{5} & \num{10} \\
\bottomrule
\end{tabular}
\end{table*}

 \section{Additional Figures}\label{app:figures} %
In this section we include a number of figures omitted from the main section of the paper.

\begin{figure}\Description{A scatterplot showing different graph instances in different colors, with the x axis being the normalized cut value and the y axis being the running time. Different shapes of the points denote either the dynamic programming heuristic or the greedy heuristic. The dynamic programming heuristic is usually further up along the y axis, indicating it is slower than greedy, while the normalized cut value is similar or slightly worse.}
    \centering
    \includegraphics[width=0.9\linewidth]{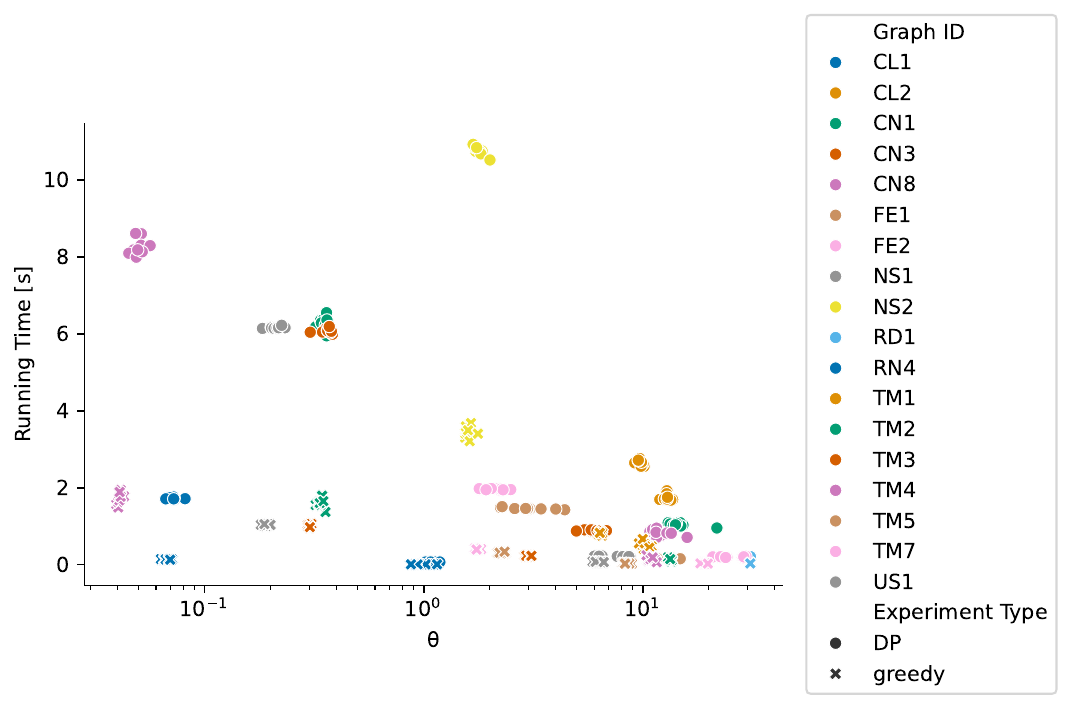}
    \caption{\mathversion{bold}\Algo{} with the greedy vs.\ dynamic programming strategy for a subset of instances for $k=32$ and $\rho = 0.0001$.}
    \label{fig:dp-comparison}
\end{figure}

\begin{figure*}[ht]
\Description{The figure depicts two bar plots stacked on top of each other. On the x axis there are the IDs of different graph instances. Each instance then has four corresponding bars for the algorithms XCut, XCut minimum, KaHiP and Metis. The y axis shows the relative change in normalized cut value for k = 2 of these algorithms versus Graclus. 
The top graph contains the IMDB graph, citation networks, social networks, email networks, infrastructure networks and web graphs. The lines of XCut point down from the 0 percent change line, indicating their values are smaller, often by more than 90 percent. METIS and KaHiP's lines always point up, reaching 25-75 percent worse values.
In the bottom figure are the remaining instances, where the divergence of most solvers is within plus/minus fifty percent and no clear pattern is visible. Some values of XCut are considerably lower than Graclus in the bottom plot.}
    \centering
    \includegraphics[width=\textwidth]{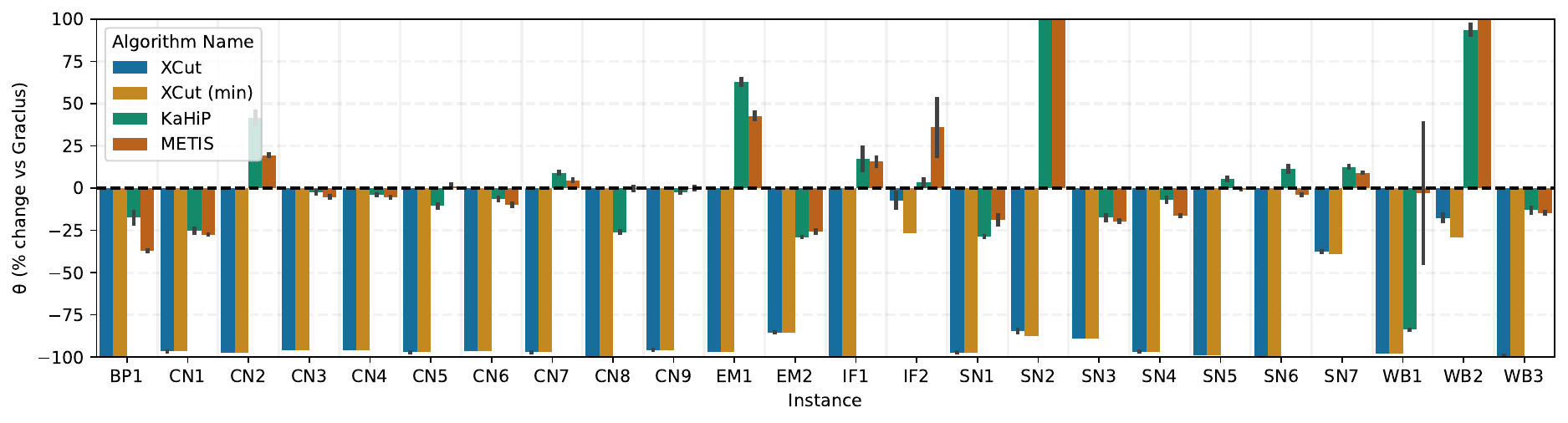}
    \\
    \includegraphics[width=\textwidth]{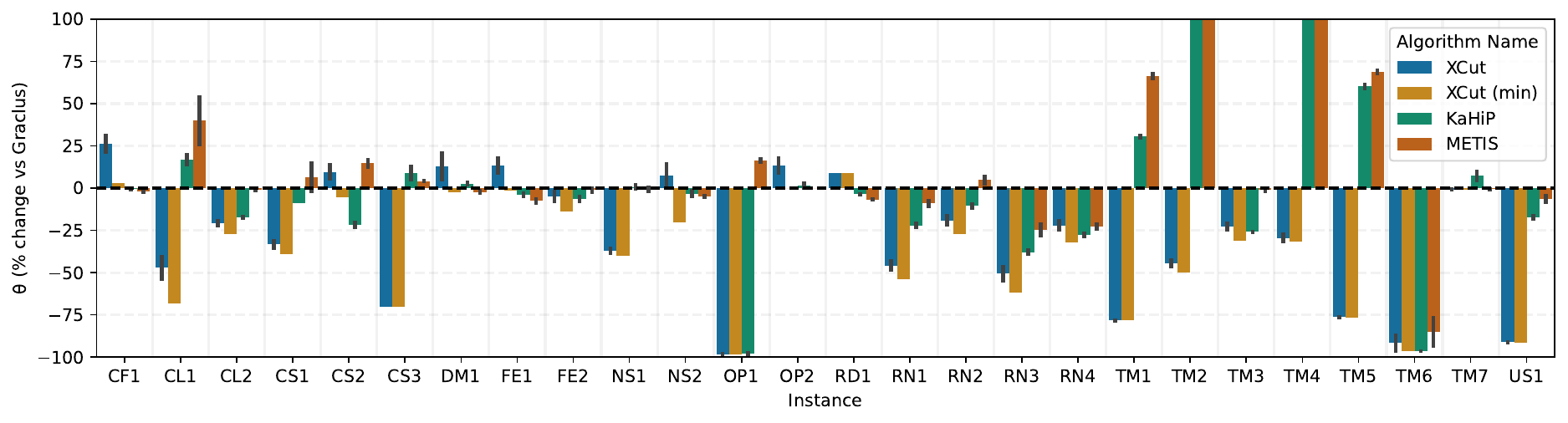}
    \caption{\mathversion{bold}Percentage deviation of the returned normalized cut value relative to \Graclus{} for $k=2$. 
    This means that a value of -75\% means that the normalized cut value is 75\% lower (i.e., better).
    The thin black bars indicate the standard error across our runs.
    The top graph shows the disconnected IMDB graph (BP1), citation network instances (CN), email networks (EM), infrastructure graphs (IF), social networks (SN) and web graphs (WB), while the bottom shows the remaining instances. See \autoref{tab:instances} for details.
    }\label{fig:barplot-2}
\end{figure*}

\begin{figure*}[ht]\Description{The figure depicts two bar plots stacked on top of each other. On the x axis there are the IDs of different graph instances. Each instance then has four corresponding bars for the algorithms XCut, XCut minimum, KaHiP and Metis. The y axis shows the relative change in normalized cut value for k = 2 of these algorithms versus Graclus. 
The top graph contains the IMDB graph, citation networks, social networks, email networks, infrastructure networks and web graphs. The lines of XCut point down from the 0 percent change line, indicating their values are smaller, often by more than 90 percent. METIS and KaHiP's lines always point up, reaching 25-75 percent worse values.
In the bottom figure are the remaining instances, where the divergence of most solvers is within plus/minus fifty percent and no clear pattern is visible.}
    \centering
    \includegraphics[width=\textwidth]{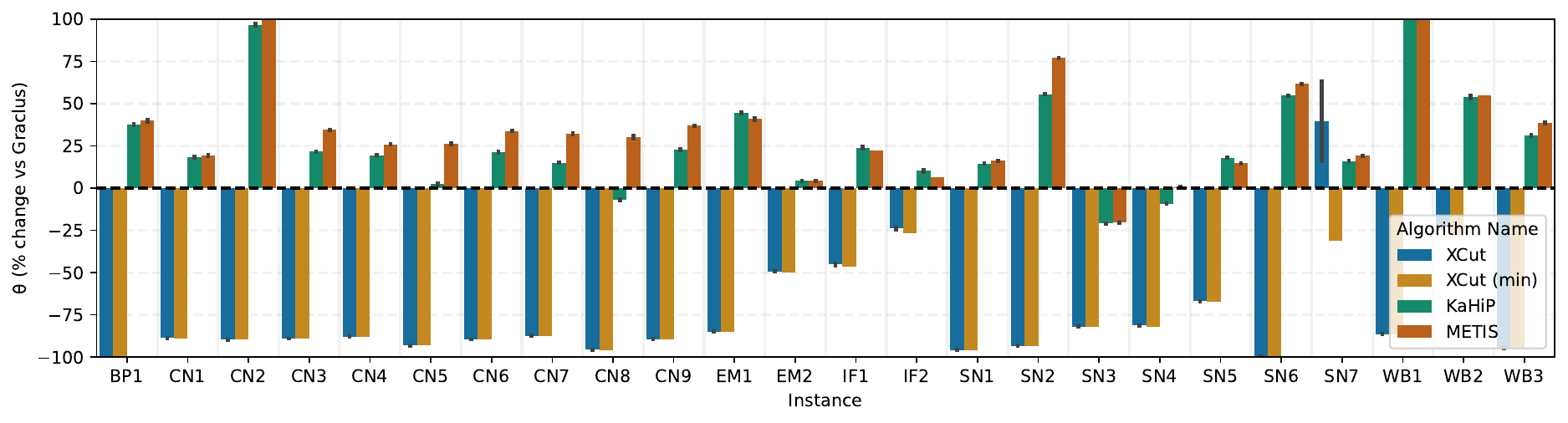}
    \\
    \includegraphics[width=\textwidth]{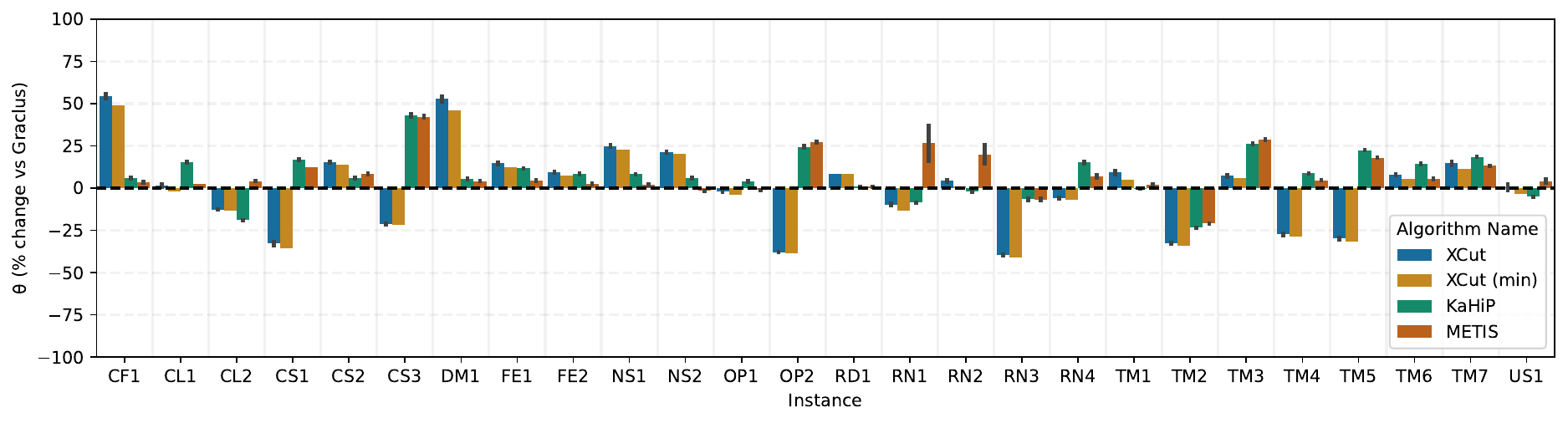}
    \caption{\mathversion{bold}Percentage deviation of the returned normalized cut value relative to \Graclus{} for $k=128$. 
    This means that a value of -75\% means that the normalized cut value is 75\% lower (i.e., better).
    The thin black bars indicate the standard error across our runs.
    The top graph shows the disconnected IMDB graph (BP1), citation network instances (CN), email networks (EM), infrastructure graphs (IF), social networks (SN) and web graphs (WB), while the bottom shows the remaining instances. See \autoref{tab:instances} for details.
    }
    \label{fig:barplot-128}
\end{figure*}

\begin{figure*}\Description{A series of five plots that depict the difference in running time for XCut on different graph instances. On the x axis is k, on the y axis is the running time in seconds. In orange is the time it takes to compute the expander decomposition, in blue is the total time to compute a normalized cut. The blue bars increase in size as k increases. On the smaller graph, the blue bar makes up for up to 60 percent of the running time for k = 128, and as the graphs get larger, the proportion shrinks, with it making up at most 10 percent for the largest graph, SN5.}
    \centering
    \includegraphics[width=\linewidth]{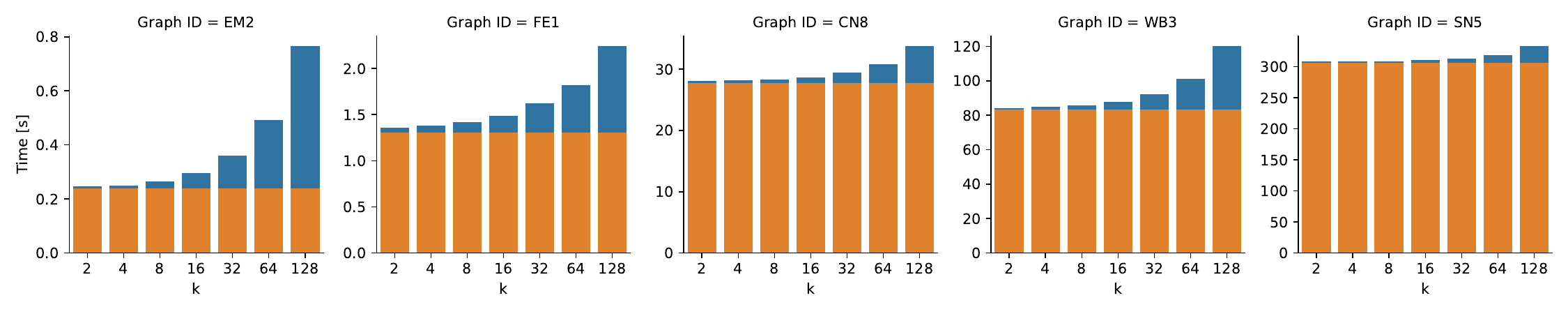}
    \caption{\mathversion{bold}Plot showcasing the time taken to compute the expander hierarchy in orange and the total time to compute a normalized cut in blue for five graph instances of different sizes.}
    \label{fig:stacked-bars-2}
\end{figure*}

\begin{figure*}\Description{A grid of 10 plots, each depicting the running time of XCut, Metis, KaHiP and Graclus as k increases. The top row shows XCut increasing linearly on the top, while the bottom shows different solvers behaving in an unusual manner, often increasing rapidly as k grows and taking much longer than the competition.}
    \centering
    \includegraphics[width=\linewidth]{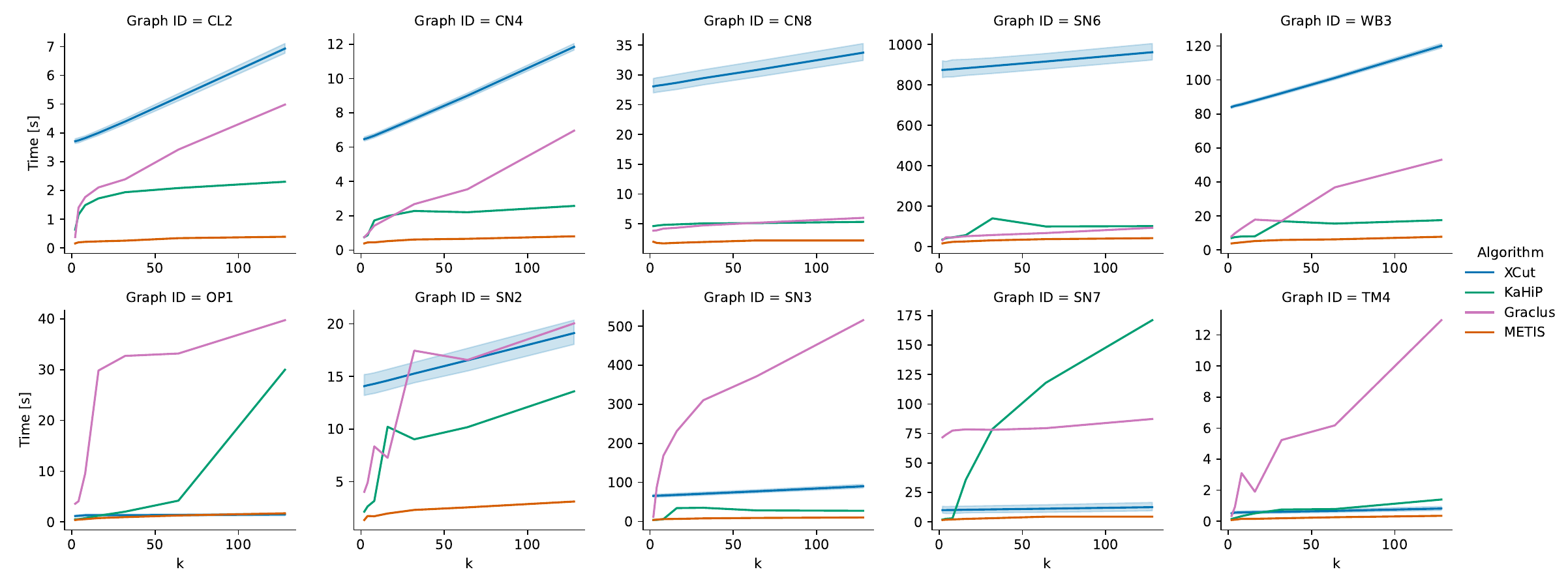}
    \caption{\mathversion{bold}Plots of running time versus $k$. The first row shows the usual behavior of the solvers in our comparison, with our solver starting at a much higher level than the competition.
    This is due to the fact that we need to first compute the expander hierarchy, which gives a hard lower bound on the running time of our algorithm for any choice of $k$. The running time then grows linearly with $k$, due to Greedy taking time $O(nk)$. The bottom row contains some instances on which the running time results are unusual. Often \Graclus{} running time grows very quickly, but we are unsure what causes this on these instances in particular.
    }\label{fig:time-examples}
\end{figure*}

\begin{table*}
\caption{\mathversion{bold}Geometric mean of normalized cut value across graph types over $k$ for the multilevel graph partitioning algorithms from \autoref{sec:graclus}.}\label{tab:avgs}
\scalebox{0.75}{
\begin{tabular}{llrrrrrrr}
\toprule
 & $k$ & 2 & 4 & 8 & 16 & 32 & 64 & 128 \\
Type & Algorithm Name &  &  &  &  &  &  &  \\
\midrule
\multirow[t]{4}{*}{Bipartite} & Graclus & 0.05 & 0.17 & 0.45 & 1.09 & 3.55 & 9.18 & 24.41 \\
 & KaHiP & 0.04 & 0.27 & 0.67 & 1.92 & 4.84 & 13.03 & 33.65 \\
 & METIS & 0.03 & 0.17 & 0.59 & 1.66 & 4.66 & 13.07 & 34.55 \\
 & XCut & \bf 0.00 & \bf 0.00 & \bf 0.00 & \bf 0.00 & \bf 0.00 & \bf 0.00 & \bf 0.00 \\
\cline{1-9}
\multirow[t]{4}{*}{Circuit Simulation} & Graclus & \bf 0.01 & 0.06 & \bf 0.18 & \bf 0.54 & 1.57 & 4.76 & 14.04 \\
 & KaHiP & \bf 0.01 & 0.07 & 0.21 & 0.62 & 1.78 & 5.65 & 16.99 \\
 & METIS & 0.02 & 0.07 & 0.22 & 0.59 & 1.72 & 5.49 & 16.90 \\
 & XCut & \bf 0.01 & \bf 0.05 & \bf 0.18 & \bf 0.54 & \bf 1.41 & \bf 4.21 & \bf 11.90 \\
\cline{1-9}
\multirow[t]{4}{*}{Citation Network} & Graclus & 0.08 & 0.26 & 0.67 & 1.57 & 3.75 & 8.22 & 17.76 \\
 & KaHiP & 0.07 & 0.27 & 0.73 & 1.81 & 4.27 & 9.57 & 21.47 \\
 & METIS & 0.08 & 0.29 & 0.81 & 2.01 & 4.80 & 11.04 & 24.32 \\
 & XCut & \bf 0.00 & \bf 0.01 & \bf 0.03 & \bf 0.08 & \bf 0.22 & \bf 0.61 & \bf 1.69 \\
\cline{1-9}
\multirow[t]{4}{*}{Clustering} & Graclus & 0.04 & 0.14 & 0.43 & 1.20 & 3.14 & 8.19 & 20.69 \\
 & KaHiP & 0.04 & 0.13 & 0.43 & 1.15 & 3.11 & 7.91 & 19.95 \\
 & METIS & 0.04 & 0.15 & 0.46 & 1.25 & 3.37 & 8.58 & 21.38 \\
 & XCut & \bf 0.02 & \bf 0.09 & \bf 0.29 & \bf 0.93 & \bf 2.52 & \bf 7.07 & \bf 19.49 \\
\cline{1-9}
\multirow[t]{4}{*}{Computational Fluids} & Graclus & \bf 0.12 & 0.46 & \bf 1.32 & \bf 3.86 & \bf 10.49 & \bf 27.76 & \bf 70.28 \\
 & KaHiP & \bf 0.12 & 0.46 & 1.47 & 4.18 & 10.98 & 29.64 & 74.51 \\
 & METIS & \bf 0.12 & \bf 0.44 & 1.35 & 4.10 & 10.95 & 28.65 & 72.55 \\
 & XCut & 0.15 & 0.55 & 1.63 & 4.75 & 14.30 & 44.39 & 108.44 \\
\cline{1-9}
\multirow[t]{4}{*}{Duplicate Materials} & Graclus & \bf 0.04 & \bf 0.24 & \bf 0.80 & \bf 2.58 & \bf 7.71 & \bf 20.42 & \bf 53.60 \\
 & KaHiP & 0.05 & 0.32 & 0.85 & 2.97 & 8.20 & 21.45 & 56.65 \\
 & METIS & \bf 0.04 & \bf 0.24 & \bf 0.80 & 2.60 & 7.75 & 20.67 & 55.85 \\
 & XCut & 0.05 & 0.26 & 0.90 & 3.27 & 9.54 & 26.81 & 81.90 \\
\cline{1-9}
\multirow[t]{4}{*}{Email Network} & Graclus & 0.18 & 0.60 & 1.65 & 4.07 & 9.11 & 20.13 & 42.43 \\
 & KaHiP & 0.19 & 0.66 & 1.70 & 4.36 & 10.11 & 23.19 & 51.73 \\
 & METIS & 0.19 & 0.64 & 1.80 & 4.40 & 10.21 & 23.06 & 51.24 \\
 & XCut & \bf 0.01 & \bf 0.04 & \bf 0.17 & \bf 0.51 & \bf 1.46 & \bf 4.07 & \bf 11.76 \\
\cline{1-9}
\multirow[t]{4}{*}{Finite Elements} & Graclus & \bf 0.01 & \bf 0.06 & 0.21 & \bf 0.63 & \bf 1.82 & \bf 5.16 & \bf 13.97 \\
 & KaHiP & \bf 0.01 & \bf 0.06 & 0.21 & 0.66 & 2.00 & 5.63 & 15.40 \\
 & METIS & \bf 0.01 & \bf 0.06 & \bf 0.20 & 0.65 & 1.88 & 5.23 & 14.42 \\
 & XCut & \bf 0.01 & \bf 0.06 & 0.21 & 0.67 & 1.99 & 5.67 & 15.64 \\
\cline{1-9}
\multirow[t]{4}{*}{Infrastructure Network} & Graclus & 0.02 & 0.09 & 0.39 & 1.09 & 3.19 & 9.97 & 27.44 \\
 & KaHiP & 0.02 & 0.09 & 0.44 & 1.51 & 4.41 & 12.24 & 32.27 \\
 & METIS & 0.02 & 0.11 & 0.49 & 1.36 & 4.16 & 11.60 & 31.29 \\
 & XCut & \bf 0.00 & \bf 0.00 & \bf 0.00 & \bf 0.23 & \bf 1.38 & \bf 5.30 & \bf 17.73 \\
\cline{1-9}
\multirow[t]{4}{*}{Numerical Simulation} & Graclus & 0.02 & \bf 0.08 & 0.28 & 0.94 & 2.70 & \bf 7.24 & \bf 19.40 \\
 & KaHiP & 0.02 & 0.09 & 0.31 & 0.93 & 2.76 & 7.77 & 20.76 \\
 & METIS & 0.02 & 0.09 & \bf 0.26 & \bf 0.89 & \bf 2.63 & 7.25 & 19.43 \\
 & XCut & \bf 0.01 & 0.09 & 0.33 & 1.10 & 3.18 & 8.88 & 23.89 \\
\cline{1-9}
\multirow[t]{4}{*}{Optimization} & Graclus & 0.04 & 0.14 & 0.40 & 1.85 & 5.65 & 16.68 & 42.32 \\
 & KaHiP & \bf 0.00 & 0.07 & 0.32 & 1.86 & 5.84 & 18.65 & 48.48 \\
 & METIS & 0.04 & 0.09 & 0.40 & 1.83 & 5.55 & 16.60 & 47.46 \\
 & XCut & 0.01 & \bf 0.03 & \bf 0.22 & \bf 1.38 & \bf 4.63 & \bf 13.60 & \bf 33.06 \\
\cline{1-9}
\multirow[t]{4}{*}{Random Graph} & Graclus & 0.91 & \bf 2.59 & \bf 6.19 & \bf 13.52 & 28.37 & \bf 57.79 & \bf 116.97 \\
 & KaHiP & 0.88 & 2.67 & 6.37 & 13.83 & 28.62 & 58.17 & 118.02 \\
 & METIS & \bf 0.85 & 2.61 & \bf 6.19 & 13.53 & \bf 28.36 & 58.18 & 117.98 \\
 & XCut & 0.99 & 2.99 & 6.98 & 14.99 & 30.99 & 62.99 & 126.99 \\
\cline{1-9}
\multirow[t]{4}{*}{Road Network} & Graclus & \bf 0.00 & \bf 0.00 & \bf 0.00 & \bf 0.01 & \bf 0.03 & 0.10 & \bf 0.30 \\
 & KaHiP & \bf 0.00 & \bf 0.00 & \bf 0.00 & \bf 0.01 & \bf 0.03 & \bf 0.09 & \bf 0.30 \\
 & METIS & \bf 0.00 & \bf 0.00 & \bf 0.00 & \bf 0.01 & \bf 0.03 & 0.10 & \bf 0.30 \\
 & XCut & \bf 0.00 & \bf 0.00 & \bf 0.00 & \bf 0.01 & \bf 0.03 & 0.11 & 0.34 \\
\cline{1-9}
\multirow[t]{4}{*}{Social Network} & Graclus & 0.16 & 0.66 & 1.64 & 4.01 & 9.03 & 19.96 & 44.26 \\
 & KaHiP & 0.20 & 0.73 & 1.92 & 4.66 & 10.59 & 24.11 & 51.14 \\
 & METIS & 0.19 & 0.68 & 1.87 & 4.67 & 10.91 & 24.54 & 53.15 \\
 & XCut & \bf 0.01 & \bf 0.03 & \bf 0.11 & \bf 0.31 & \bf 0.83 & \bf 2.19 & \bf 6.02 \\
\cline{1-9}
\multirow[t]{4}{*}{Triangle Mixture} & Graclus & 0.08 & 0.43 & 1.43 & 3.64 & 8.94 & 20.71 & 48.90 \\
 & KaHiP & 0.08 & 0.50 & 1.54 & 3.94 & 9.62 & 22.73 & 52.85 \\
 & METIS & 0.09 & 0.56 & 1.60 & 4.14 & 10.13 & 23.10 & 51.51 \\
 & XCut & \bf 0.03 & \bf 0.25 & \bf 1.08 & \bf 3.08 & \bf 7.70 & \bf 18.44 & \bf 44.26 \\
\cline{1-9}
\multirow[t]{4}{*}{US Census Redistricting} & Graclus & \bf 0.00 & 0.01 & \bf 0.02 & 0.07 & 0.21 & 0.64 & 1.92 \\
 & KaHiP & \bf 0.00 & \bf 0.00 & \bf 0.02 & \bf 0.06 & 0.20 & \bf 0.59 & \bf 1.80 \\
 & METIS & \bf 0.00 & 0.01 & \bf 0.02 & 0.07 & 0.21 & 0.68 & 1.99 \\
 & XCut & \bf 0.00 & \bf 0.00 & \bf 0.02 & 0.07 & \bf 0.19 & \bf 0.59 & 1.92 \\
\cline{1-9}
\multirow[t]{4}{*}{Web Graph} & Graclus & \bf 0.00 & 0.02 & 0.04 & 0.10 & 0.26 & 0.77 & 2.02 \\
 & KaHiP & \bf 0.00 & 0.01 & 0.04 & 0.11 & 0.47 & 2.09 & 7.36 \\
 & METIS & 0.01 & 0.02 & 0.03 & 0.08 & 0.42 & 2.46 & 9.01 \\
 & XCut & \bf 0.00 & \bf 0.00 & \bf 0.00 & \bf 0.01 & \bf 0.02 & \bf 0.10 & \bf 0.36 \\
\bottomrule
\end{tabular}

}
\end{table*}

\begin{table*}%
\caption{\mathversion{bold}Cut values and running time of our Algorithm vs the values reported by Zhao et al~\cite{zhao2018nearly}. All values are for $k=30$. The Cut-columns contain the normalized cut value (lower is better). The minimum value in each row is marked bold. The bottom row contains the geometric mean of all values.
}\label{tab:zhao}
\setlength{\tabcolsep}{2pt}
\begin{tabular}{lrrrr}
\toprule
GID & \Algo{mean} & \Algo{min} & Zhao & \Graclus \\
\midrule
CL1 & 0.87 & \bf 0.77 & 1.05 & 1.15 \\
CL2 & 6.00 & \bf 5.87 & 7.05 & 7.61 \\
CN1 & 0.31 & \bf 0.29 & 0.52 & 4.06 \\
CN3 & \bf 0.27 & \bf 0.27 & 0.49 & 3.66 \\
CN5 & \bf 0.13 & \bf 0.13 & 0.14 & 3.24 \\
CN7 & 0.17 & \bf 0.16 & 0.41 & 2.21 \\
CN8 & \bf 0.04 & \bf 0.04 & 0.06 & 1.88 \\
FE1 & 2.06 & 1.96 & \bf 1.68 & 1.74 \\
FE2 & 1.58 & 1.55 & 1.50 & \bf 1.45 \\
NS1 & 5.73 & 5.35 & \bf 4.71 & \bf 4.71 \\
NS2 & 1.45 & 1.39 & \bf 1.08 &  1.17 \\
RD1 & 28.98 & 28.98 & \bf 23.80 & 26.48 \\
RN4 & \bf 0.06 & \bf 0.06 & 0.07 & 0.07 \\
TM1 & 9.53 & 8.91 & \bf 6.85 & 8.48 \\
TM2 & 12.34 & \bf 12.06 & 13.55 & 16.63 \\
TM3 & 2.78 & \bf 2.65 & 2.72 & 2.78 \\
TM4 & 10.09 & \bf 9.58 & 10.48 & 14.45 \\
TM5 & 7.77 & \bf 7.48 & 7.88 & 13.25 \\
TM6 & 2.11 & \bf 1.94 & 2.09 & 2.1 \\
TM7 & 17.51 & 16.90 & \bf 12.83 & 15.6 \\
US1 & 0.17 & \bf 0.16 & 0.41 & 0.19 \\
\midrule
All & 1.46 & \bf 1.39 & 1.64 & 3.06 \\
\bottomrule
\end{tabular}
\end{table*}

\end{document}